\documentclass[11pt,a4paper]{article}
\usepackage{cite}
\usepackage[utf8]{inputenc}
\usepackage{amsmath,amssymb,amsthm}
\usepackage{tikz}
\usetikzlibrary{decorations.pathreplacing}
\usepackage{enumitem}
\usepackage[margin=2.6cm]{geometry}
\usepackage[colorlinks=true,linkcolor=blue!60!black,citecolor=blue!60!black,urlcolor=blue!60!black]{hyperref}
\usepackage{xcolor}

\theoremstyle{plain}
\newtheorem{theorem}{Theorem}[section]
\newtheorem{proposition}[theorem]{Proposition}
\newtheorem{lemma}[theorem]{Lemma}
\newtheorem{corollary}[theorem]{Corollary}
\theoremstyle{definition}
\newtheorem{example}[theorem]{Example}
\theoremstyle{remark}
\newtheorem{remark}[theorem]{Remark}

\newcommand{\C}{\mathbb{C}}
\newcommand{\gl}{\mathfrak{gl}}
\newcommand{\B}{\mathcal{B}}
\newcommand{\Tr}{\operatorname{Tr}}
\newcommand{\spec}{\operatorname{spec}}
\newcommand{\ip}[2]{\langle #1,#2\rangle}
\newcommand{\lp}{\lambda^{+}}
\newcommand{\Cd}{\C^d}
\newcommand{\Cdbar}{\overline{\C^d}}
\newcommand{\one}{\mathbf{1}}
\newcommand{\mj}{\prec}

\title{Minimal output entropy for the channels\\
that add or remove a box of a Young diagram,\\
and a Pauli principle for every permutation symmetry}
\author{Robin Reuvers\\[4pt]
\small Dipartimento di Matematica e Fisica, Universit\`a degli Studi Roma Tre,\\[-2pt]
\small Largo San Leonardo Murialdo 1, 00146 Roma, Italy\\[-2pt]
\small \texttt{robin.reuvers@uniroma3.it}}
\date{}

\begin{document}
\maketitle

\begin{abstract}
Two irreducible representations of $U(d)$ whose Young diagrams differ by a single box are connected by four covariant quantum channels: the box can be removed or added, and one keeps either the new diagram or the box ($\Cd$ or its dual). We prove that for each of these channels the output of a coherent state majorizes every other output, so that coherent states minimize the output entropy, and we compute the optimal output explicitly in terms of hook lengths. For the two channels that keep the box we also determine all minimizers; these need not be coherent, even when they are pure. As an application we consider $N$ particles with a given permutation symmetry and find sharp bounds on the spectrum of the reduced density matrix of a single particle, as well as the exact set of spectra that mixed states can reach (for fermions these are the Pauli principle and Coleman's theorem).
\end{abstract}

\medskip
\noindent\textbf{Keywords.} Coherent states; majorization; minimal output entropy; quantum channel capacity; Pauli principle; Young diagrams; Schur--Weyl duality; reduced density matrices; $N$-representability.

\medskip
\noindent\textbf{Mathematics Subject Classification.} 81P45, 81R30 (primary); 22E46, 15A42, 81P17, 81P47, 05E10 (secondary).

\section{Introduction}\label{sec:intro}

Wehrl~\cite{Wehrl} proposed to measure how classical a quantum state is by the entropy of its Husimi function, and conjectured that, among all states of a particle on the line, the Glauber coherent states minimize it. Lieb~\cite{Lieb78} proved this, and conjectured the same for the Bloch coherent states of a spin $J$. That conjecture stood for more than thirty years. The proof by Lieb and Solovej~\cite{LS14} runs through a quantum channel,
\begin{equation}\label{eq:LS}
\Lambda(\rho)=\frac{2J+1}{2K+1}\,P_K\,(\rho\otimes\one_{K-J})\,P_K,
\end{equation}
where $P_K$ projects $\mathcal H_J\otimes\mathcal H_{K-J}$ onto the subspace of largest total spin. (The map had appeared before, in~\cite{LS91} and as the optimal cloning channel for qubits~\cite{Werner98,KeylWerner99}.) Lieb and Solovej showed that
\begin{equation}\label{eq:LSineq}
\Tr f\big(\Lambda(\rho)\big)\ \geq\ \Tr f\big(\Lambda(|\Omega\rangle\langle\Omega|)\big)
\end{equation}
for every state $\rho$, every coherent state $|\Omega\rangle$ and every concave function $f$. In other words, the output of a coherent state \emph{majorizes} the output of every other state. As $K\to\infty$, the output of $\Lambda$ turns into the Husimi function, and Lieb's conjecture follows.

The channel \eqref{eq:LS} has natural analogues for other groups and representations, and one may ask whether \eqref{eq:LSineq} extends to them. In the semiclassical limit, where the output becomes a Husimi function, this question is by now well understood. After~\cite{Schupp99,GZ01,Sugita02,Bodmann04}, the $SU(1,1)$ case was raised in~\cite{LS21} and settled in~\cite{Kulikov22}, and optimizers and stability were studied in~\cite{Carlen91,Frank23,KNOT25,FNT25,NRT26}. Recently, Zhang~\cite{Zhang26} proved the entropy bound, and the bounds on all moments of order $p>1$, for every compact connected semisimple Lie group, with coherent states as the only optimizers. The channels themselves are harder. When \eqref{eq:LSineq} holds along a family of channels whose outputs tend to a Husimi function, the semiclassical inequality follows in the limit, as Lieb's conjecture did above; no converse is known. For covariant channels between irreducible representations of $U(d)$, \eqref{eq:LSineq} had been proved, to our knowledge, only in trivial cases and for two families. The trivial cases include channels out of $\Cd$ and channels whose coherent-state output is pure. The two families are the bosonic channels of Lieb and Solovej~\cite{LS16}, which enlarge a symmetric representation, and the fermionic channels that add or remove one particle, where \eqref{eq:LSineq} is the Pauli principle. In both families the representations are symmetric or antisymmetric powers of $\Cd$, up to duals and powers of the determinant. For all other Young diagrams, no nontrivial majorization result was known, even though covariant channels have been studied closely: Al Nuwairan~\cite{AlNuwairan14} determined the extreme points for $SU(2)$, Aschieri, Ruba and Solovej~\cite{ARS25} analysed them semiclassically, and Man\v{c}inska and Theil~\cite{MT25} classified them for $SU(d)$.

\medskip
In this paper we take the smallest possible step between two representations of $U(d)$: their Young diagrams differ by a single box. Let $\lp$ be a Young diagram with at most $d$ rows, let $(j,m)$ be a box that can be removed from it, and let $\lambda=\lp-e_j$ be the diagram without that box. By Pieri's rule, $V_{\lp}$ occurs exactly once in $V_\lambda\otimes\Cd$, so there is an isometric intertwiner
\begin{equation}\label{eq:T}
T:V_{\lp}\to V_\lambda\otimes\Cd,
\end{equation}
unique up to a phase. Applying $T$ to a state and tracing out one of the two factors gives two channels that \emph{remove} the box: one keeps the diagram, so that the output lives on $V_\lambda$, and the other keeps the removed particle, so that the output lives on $\Cd$. Similarly, since $V_\lambda$ occurs exactly once in $V_{\lp}\otimes\overline{\Cd}$, there is an isometric intertwiner $V_\lambda\to V_{\lp}\otimes\overline{\Cd}$, and it gives two channels that \emph{add} the box. All four are covariant. 

Our main results, Theorems~\ref{thm:main} and~\ref{cor:adding}, say that \eqref{eq:LSineq} holds for all four channels, and give the coherent-state output explicitly. For the channel that removes the box and keeps the particle, the sum of the $k$ largest eigenvalues of the coherent-state output is
\begin{equation}\label{eq:hookintro0}
s_k=\prod_{i=1}^{j-k}\Big(1-\frac1{h_{(i,m)}}\Big),
\end{equation}
where $h_{(i,m)}$ is the hook length of the box $(i,m)$ of $\lp$. That is, the answer is read off the column of the removed box: for $k=1$ one multiplies $1-1/h$ over the boxes above it, and each further unit of $k$ drops the lowest factor that is left (Figure~\ref{fig:hooks}). Previously, this was known for a single row, where the channels are the easiest case of~\cite{LS16}; for a single column, where \eqref{eq:LSineq} is the Pauli principle; for the largest output eigenvalue, $k=1$~\cite{Reuvers19}; and, through the spectral computation of~\cite{RGKZ26}, for the smallest one, which carries information only when the box is removed from row $d$.

Two consequences follow immediately. Coherent states minimize the von Neumann and all R\'enyi output entropies of the four channels, and this determines their Holevo capacities (Corollary~\ref{cor:entropy}). Some of these channels are familiar: optimal cloning of one additional copy~\cite{Werner98} is the channel that adds a box to a single row and keeps the diagram, and its complementary channel, which keeps the particle, is optimal transposition into a single copy~\cite{BGSQ26}; optimal purity amplification~\cite{LFIC25} acts, within each Schur--Weyl block, as a channel that removes a box and keeps the particle (Remark~\ref{rem:occur}). For the two channels that keep the particle we moreover find \emph{all} minimizers of the output entropy (Theorem~\ref{thm:minimizers}). They need not be coherent, even when they are pure.

\medskip
The second theme of this paper is the Pauli principle, and what replaces it for particles that are neither bosons nor fermions. By Schur--Weyl duality, a Young diagram $\nu$ with $N$ boxes labels an irreducible representation $V_\nu$ of $U(d)$ and an irreducible representation $S_\nu$ of the symmetric group $S_N$, and
\[
V_\nu\otimes S_\nu\subset\otimes^N\Cd
\]
is the space of states of $N$ particles with permutation symmetry $\nu$. For a single column it is the space of $N$ fermions, and the Pauli principle says that the reduced density matrix $\gamma$ of one particle satisfies $\gamma\leq\one/N$. Coleman~\cite{Coleman63} made this the starting point of the $N$-representability problem, which asks which reduced density matrices come from $N$-particle states. For pure states, the spectrum of $\gamma$ satisfies further linear constraints, which Altunbulak and Klyachko~\cite{AK08} listed explicitly in low rank. In~\cite{Reuvers19} we asked what replaces $1/N$ for a general diagram $\nu$, and found that the largest eigenvalue of $\gamma$ is at most
\begin{equation}\label{eq:hook1}
\max_{(j,m)}\ \prod_{i<j}\Big(1-\frac{1}{h_{(i,m)}}\Big),
\end{equation}
where the maximum runs over the removable boxes of $\nu$, with equality for a coherent state of $V_\nu$ tensored with a suitable basis vector of $S_\nu$. The bound \eqref{eq:hook1} is the largest eigenvalue of the operator $\one^{\otimes(N-1)}\otimes|e_1\rangle\langle e_1|$ compressed to $V_\nu\otimes S_\nu$. Rico, Grinko, Krebs and Zaw~\cite{RGKZ26} recently met this operator when designing entanglement witnesses. They computed its full spectrum in a Gelfand--Tsetlin basis (for such bases see also~\cite{GBO23}) and used it in their Theorem~1; its largest eigenvalue is \eqref{eq:hook1}~\cite{Reuvers19}.

The link with the channels is the following. Restricted to the permutations of the first $N-1$ particles, $S_\nu$ splits into one summand for each removable box $(j,m)$ of $\nu$, and the states whose $S_\nu$-part lies in that summand are those whose first $N-1$ particles have symmetry $\nu-e_j$. For these states, tracing out the first $N-1$ particles is exactly the channel that removes the box $(j,m)$ and keeps the particle~\cite{Reuvers19}. Our majorization theorem therefore gives them a complete Pauli principle: $s_k(\gamma)$ is at most the product \eqref{eq:hookintro0}, with $\lp=\nu$, for every $k$, and, for mixed states, these are the only constraints on $\gamma$ (Theorem~\ref{thm:nested}). For fermions there is only one box, the product telescopes to $k/N$, and this is the Pauli principle together with Coleman's theorem~\cite{Coleman63}.

A general state of symmetry $\nu$ mixes the boxes, and its $\gamma$ is a convex combination of reduced density matrices of the previous kind. So \eqref{eq:hook1} extends from the largest eigenvalue to the sum $s_k(\gamma)$ of the $k$ largest eigenvalues, simply by dropping the $k-1$ hook factors nearest the removed box: for every state,
\begin{equation}\label{eq:hookintro}
s_k(\gamma)\ \leq\ \max_{(j,m)}\ \prod_{i=1}^{j-k}\Big(1-\frac{1}{h_{(i,m)}}\Big),
\end{equation}
and each of these bounds is attained (Corollary~\ref{thm:rdm}). For $k\geq2$ the bounds are new. They are again the largest eigenvalues of explicit operators, but unlike for $k=1$, symmetry no longer diagonalizes these operators, because the relevant branching has multiplicities (Remark~\ref{rem:grinko}). The maximizing box may depend on $k$. For $\nu=(5,5,5,4)$, for example, the bound on $s_1(\gamma)$ comes from the box in the fourth row and the bound on $s_2(\gamma)$ from the box in the third row, and no single $\gamma$ attains both. Corollary~\ref{thm:polytope} then determines exactly which $\gamma$ mixed states can reach: those whose spectra lie in the polytope spanned by the permutations of one vector for each removable box, whose partial sums are the products in \eqref{eq:hookintro}.

\medskip
The proof stays within the representation theory of $U(d)$. Adding a box is easier to handle than removing one, and passing to dual representations exchanges the two, so we prove the adding case. There, the eigenvalue sums we need to bound are norms of compressions of a projection onto an isotypic component. Such a projection is dominated by a Casimir-type operator, and the compressions of that operator are controlled by branching to $U(k)$. This gives a sharp bound for diagrams of a special shape, and two reductions bring every diagram into that shape: the columns to the right of the box can be factored out exactly, and the tallest columns, which reach far below the box, can be deleted at the price of an inequality in the right direction. The strategy is described in more detail at the start of Section~\ref{sec:proof}.

\section{Main results}\label{sec:main}

\subsection{Notation}\label{sec:notation}

Let $e_1,\dots,e_d$ be the standard basis of $\Cd$, and let $E_{ab}:=|e_a\rangle\langle e_b|$ be the matrix units, which form a basis of $\gl_d$. For $n\leq d$ we regard $U(n)\subset U(d)$ and $\gl_n\subset\gl_d$ as acting on $\C^n=\operatorname{span}(e_1,\dots,e_n)$ and trivially on $e_{n+1},\dots,e_d$.

A Young diagram is a sequence of integers $\mu_1\geq\dots\geq\mu_d\geq0$, drawn with $\mu_a$ boxes in row $a$. The box in row $a$ and column $c$ is the box $(a,c)$. Its hook length $h_{(a,c)}$ counts the boxes to its right, the boxes below it, and the box itself. A box is \emph{removable} if deleting it leaves a Young diagram; such a box is the last one in its row and in its column, and its hook length is $1$.

We write $V_\mu$ for the irreducible representation of $U(d)$ with highest weight $\mu$, $\pi_\mu$ for the representation map, of $U(d)$ and of $\gl_d$, and $v_\mu$ for a unit highest weight vector; we often write $Xv$ for $\pi_\mu(X)v$. Two standard facts will be used throughout: $V_\mu$ is spanned by the vectors $E_{a_1b_1}\cdots E_{a_rb_r}v_\mu$ with $a_s>b_s$, and the weights of $V_\mu$ lie in the convex hull of the $S_d$-orbit of $\mu$, so that every weight $\theta$ satisfies $0\leq\theta_a\leq\mu_1$. The \emph{coherent states} of $V_\mu$ are the vectors in the $U(d)$-orbit of $v_\mu$, and the corresponding projections~\cite{Perelomov}. We write $D_n(\mu)$ for the dimension of the irreducible representation of $U(n)$ with highest weight $(\mu_1,\dots,\mu_n)$, so that $D_d(\mu)=\dim V_\mu$.

For a vector $x$, $s_k(x)$ is the sum of its $k$ largest entries, or of all entries if there are fewer than $k$. For a Hermitian matrix $A$, $\spec A$ is the vector of its eigenvalues in decreasing order, and $s_k(A):=s_k(\spec A)$. We write $x\mj y$, and say that $x$ is majorized by $y$, if the entries of $x$ and $y$ have the same sum and $s_k(x)\leq s_k(y)$ for all $k$, and $A\mj B$ if $\spec A\mj\spec B$~\cite{MOA}. Empty products are $1$.

Throughout the paper, $\lp$ is a Young diagram with at most $d$ rows, $(j,m)$ is a removable box of $\lp$, so that $m=\lp_j$, and
\[
\lambda:=\lp-e_j
\]
is the diagram without that box. The coherent states that everything will be compared against are
\[
\rho_0:=|v_{\lp}\rangle\langle v_{\lp}|\ \text{ on }V_{\lp},\qquad \sigma_0:=|v_\lambda\rangle\langle v_\lambda|\ \text{ on }V_\lambda.
\] 

\subsection{The four channels}\label{sec:channels}
By Pieri's rule, $V_{\lp}$ occurs exactly once in $V_\lambda\otimes\Cd$, so there is an isometric intertwiner $T:V_{\lp}\to V_\lambda\otimes\Cd$, unique up to a phase. Let $T_a:=(\one\otimes\langle e_a|)T:V_{\lp}\to V_\lambda$ be its components, so that
\begin{equation}\label{eq:Tsum}
T\phi=\sum_aT_a\phi\otimes e_a,\qquad\sum_aT_a^\dagger T_a=\one_{V_{\lp}},\qquad TT^\dagger=P,
\end{equation}
where $P$ is the projection of $V_\lambda\otimes\Cd$ onto its unique summand of type $\lp$. Applying $T$ to a state and tracing out one of the two factors gives the two channels that remove the box,
\begin{equation}\label{eq:removing}
\Phi_-(\rho):=\Tr_{\Cd}\big(T\rho T^\dagger\big)=\sum_aT_a\rho T_a^\dagger,\qquad
\Psi_-(\rho):=\Tr_{V_\lambda}\big(T\rho T^\dagger\big),
\end{equation}
from $\B(V_{\lp})$ to $\B(V_\lambda)$ and to $\B(\Cd)$, respectively. The first keeps the diagram, the second keeps the removed particle. The matrix elements of the second are $\langle e_a|\Psi_-(\rho)|e_b\rangle=\Tr(\rho\,T_b^\dagger T_a)$.

To add the box, we use the adjoint of $T$. The operator $\Tr_{\Cd}P=\sum_aT_aT_a^\dagger$ on $V_\lambda$ commutes with $U(d)$ and has trace $\dim V_{\lp}$, so Schur's lemma gives
\begin{equation}\label{eq:schur}
\sum_{a=1}^dT_aT_a^\dagger=\kappa_d\,\one_{V_\lambda},\qquad \kappa_d:=\frac{\dim V_{\lp}}{\dim V_\lambda}.
\end{equation}
Consequently
\begin{equation}\label{eq:S}
S:V_\lambda\to V_{\lp}\otimes\Cdbar,\qquad S\phi:=\kappa_d^{-1/2}\sum_aT_a^\dagger\phi\otimes e_a
\end{equation}
is an isometry. It is an intertwiner if we let $U(d)$ act on $\Cdbar$ by $U\mapsto\overline U$, as one sees by taking adjoints in the intertwining relation for $T$. The two channels that add the box are
\begin{equation}\label{eq:adding}
\Phi_+(\sigma):=\Tr_{\Cdbar}\big(S\sigma S^\dagger\big)=\kappa_d^{-1}\sum_aT_a^\dagger\sigma T_a,\qquad
\Psi_+(\sigma):=\Tr_{V_{\lp}}\big(S\sigma S^\dagger\big),
\end{equation}
from $\B(V_\lambda)$ to $\B(V_{\lp})$ and to $\B(\Cdbar)$. Table~\ref{tab:channels} summarizes the notation: $\Phi$ keeps the diagram, $\Psi$ keeps the particle, and the sign says whether the box is removed or added.

\begin{table}[t]
\centering
\renewcommand{\arraystretch}{1.3}
\begin{tabular}{l|cc}
 & removes the box & adds the box\\ \hline
keeps the diagram & $\Phi_-:\B(V_{\lp})\to\B(V_\lambda)$ & $\Phi_+:\B(V_\lambda)\to\B(V_{\lp})$\\
keeps the particle & $\Psi_-:\B(V_{\lp})\to\B(\Cd)$ & $\Psi_+:\B(V_\lambda)\to\B(\Cdbar)$
\end{tabular}
\caption{The four channels.}
\label{tab:channels}
\end{table}

Three comments on these definitions. First, if we identify $V_{\lp}$ with the range of $P$, then $\Phi_+(\sigma)=\kappa_d^{-1}P(\sigma\otimes\one)P$. If $j=1$ and $\lambda$ is a single row, this is the channel of Lieb and Solovej~\cite{LS16} that adds one box to a symmetric representation, and for $d=2$ this is \eqref{eq:LS} with $K=J+\tfrac12$. Second, the two channels in each column of Table~\ref{tab:channels} are complementary: they come from the same isometry, $T$ or $S$, by tracing out one factor or the other. On pure inputs, complementary channels have outputs with the same nonzero eigenvalues~\cite{DevetakShor05,KMNR07}. Third, all four channels are covariant; for instance, $\Psi_-\big(\pi_{\lp}(U)\rho\,\pi_{\lp}(U)^\dagger\big)=U\,\Psi_-(\rho)\,U^\dagger$ for $U\in U(d)$, and similarly for the others, with $\overline U$ acting on $\Cdbar$. So every coherent state gives the same output spectrum, and it suffices to state the results for $\rho_0$ and $\sigma_0$.

\subsection{Removing a box}\label{sec:removing}

\begin{theorem}[Removing a box]\label{thm:main}
For every density matrix $\rho$ on $V_{\lp}$,
\[
\Phi_-(\rho)\mj\Phi_-(\rho_0),\qquad \Psi_-(\rho)\mj\Psi_-(\rho_0).
\]
The coherent-state output $\Psi_-(\rho_0)$ is diagonal in the basis $(e_a)$, with $j$ nonzero eigenvalues, and
\begin{equation}\label{eq:hooks}
s_k\big(\Psi_-(\rho_0)\big)=\prod_{i=1}^{j-k}\Big(1-\frac{1}{h_{(i,m)}}\Big),\qquad 1\leq k\leq j,
\end{equation}
where the hook lengths are those of $\lp$. The output $\Phi_-(\rho_0)$ has the same nonzero eigenvalues.
\end{theorem}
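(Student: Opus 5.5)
By convexity and complementarity, the whole statement reduces to a single operator-norm bound for $\Psi_-$ on pure states; that bound is the hard step. First reduce to pure inputs. If $\rho=\sum_ip_i|\psi_i\rangle\langle\psi_i|$, then $\Phi_-(\rho)=\sum_ip_i\Phi_-(|\psi_i\rangle\langle\psi_i|)$, and the same holds for $\Psi_-$. By Ky Fan's principle, $s_k$ of a Hermitian matrix is convex and equals $\max_Q\Tr(QA)$ over rank-$k$ projections $Q$, and the trace is preserved by both channels. So the set of outputs majorized by a fixed matrix is convex, and it suffices to treat pure $\rho$.

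Next, $\Phi_-$ and $\Psi_-$ are complementary. Hence $\Phi_-(|\psi\rangle\langle\psi|)$ and $\Psi_-(|\psi\rangle\langle\psi|)$ have the same nonzero eigenvalues, and the same holds for $\rho_0$. Padding with zeros, the statement for $\Phi_-$ follows from the one for $\Psi_-$, and so does the last sentence of the theorem.

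For $\Psi_-$ and a pure state $\psi$, Ky Fan gives
\[
s_k\big(\Psi_-(|\psi\rangle\langle\psi|)\big)=\max_{Q}\ \langle\psi|T^\dagger(\one\otimes Q)T|\psi\rangle ,
\]
with $Q$ ranging over rank-$k$ projections on $\Cd$. Every such $Q$ is $UQ_kU^\dagger$ with $Q_k$ the projection onto $\operatorname{span}(e_1,\dots,e_k)$, and $T$ intertwines. So by covariance it is enough to prove
\[
\|A_k\|\le s_k\big(\Psi_-(\rho_0)\big),\qquad A_k:=\sum_{a\le k}T_a^\dagger T_a ,
\]
where $A_k$ is an operator on $V_{\lp}$, and to show that $v_{\lp}$ attains this norm.

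The coherent output itself I would compute directly. The vector $T_av_{\lp}$ has weight $\lp-e_a$ in $V_\lambda$, so distinct $a$ give orthogonal vectors and $\Psi_-(\rho_0)$ is diagonal with entries $\|T_av_{\lp}\|^2$. These vanish for $a>j$, because then $\lp-e_a\not\le\lambda$ in dominance order. I would obtain the nonzero entries from the Gelfand--Tsetlin coefficients of the Pieri intertwiner, i.e.\ the Clebsch--Gordan coefficients for $V_\lambda\otimes\Cd$ at extremal vectors. Alternatively I would use the recursion in $a$ coming from $E_{a,a+1}$ acting on $T v_{\lp}$. Either route should give ratios of the form $1-1/h_{(i,m)}$. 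The partial sums then telescope to \eqref{eq:hooks}, with the ordering of the entries made explicit. Since $v_{\lp}$ attains $\langle v_{\lp}|A_k|v_{\lp}\rangle=s_k(\Psi_-(\rho_0))$, this also shows that $\|A_k\|$ is at least the right-hand side.

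The main obstacle is the upper bound $\|A_k\|\le s_k(\Psi_-(\rho_0))$. The operator $A_k$ commutes only with $U(k)\times U(d-k)$, and the branching to this subgroup has multiplicities, so $A_k$ cannot be diagonalized by symmetry alone. Following the strategy announced in the introduction, I would first pass to duals and prove the equivalent statement for the adding channel. There the quantity to bound is the norm of a compression of the isotypic projection $P$ onto $V_{\lp}\subset V_\lambda\otimes\Cd$. I would dominate $P$ by an affine function of the mixed Casimir $\sum_{a,b}E_{ab}\otimes E_{ba}$, which acts by scalars on the Pieri summands. Compressing with $\one\otimes Q_k$ then leaves essentially the $\gl_k$ Casimir, and branching $V_\lambda$ to $U(k)$ controls it. I expect this to be sharp only for special diagram shapes. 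I would then reduce the general case in two steps: the columns to the right of $m$ factor out exactly, while the tall columns reaching below row $j$ can be deleted at the cost of an inequality in the correct direction. Checking that these reductions preserve the hook-product value is where I expect most of the work.
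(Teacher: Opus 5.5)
Your preliminary reductions are correct and are the paper's: convexity of $s_k$, complementarity for $\Phi_-$, and Ky Fan with covariance down to $\|A_k\|$. The gap is the upper bound $\|A_k\|\le s_k(\Psi_-(\rho_0))$, which is the whole content of the theorem. You only announce it, and the pieces you do specify do not fit together.

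\emph{The side you actually set up.} The compression of $P=TT^\dagger$ on $V_\lambda\otimes\Cd$ is just $\|A_k\|$: with $X=(\one\otimes Q_k)T$ one has $X^\dagger X=A_k$ and $XX^\dagger=(\one\otimes Q_k)P(\one\otimes Q_k)$. So the detour through duality changes nothing, and everything you say next has to hold with the original box $(j,m)$. It does not.
\begin{itemize}
\item The columns of $\lp$ to the right of column $m$ are precisely what the hook lengths $h_{(i,m)}$ are built from, so they cannot factor out exactly. With the box $(2,1)$, $\lp=(2,1)$ gives $\max s_1=\tfrac23$ and $\lp=(1,1)$ gives $\tfrac12$.
\item On this side the inert columns are those to the \emph{left} of the box.
\item Deleting the short columns on the right is not monotone for $\|A_k\|$. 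With the box $(3,1)$, going from $(2,1,1)$ to $(1,1,1)$ lowers $\tfrac38$ to $\tfrac13$.
\item The sign of the Casimir majorant matters. $\Omega=\sum_{a,b}\pi_\lambda(E_{ab})\otimes E_{ba}$ acts on $V_{\lambda+e_r}$ by the content $\lambda_r+1-r$, and any majorant with positive slope compresses to norm $\ge1$. What works is $P\le(\lambda_1-\Omega)/(\lambda_1-m+j)$ together with $\nu_k\ge\lambda_d$ under branching to $U(k)$. This bound is sharp only when $\lambda_d=m-1$ and $\lp_1=\dots=\lp_{j-k}$.
\end{itemize}

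\emph{The paper's side.} If you instead mean to run the argument on the adding side of the dual pair, then $m$, $j$ and the reductions refer to the dual diagram. There the quantity is $M_I=\sum_{i\in I}T_iT_i^\dagger$, which is a partial trace of $P$, not a compression.
\begin{itemize}
\item The missing device is the second isometry $S$, giving $\|M_I\|=\kappa_d\|(\one\otimes\Pi_I)Q(\one\otimes\Pi_I)\|$ with $Q=SS^\dagger$ on $V_{\lp}\otimes\Cdbar$.
\item Then $Q\le\Theta_{\lp}/(m+d-j)$, because $\Theta_{\lp}=\lp_r+d-r\ge0$ on each $V_{\lp-e_r}$. The compression of $\Theta_{\lp}$ is at most $\lp_1+k-1$. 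So the bound $\kappa_d(m+k-1)/(m+d-j)$ is sharp only if $\lp_1=m$ and $\lp_{j+k}=0$.
\item The bookkeeping you defer is not routine. Proposition~\ref{prop:short} is an identity for $M_I/\kappa_d$, but Proposition~\ref{prop:tall} holds only for the unnormalized $\|M_I\|$.
\item The chain closes only through Lemma~\ref{lem:weyl}: $\kappa_n/\kappa_d$ depends only on rows $\ge j$; $\kappa_n$ is unchanged by deleting columns of height $\ge n$; and the Casimir bound equals $\kappa_{j+k-1}$ in the special shape.
\item The second of these facts forces the threshold to be height $\ge j+k$, not ``reaching below row $j$''. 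For $\lambda=(1,1)$, $d=4$, $j=1$, $k=3$, deleting the height-$2$ column gives $\tfrac9{10}$ instead of the true $\kappa_3/\kappa_4=\tfrac45$.
\item You still need the hook-content formula to reach the hook product.
\end{itemize}

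\emph{Smaller points.} You leave the coherent entries as ``should give''. The paper obtains them from Schur's lemma on $L_n(\lambda)$ (Proposition~\ref{prop:coherent}), with no Clebsch--Gordan coefficients. Also, for $k<j$, $v_{\lp}$ does not attain $\ip{v_{\lp}}{A_kv_{\lp}}=s_k$ with $A_k=\sum_{a\le k}T_a^\dagger T_a$, because the entries increase up to $a=j$. Take $I=\{j-k+1,\dots,j\}$ instead.
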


The theorem is proved in Section~\ref{sec:proof}, where we also show that the diagonal entries of the coherent-state output are
\begin{equation}\label{eq:entries}
\|T_av_{\lp}\|^2=\begin{cases}\dfrac{1}{h_{(a,m)}}\displaystyle\prod_{i<a}\Big(1-\frac{1}{h_{(i,m)}}\Big), & a\leq j,\\[10pt] 0, & a>j,\end{cases}
\end{equation}
where $h_{(j,m)}=1$ because the box is removable. These entries do not decrease with $a$: the ratio of two consecutive entries is $(h_{(a-1,m)}-1)/h_{(a,m)}\geq1$, since hook lengths drop by at least one per step down a column. So the largest eigenvalue belongs to $e_j$, the second largest to $e_{j-1}$, and so on, and the sum of the $k$ largest entries telescopes to \eqref{eq:hooks}. Figure~\ref{fig:hooks} shows an example. We write $p^{(j,m)}$ for the entries \eqref{eq:entries} in decreasing order, padded with zeros to a vector of length~$d$; it is the spectrum of $\Psi_-(\rho_0)$.

\begin{figure}[t]
\centering
\begin{tikzpicture}[scale=0.62]
\fill[blue!12] (4,0) rectangle (5,-1);
\fill[blue!12] (4,-1) rectangle (5,-2);
\fill[blue!12] (4,-2) rectangle (5,-3);
\fill[black!20] (4,-3) rectangle (5,-4);
\foreach \r/\len in {1/6,2/6,3/5,4/5,5/2,6/1}{
  \foreach \c in {1,...,\len}{ \draw (\c-1,-\r+1) rectangle (\c,-\r); }
}
\node at (4.5,-0.5) {$5$};
\node at (4.5,-1.5) {$4$};
\node at (4.5,-2.5) {$2$};
\node at (4.5,-3.5) {$\bullet$};
\node[anchor=west] at (6.4,-0.5) {\small $s_1=\tfrac45\cdot\tfrac34\cdot\tfrac12=\tfrac{3}{10}$};
\node[anchor=west] at (6.4,-1.5) {\small $s_2=\tfrac45\cdot\tfrac34=\tfrac35$};
\node[anchor=west] at (6.4,-2.5) {\small $s_3=\tfrac45$};
\node[anchor=west] at (6.4,-3.5) {\small $s_4=1$};
\end{tikzpicture}
\caption{Theorem~\ref{thm:main} for $\lp=(6,6,5,5,2,1)$ and the removable box $(4,5)$, marked $\bullet$. The hook lengths of the boxes above it are $5,4,2$. The coherent-state output of $\Psi_-$ has spectrum $(\tfrac3{10},\tfrac3{10},\tfrac15,\tfrac15)$.}
\label{fig:hooks}
\end{figure}

\subsection{Adding a box}\label{sec:adding}

Theorem~\ref{thm:main} is equivalent to the following statement about adding a box, and it is that statement we shall actually prove. The two are linked as follows. Fix $R\geq\lp_1$ and complement every diagram in a $d\times R$ rectangle,
\[
\mu\mapsto\mu^c:=(R-\mu_d,\dots,R-\mu_1).
\]
Removing the box $(j,m)$ from $\lp$ then becomes adding a box in row $d+1-j$ to $(\lp)^c$ (Figure~\ref{fig:duality}). Since the conjugate representation $\overline{V_\mu}$, which is the dual of $V_\mu$, has highest weight $(-\mu_d,\dots,-\mu_1)$ and is therefore isomorphic to $V_{\mu^c}$ up to a determinant twist~\cite[Lecture~15]{FultonHarris}, conjugating the isometry \eqref{eq:T} turns $\Phi_-$ and $\Psi_-$ into the channels $\Phi_+$ and $\Psi_+$ of the dual pair $\big((\lp)^c,\lambda^c\big)$, and output spectra and coherent states are preserved (Lemma~\ref{lem:complement}).

\begin{figure}[t]
\centering
\begin{tikzpicture}[scale=0.55]
\foreach \r/\len in {1/6,2/6,3/5,4/5,5/2,6/1}{
  \foreach \c in {1,...,6}{
    \ifnum\c>\len \fill[black!15] (\c-1,-\r+1) rectangle (\c,-\r); \fi
    \draw (\c-1,-\r+1) rectangle (\c,-\r);
  }
}
\draw[very thick] (0,0) rectangle (6,-6);
\node at (4.5,-3.5) {$\bullet$};
\end{tikzpicture}
\caption{Duality for $\lp=(6,6,5,5,2,1)$ and $d=R=6$. The complement of $\lp$ in the rectangle (shaded), rotated by $180^\circ$, is $(\lp)^c=(5,4,1,1,0,0)$. Removing the box $(4,5)$ (marked $\bullet$) from $\lp$ adds it to the complement, as the box $(3,2)$ of $\lambda^c=(5,4,2,1,0,0)$.}
\label{fig:duality}
\end{figure}

For the channels that add a box, the natural form of the answer is a ratio of dimensions. Put
\begin{equation}\label{eq:kappa}
\kappa_n:=\frac{D_n(\lp)}{D_n(\lambda)}\qquad(j\leq n\leq d),
\end{equation}
so that $\kappa_d=\dim V_{\lp}/\dim V_\lambda$ is the constant of \eqref{eq:schur}.

\begin{theorem}[Adding a box]\label{cor:adding}
For every density matrix $\sigma$ on $V_\lambda$,
\[
\Phi_+(\sigma)\mj\Phi_+(\sigma_0),\qquad \Psi_+(\sigma)\mj\Psi_+(\sigma_0).
\]
The coherent-state output $\Psi_+(\sigma_0)$ is diagonal in the basis $(e_a)$, with $d-j+1$ nonzero eigenvalues, and
\begin{equation}\label{eq:kappasums}
s_k\big(\Psi_+(\sigma_0)\big)=\frac{\kappa_{j+k-1}}{\kappa_d},\qquad 1\leq k\leq d-j+1 .
\end{equation}
\end{theorem}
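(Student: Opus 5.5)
The plan is to reduce everything to an operator-norm bound for a partial sum $\sum_{a\in K}T_aT_a^\dagger$ on $V_\lambda$, and then prove that bound with a Casimir-type operator and branching, as outlined in the introduction.

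\textbf{Step 1: reduction to pure states and to $\Psi_+$.} For fixed $k$, the map $\sigma\mapsto s_k(\Phi_+(\sigma))$ is convex, because $s_k$ is a convex function of a Hermitian matrix and $\Phi_+$ is linear. The same holds for $\Psi_+$. It therefore suffices to prove both majorizations for pure $\sigma=|\phi\rangle\langle\phi|$, since the traces of all outputs are $1$. For pure inputs the complementary channels $\Phi_+$ and $\Psi_+$ have the same nonzero output eigenvalues, and padding with zeros does not affect majorization. So both statements reduce to $s_k(\Psi_+(|\phi\rangle\langle\phi|))\leq s_k(\Psi_+(\sigma_0))$ for every unit $\phi\in V_\lambda$ and every $k$.

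\textbf{Step 2: the coherent output and a variational form.} By \eqref{eq:S}, $\langle e_a|\Psi_+(\sigma)|e_b\rangle=\kappa_d^{-1}\Tr(\sigma\,T_bT_a^\dagger)$, up to the conjugation on $\Cdbar$. By Ky Fan's principle, $s_k(\Psi_+(\sigma))$ is the maximum over rank-$k$ projections $Q$ on $\Cd$ of $\kappa_d^{-1}\Tr\big(\sigma\sum_{a,b}Q_{ba}T_bT_a^\dagger\big)$. Covariance moves any $Q$ to a coordinate projection onto $\operatorname{span}(e_a:a\in K)$, at the price of rotating $\phi$. Hence
\[
\sup_\phi s_k\big(\Psi_+(|\phi\rangle\langle\phi|)\big)=\kappa_d^{-1}\,\Big\|\sum_{a\in K}T_aT_a^\dagger\Big\|
\]
for any fixed $k$-set $K$, and the whole theorem reduces to computing this norm.

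For $\sigma_0$ the weights show directly that $\Psi_+(\sigma_0)$ is diagonal. Using the complement duality of Lemma~\ref{lem:complement}, $T_a^\dagger v_\lambda$ should vanish for $a<j$. For $K=\{1,\dots,n\}$ with $n\geq j$, the operator $\sum_{a\leq n}T_aT_a^\dagger$ restricted to the $U(n)$-submodule of $V_\lambda$ generated by $v_\lambda$ (highest weight $(\lambda_1,\dots,\lambda_n)$) is $U(n)$-invariant. By the $U(n)$ analogue of \eqref{eq:schur} it acts there as $\kappa_n$. The diagonal entries on $e_j,\dots,e_d$ then telescope to $(\kappa_{a}-\kappa_{a-1})/\kappa_d$ with $\kappa_{j-1}:=0$, and they are decreasing in $a$. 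This gives \eqref{eq:kappasums} once we show that they are the $k$ largest entries in the claimed order, which comes out of the same computation.

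\textbf{Step 3: the sharp bound (main obstacle).} It remains to show $\|\sum_{a\leq n}T_aT_a^\dagger\|\leq\kappa_n$ with $n=j+k-1$. The difficulty is that $V_\lambda|_{U(n)}$ has many components with multiplicities, and the operator does not act as a scalar on them.

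\begin{itemize}
\item First I would write $\sum_{a\leq n}T_aT_a^\dagger$ as the compression $\Tr_{\C^n}\big(P(\one\otimes Q_n)\big)$ of the isotypic projection $P$ onto $\lp$ in $V_\lambda\otimes\Cd$.
\item Then I would dominate $P$ by an affine function of the Casimir cross term $\sum_{a,b}\pi_\lambda(E_{ab})\otimes E_{ba}$, which separates the Pieri summands by their content $c(j,m)=m-j$. Compressing to $\C^n$ yields $\sum_{a,b\leq n}\pi_\lambda(E_{ab})E_{ba}$, which is expressible through $U(n)$ Casimirs. Its maximal eigenvalue is computable by branching $V_\lambda$ to $U(n)$ and maximizing over the interlacing highest weights.
\item I expect this bound to be sharp only when the diagram has a special shape: nothing to the right of column $m$ beyond what factors, and no columns much taller than row $j$. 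Two reductions would bring a general diagram into this shape. Full columns to the right of the box contribute a determinant-type factor that commutes with everything and can be factored out exactly. Tall columns reaching far below row $j$ can be deleted, which I would justify by embedding the channel for the smaller diagram and checking that $\kappa_n/\kappa_d$ moves in the favorable direction.
\end{itemize}

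Verifying this last monotonicity, and choosing the right affine Casimir combination so that the bound is tight, is where I expect the real work to lie.
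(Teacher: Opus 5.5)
\textbf{There is a genuine gap in Step~3.} Your Steps~1 and~2 are essentially what the paper does (Lemma~\ref{lem:reduction}, Proposition~\ref{prop:coherent}). Step~3, which carries the theorem, is aimed at the wrong operator.

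\emph{The target inequality is false.} You set out to prove $\|\sum_{a\le n}T_aT_a^\dagger\|\le\kappa_n$ with $n=j+k-1$. This fails whenever $j>1$ and $n<d$: an $n$-term sum has norm $\kappa_d\max_\sigma s_n(\Psi_+(\sigma))=\kappa_{\min(j+n-1,d)}>\kappa_n$. Concretely, for $\lambda=(1,0,0)$ and $j=2$ one has $T_cT_c^\dagger=\tfrac12(\one-|e_c\rangle\langle e_c|)$. So $\sum_{a\le2}T_aT_a^\dagger=\operatorname{diag}(\tfrac12,\tfrac12,1)$ has norm $1$, whereas $\kappa_2=\tfrac12$. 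The coherent state only makes the sum look like one over $a\le n$, because $\ip{v_\lambda}{M_{aa}v_\lambda}=0$ for $a<j$. By your own Step~2 the correct target is $\|\sum_{a\in K}T_aT_a^\dagger\|\le\kappa_{j+k-1}$ for $|K|=k$, and the branching that controls it is to $U(k)$, not to $U(n)$.

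\emph{The operator is a partial trace, not a compression.} $\sum_{a\in K}T_aT_a^\dagger$ is a partial trace of $(\one\otimes\Pi_K)P(\one\otimes\Pi_K)$. Dominating $P$ by $f(\Omega)=\alpha\Omega+\beta$ on $V_\lambda\otimes\Cd$ does not work in general, whichever way you use it.
\begin{itemize}
\item If you bound the compression: its norm is $\|(\one\otimes\Pi_K)T\|^2=\|\sum_{a\in K}T_a^\dagger T_a\|$. That is the quantity of the \emph{removing} channel $\Psi_-$, and it is generally larger. For $\lambda=(2,1,0,0)$, $d=4$, $j=2$, $k=2$ (Remark~\ref{rem:pm}, where $\kappa_4=1$) it equals $1$, while the target is $\kappa_3=\tfrac34$.
\item If you take the partial trace of $f(\Omega)$: you get only $\alpha\sum_{a\in K}\pi_\lambda(E_{aa})+k\beta$. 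This is too weak because the target content $m-j$ can lie strictly inside the spectrum of $\Omega$: for $j>1$ the row-1 box, of content $\lambda_1>m-j$, is always addable, and rows below $j$ may be too. For $\lambda=(2,1,0)$, $d=3$, $j=2$, $k=1$, a shape where the paper's bound is sharp, the contents are $2,0,-2$. Any admissible $f$ has $f(0)\ge1$, so the bound is at least $1$ against $\kappa_2=\tfrac12$.
\end{itemize}

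\emph{The missing idea} is the flip through the isometry $S$ of \eqref{eq:S}. Since $M_K=\kappa_dS^\dagger(\one\otimes\Pi_K)S$, one has $\|M_K\|=\kappa_d\|(\one\otimes\Pi_K)Q(\one\otimes\Pi_K)\|$, where $Q=SS^\dagger$ is the projection of $V_{\lp}\otimes\Cdbar$ onto $V_\lambda$. On that side the summands are the $V_{\lp-e_r}$. If $\lp_1=\dots=\lp_j=m$, only $r\ge j$ occur, so the operator $\Theta_{\lp}$ of Lemma~\ref{lem:casimir} is nonnegative and takes its largest value $m+d-j$ exactly on the target. This gives:
\begin{itemize}
\item $Q\le\Theta_{\lp}/(m+d-j)$, with no affine shift;
\item the $k$-dimensional compression of $\Theta_{\lp}$ is at most $m+k-1$, by branching to $U(k)$;
\item $\kappa_d(m+k-1)/(m+d-j)=\kappa_{j+k-1}$ exactly when $\lp_{j+k}=0$.
\end{itemize}

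\emph{Both reductions are misdescribed.}
\begin{itemize}
\item The columns to the right of the box have height $h<j$, so they are factors $\wedge^h\Cd$, not determinants. Removing them exactly needs the Cartan embedding $V_\lambda\to V_{\hat\lambda}\otimes\wedge^h\Cd$ and the Pieri rigidity that $(1^h)+e_l=e_j+e_J$ forces $l=j$ (Proposition~\ref{prop:short}).
\item For the tall columns, $\kappa_n/\kappa_d$ moves the wrong way. For $\lambda=(2,1,1,0)$, $j=2$, $k=1$, deleting the first column lowers it from $\tfrac38$, which is attained, to $\tfrac13$. So no comparison of normalized channels can work. What holds is the unnormalized inequality $\|M^\lambda_K\|\le\|M^\beta_K\|$, which comes from $V_{\lp}\subseteq V_\alpha\otimes V_{\beta+e_j}$ (Proposition~\ref{prop:tall}). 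It is combined with the invariance of $\kappa_n$ under deleting columns of height $\ge n$ (Lemma~\ref{lem:weyl}(b)).
\end{itemize}
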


The $k$ largest eigenvalues of $\kappa_d\,\Psi_+(\sigma_0)$ add up to $\kappa_{j+k-1}$, the ratio of the dimensions of the corresponding representations of $U(j+k-1)$. As for Theorem~\ref{thm:nested} below, it follows that the outputs of $\Psi_+$ are exactly the density matrices $A$ with $A\mj\Psi_+(\sigma_0)$. Applied to the dual pair of Lemma~\ref{lem:complement}, the hook-content formula turns \eqref{eq:kappasums} into the hook products of Theorem~\ref{thm:main} (Lemma~\ref{lem:translate}); for a single pair $(\lambda,\lp)$, the partial sums \eqref{eq:kappasums} of $\Psi_+$ and the hook products \eqref{eq:hooks} of $\Psi_-$ generally differ (Remark~\ref{rem:pm}).

\begin{remark}[Other covariant channels]\label{rem:six}
The isometries $T$ and $S$ are the same intertwiner with its factors arranged differently, and there is a third arrangement. Since $V_{\lp}$ occurs exactly once in $V_\lambda\otimes\Cd$, $\Cd$ occurs exactly once in $\overline{V_\lambda}\otimes V_{\lp}$, and $T$ becomes, up to the factor $(d/\dim V_{\lp})^{1/2}$, an isometric intertwiner $\tilde T:\Cd\to\overline{V_\lambda}\otimes V_{\lp}$ (compare~\cite[Prop.~18 and Rem.~13]{MT25}). Tracing out one factor or the other gives two more covariant channels, both with input $\Cd$. For these, optimality is trivial: $U(d)$ acts transitively on the pure states of $\Cd$, so all pure inputs give the same output spectrum, and $s_k$ is convex. Together with Theorems~\ref{thm:main} and~\ref{cor:adding}, this means that each of the six channels obtained in this way has a coherent input whose output majorizes all others.

For a general covariant channel between these representations, however, the coherent-state output need not majorize all others, and there may be no output that does. By~\cite[Thm.~19]{MT25}, a covariant channel from $\B(V_{\lp})$ to $\B(\Cd)$ has its environment inside $\Cdbar\otimes V_{\lp}=\bigoplus_rV_{\lp-e_r}$, and since each summand occurs once, the channel is a mixture $\sum_rq_r\Psi^{(r)}_-$ with one term for each summand. The summands belong to the removable boxes of $\lp$ and, if $\lp$ has fewer than $d$ rows, to $r=d$, where $\lp-e_d$ has the entry $-1$; up to a twist by $\det$, this last term is the channel for the removable box in row $d$ of $\lp+(1^d)$. The individual terms are covered by Theorem~\ref{thm:main}, but the mixtures are not, and they need not have an optimal state at all. Take $\lp=(3,2,1)$ and $d=3$. Two of its three removable boxes, $(2,2)$ and $(3,1)$, give coherent-state outputs with diagonals $(\tfrac13,\tfrac23,0)$ and $(\tfrac15,\tfrac4{15},\tfrac8{15})$ by \eqref{eq:entries}. For the mixture $\Lambda=\tfrac12\Psi^{(2,2)}_-+\tfrac12\Psi^{(3,1)}_-$, the coherent-state output has spectrum $(\tfrac7{15},\tfrac4{15},\tfrac4{15})$, while the unit vector proportional to $(E_{21}E_{32}+E_{32}E_{21})v_{\lp}$, of weight $(2,2,2)$, gives $(\tfrac25,\tfrac25,\tfrac15)$, and neither majorizes the other. In fact, no output majorizes all others: it would need $s_1\geq\tfrac7{15}$ and $s_2\geq\tfrac45$, whereas $\max_\rho\big[s_1(\Lambda(\rho))+s_2(\Lambda(\rho))\big]=1+\tfrac{2\sqrt3}{15}<\tfrac{19}{15}$. By Ky Fan's maximum principle and covariance, this maximum is the largest eigenvalue of $2\Lambda^*(|e_1\rangle\langle e_1|)+\Lambda^*(|e_2\rangle\langle e_2|)$, where $\Lambda^*$ is the adjoint of $\Lambda$; in a weight basis this operator is diagonal except on the weight space of weight $(2,2,2)$, where its eigenvalues are $1\pm\tfrac{2\sqrt3}{15}$. The obstruction can be seen in \eqref{eq:entries}: different boxes put their largest entry at different basis vectors, and no state can sit on both peaks.
\end{remark}

\begin{remark}[Where these channels occur]\label{rem:occur}
By~\cite[Thm.~19]{MT25}, the extreme points of the set of covariant channels from $V_\alpha$ to $V_\beta$ are indexed by the irreducible constituents of $\overline{V_\alpha}\otimes V_\beta$, together with a vector in the corresponding multiplicity space. Our $\Phi_+$ and $\Phi_-$ are the extreme points belonging to the constituents $\Cd$ and $\Cdbar$, whose multiplicity is one. They also occur inside the unitary-equivariant permutation-invariant channels on tensor powers of $\Cd$, which factor through covariant channels between irreducible representations~\cite{MT25}; for a single output copy, the extreme points at the level of irreducible representations are exactly the channels $\Psi^{(r)}_-$ of Remark~\ref{rem:six}~\cite{GrinkoOzols24,MT25}. Two examples: optimal purity amplification with a single output is, in each Schur--Weyl block, the channel $\Psi_-$ for a removable box~\cite{LFIC25},~\cite[eq.~(120)]{MT25} (optimal purification was studied earlier in~\cite{CEM99,KeylWerner01}; the single-output channels for other eigenvectors are again of this form~\cite{LTHC26}), and optimal symmetric cloning by one copy~\cite{Werner98} is $\Phi_+$ for a single row, whose complementary channel $\Psi_+$ is then optimal transposition into a single copy~\cite{BGSQ26}. The box chosen in purity amplification, in the topmost removable row, need not be the one that maximizes the bound of Corollary~\ref{thm:rdm} below, and there is no simple rule for which box does~\cite[Rem.~5]{Reuvers19}: for $\nu=(5,5,5,4)$ and $k=1$ the topmost box gives $\tfrac13$, the other one $\tfrac25$.
\end{remark}

\begin{remark}[The two one-particle outputs]\label{rem:pm}
The channels $\Psi_+$ and $\Psi_-$ both output a single particle, but they are different channels. For $\lambda=(2,1,0,0)$, $d=4$ and $j=2$, for instance, $\max_\sigma s_2(\Psi_+(\sigma))=\kappa_3/\kappa_4=\tfrac34$, whereas $\max_\rho s_2(\Psi_-(\rho))=1$ by \eqref{eq:hooks}. Only their largest output eigenvalues are tied together, $\max_\sigma s_1(\Psi_+(\sigma))=\kappa_d^{-1}\max_\rho s_1(\Psi_-(\rho))$, because $\|T_aT_a^\dagger\|=\|T_a^\dagger T_a\|$; Remark~\ref{rem:rectangle} exploits this. In the fermionic case $\lambda=(1^N)$, with the box added in row $N+1$, the duality of Lemma~\ref{lem:complement} is the exchange of particles and holes. The output of $\Psi_+$ then has the spectrum of $(\one-N\gamma)/(d-N)$, where $\gamma$ is the reduced density matrix of the input, and the Pauli principle for the $d-N$ holes is nothing but the positivity of $\gamma$.
\end{remark}

\subsection{Minimal output entropy and the minimizers}\label{sec:consequences}

Theorems~\ref{thm:main} and~\ref{cor:adding} give the entropy statements at once.

\begin{corollary}[Minimal output entropy]\label{cor:entropy}
For every concave $f$ and every density matrix $\rho$ on $V_{\lp}$,
\[
\Tr f\big(\Psi_-(\rho)\big)\ \geq\ \Tr f\big(\Psi_-(\rho_0)\big),
\]
and the same holds for $\Phi_-$, and for $\Phi_+$ and $\Psi_+$ with $\sigma,\sigma_0$ on $V_\lambda$ in place of $\rho,\rho_0$. In particular, coherent states minimize the von Neumann and all R\'enyi output entropies of the four channels.
\end{corollary}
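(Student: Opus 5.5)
The corollary follows from Theorems~\ref{thm:main} and~\ref{cor:adding} together with the standard characterization of majorization. The tool is Karamata's inequality (equivalently, the Hardy--Littlewood--P\'olya theorem~\cite{MOA}). If $x\mj y$ are real vectors of the same length and $f$ is concave on an interval containing their entries, then $\sum_i f(x_i)\geq\sum_i f(y_i)$. Applied to eigenvalue vectors, this gives $\Tr f(A)\geq\Tr f(B)$ whenever $A\mj B$, where $A$ and $B$ are Hermitian matrices on the same space. I will use this in the form: if $x\mj y$, then $x$ lies in the convex hull of the permutations of $y$, and $\sum_i f(x_i)$ is a concave, permutation-invariant function of $x$, so its minimum over that hull is attained at a vertex.

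First I fix the domain. Every output is a density matrix, so its eigenvalues lie in $[0,1]$. It therefore suffices that $f$ be concave on $[0,1]$, and for $t\log t$ we use the convention $0\log0=0$ together with continuity at $0$.

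For each of the four channels, the relevant theorem gives $\spec\Lambda(\rho)\mj\spec\Lambda(\rho_0)$, with $\rho,\rho_0$ replaced by $\sigma,\sigma_0$ for $\Phi_\pm$ and $\Psi_\pm$ on the adding side. Both spectra are vectors of the same length, namely the output dimension, and both sum to $1$. Karamata then yields $\Tr f(\Lambda(\rho))\geq\Tr f(\Lambda(\rho_0))$.

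The entropy statements are special cases. The von Neumann entropy is $\Tr\eta(\cdot)$ with $\eta(t)=-t\log t$, which is concave. For the R\'enyi entropy $S_\alpha=\frac{1}{1-\alpha}\log\Tr\rho^\alpha$ I split into two cases:
\begin{itemize}
\item For $0<\alpha<1$, $t^\alpha$ is concave, so $\Tr\Lambda(\rho)^\alpha\geq\Tr\Lambda(\rho_0)^\alpha$. Since $\log$ is increasing and the prefactor $\frac{1}{1-\alpha}$ is positive, this gives $S_\alpha(\Lambda(\rho))\geq S_\alpha(\Lambda(\rho_0))$.
\item For $\alpha>1$, apply the corollary to the concave function $f(t)=-t^\alpha$. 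This gives the reversed inequality for $\Tr\Lambda(\cdot)^\alpha$, and the negative prefactor $\frac{1}{1-\alpha}$ flips it back to the same conclusion.
\end{itemize}
The limiting cases $\alpha=0$ (rank) and $\alpha=\infty$ (largest eigenvalue) follow from $s_k$-monotonicity, or by taking limits.

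Finally, covariance, noted after Table~\ref{tab:channels}, shows that every coherent state gives the same output spectrum. The bound therefore holds for every coherent state, so coherent states minimize these entropies. There is no real obstacle here. The only point needing care is the edge behaviour of $f$ at $0$, for example $t\log t$ or $t^\alpha$ with $\alpha<1$, and continuity of $f$ on $[0,1]$ handles it.
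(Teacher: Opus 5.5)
Your proposal is correct and is the paper's proof: it combines the majorizations of Theorems~\ref{thm:main} and~\ref{cor:adding} with the standard fact that $A\mj B$ implies $\Tr f(A)\geq\Tr f(B)$ for concave $f$. Your additional handling of the R\'enyi cases is fine, and your vertex argument in fact needs no continuity of $f$ at $0$: concavity on $[0,1]$ alone suffices.
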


All four channels are covariant with respect to irreducible representations on both sides, so their Holevo capacity is $\log(\dim\text{output})-H_{\min}$, with $H_{\min}$ the minimal output entropy~\cite{Holevo05}. Corollary~\ref{cor:entropy} evaluates it: $H_{\min}$ is the Shannon entropy of the coherent-state output spectrum, which is $p^{(j,m)}$ for the channels that remove the box, and can be read off from \eqref{eq:kappasums} for those that add it. The two channels of each pair have the same $H_{\min}$, but different output dimensions.

In general, however, coherent states are not the only minimizers. To describe all of them, let $G_j\subseteq U(d)$ be the subgroup fixing $e_1,\dots,e_j$, a copy of $U(d-j)$, and let $\mathcal S\subseteq V_{\lp}$ be the submodule that $G_j$ generates from $v_{\lp}$. It is an irreducible representation of $G_j$ with highest weight $(\lp_{j+1},\dots,\lp_d)$, the diagram $\lp$ with its first $j$ rows removed. In the Gelfand--Tsetlin basis of $V_{\lp}$, whose vectors are labelled by the semistandard tableaux of shape $\lp$ and have the content of their tableau as weight~\cite[Ch.~8]{GoodmanWallach}, the space $\mathcal S$ is spanned by the tableaux in which row $a$ is filled with the entry $a$ for every $a\leq j$, while the remaining rows form an arbitrary semistandard tableau with entries $j+1,\dots,d$. Equivalently, $\mathcal S$ is the sum of the weight spaces of $V_{\lp}$ whose weights agree with $\lp$ in the first $j$ coordinates. For comparison, $v_{\lp}$ is the tableau in which every row $a$ is filled with the entry $a$; so $\mathcal S$ keeps this filling in the first $j$ rows and frees the rows below.

\begin{theorem}[Minimizers]\label{thm:minimizers}
$\Psi_-(\rho)$ has spectrum $p^{(j,m)}$ if and only if $\rho=\pi_{\lp}(U)\,\sigma\,\pi_{\lp}(U)^{\dagger}$ for some $U\in U(d)$ and some density matrix $\sigma$ supported on $\mathcal S$. Every such $\sigma$ satisfies $\Psi_-(\sigma)=\Psi_-(\rho_0)$.
\end{theorem}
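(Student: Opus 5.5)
The plan is to prove the two directions separately: the easy ``if'' direction by a Schur-lemma argument inside $\mathcal S$, and the ``only if'' direction by analysing the equality cases of the Ky Fan bounds that underlie Theorem~\ref{thm:main}.

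\textbf{The ``if'' direction.} By covariance it suffices to show $\Psi_-(\sigma)=\Psi_-(\rho_0)$ for every density matrix $\sigma$ supported on $\mathcal S$. Since $\langle e_a|\Psi_-(\sigma)|e_b\rangle=\Tr(\sigma\,T_b^\dagger T_a)$, this amounts to computing the compressions $P_{\mathcal S}T_b^\dagger T_aP_{\mathcal S}$.
\begin{enumerate}[label=(\roman*)]
\item Take $a,b\leq j$ and $g\in G_j$. Intertwining gives $\pi_\lambda(g)T_a\pi_{\lp}(g)^\dagger=\sum_c\langle e_a|g^\dagger|e_c\rangle\,T_c$, up to the convention for the matrix of $g$. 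Because $g$ fixes $e_1,\dots,e_j$, the right-hand side is just $T_a$. Hence $T_b^\dagger T_a$ commutes with $G_j$.
\item The diagonal torus of $U(j)$ acts on $\mathcal S$ by the single character $(\lp_1,\dots,\lp_j)$. Moving a particle between $e_a$ and $e_b$ changes this character unless $a=b$, so the compression vanishes for $a\neq b$.
\item Since $\mathcal S$ is $G_j$-irreducible, Schur's lemma gives $P_{\mathcal S}T_b^\dagger T_aP_{\mathcal S}=\delta_{ab}c_a\one_{\mathcal S}$. Evaluating at $v_{\lp}$ shows that $c_a=\|T_av_{\lp}\|^2$, which are the entries \eqref{eq:entries}.
\item These $c_a$ sum to $1$ over $a\leq j$. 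By \eqref{eq:Tsum}, $\sum_{a>j}\|T_a\varphi\|^2=0$ for every $\varphi\in\mathcal S$, so $T_a\varphi=0$ for $a>j$.
\end{enumerate}
Consequently every matrix element agrees with those of $\Psi_-(\rho_0)$, including the off-diagonal elements of $\sigma$.

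\textbf{The ``only if'' direction: reduction to equality cases.} Suppose $\spec\Psi_-(\rho)=p^{(j,m)}$. Diagonalize $\Psi_-(\rho)$ and choose $U$ so that, after replacing $\rho$ by $\pi_{\lp}(U)^\dagger\rho\,\pi_{\lp}(U)$, the output is diagonal with its eigenvalues placed on $e_j,e_{j-1},\dots,e_1$ in decreasing order, exactly as for $\Psi_-(\rho_0)$. Ties among the entries are harmless, since any ordering inside an eigenspace can be chosen. Let $Q_k$ be the projection onto $\operatorname{span}(e_{j-k+1},\dots,e_j)$ and $A_k:=\Psi_-^*(Q_k)=\sum_{a=j-k+1}^{j}T_a^\dagger T_a$. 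Then $\Tr(\rho A_k)=s_k(\Psi_-(\rho))=s_k(p^{(j,m)})$. The proof of Theorem~\ref{thm:main}, via duality and the Ky Fan argument, shows that $\|A_k\|\leq s_k(p^{(j,m)})$. Hence $\rho$ is supported on the top eigenspace $W_k$ of $A_k$ for each $k=1,\dots,j$.

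\textbf{The ``only if'' direction: identifying the top eigenspaces.} The case $k=j$ gives $T_a\rho=0$ for $a>j$. It remains to show that $\bigcap_kW_k=\mathcal S$. One inclusion is the ``if'' part. For the reverse inclusion I would retrace the chain of inequalities in Section~\ref{sec:proof}: the domination of the isotypic projection by a Casimir-type operator, and the bound via branching to $U(k)$. I would track when each step is an equality. The expectation is that equality at level $k$ forces a vector to be a highest-weight vector for the relevant $U(j-k)\times U(k)$ branching, with prescribed weight in those rows. Intersecting over $k=1,\dots,j$ should then force the weight in coordinates $1,\dots,j$ to equal $(\lp_1,\dots,\lp_j)$, which characterizes $\mathcal S$.

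\textbf{Main obstacle.} This identification of the top eigenspaces is the step I expect to be hardest. The two reductions in the proof (factoring out the columns to the right of the box, and deleting the tall columns) are only inequalities, so equality must be propagated through them. If that fails, I would fall back on a direct weight-space argument. On each weight space, $A_k$ differs from its diagonal part by terms that lower the weight in the first $j$ coordinates, and one can show that the top eigenvalue is attained only where that weight is maximal.
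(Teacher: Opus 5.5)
Your ``if'' direction is the paper's Lemma~\ref{lem:Sout}. Your reduction of the converse is also sound. Once $\Psi_-(\rho)=\Psi_-(\rho_0)$ after conjugation, the chain $s_k(p^{(j,m)})=\Tr(\rho A_k)\leq\|A_k\|\leq s_k(p^{(j,m)})$ puts the support of $\rho$ inside every $W_k$. This handles mixed $\rho$ directly, whereas the paper first passes to pure states by strict concavity. Subtracting consecutive $k$ then gives $T_a^\dagger T_a\phi=\xi_a\phi$ for all $a\leq j$ and all $\phi$ in the support.

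The gap is the inclusion $\bigcap_kW_k\subseteq\mathcal S$. This is the whole content of the ``only if'' direction, and you only sketch it. Neither strategy you offer works as stated.

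\emph{Retracing equality.} This is not carried out, and it is not clear it can be. The bound of Theorem~\ref{thm:main} is proved for the dual adding channel, after reductions that depend on $k$. Proposition~\ref{prop:tall} changes the diagram and the channel, so equality there requires knowing the top eigenspace of $M^\beta_I$ for the reduced diagram, which is another instance of the same problem. Nothing in that chain tells you how the conditions for different $k$ fit together.

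\emph{The fallback.} It rests on a misdescription: $A_k$ commutes with the diagonal torus, so it preserves every weight space and has no weight-lowering part. Read for a single $k$, the claim is also false. For $2\leq j<d$ the top eigenspace of $A_1=T_j^\dagger T_j$ is the whole $G_1$-module $\mathcal S_1$ generated by $v_{\lp}$, with $G_1$ fixing $e_j$. It contains $E_{j+1,1}v_{\lp}\neq0$, of weight $\lp-e_1+e_{j+1}$. So the equations for different rows must be combined. For $k\geq2$ the $W_k$ are hard to describe individually, since $A_k$ is only $U(k)\times U(d-k)$-invariant and $V_{\lp}$ has multiplicities under that group (Remark~\ref{rem:grinko}).

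The missing idea is to use the single-row equations one row at a time, along the chain $V_{\lp}=\mathcal S_0\supseteq\mathcal S_1\supseteq\dots\supseteq\mathcal S_j=\mathcal S$. Here $\mathcal S_r$ is generated from $v_{\lp}$ by the subgroup $G_r$ fixing $e_{j-r+1},\dots,e_j$. Let $\phi\in\mathcal S_{r-1}$ and $a=j-r+1$; then $\phi$ is an eigenvector of $P_{\mathcal S_{r-1}}T_a^\dagger T_a$ with eigenvalue $\xi_a$. This operator commutes with $G_r$, and $\mathcal S_{r-1}$ is multiplicity free under $G_r$, so it acts by a scalar $\chi(\eta)$ on each Gelfand--Tsetlin component $\eta$.

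Lemma~\ref{lem:levels} computes these scalars, with $\mu$ the highest weight of $\mathcal S_{r-1}$:
\begin{enumerate}[label=(\roman*)]
\item $\chi$ is a nonnegative combination of the known scalars $\chi_t$ of the single pairs $(\mu-e_t,\mu)$, given by the formula for squared reduced Wigner coefficients~\cite{RGKZ26}.
\item The weights of this combination are fixed by its values $\xi_s$ on the components obtained by deleting one row of $\mu$.
\item On a face containing every maximizer, the combination collapses by Lagrange interpolation to a single product.
\item That product is largest exactly at the highest weight of $\mathcal S_r$, where it equals $\xi_a$.
\end{enumerate}
This gives $\phi\in\mathcal S_r$, and by induction $\phi\in\mathcal S$. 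Without this explicit eigenvalue computation, or a genuine substitute for it, your ``only if'' direction is incomplete.
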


For strictly concave $f$, these are exactly the states that give equality in Corollary~\ref{cor:entropy}. The theorem also says something about the other three channels. Every pair $(\lambda,\lp)$ is the dual of another pair, and the proof of Lemma~\ref{lem:complement} identifies $\Psi_-$ with the channel $\Psi_+$ of the dual pair up to antiunitary conjugations; so Theorem~\ref{thm:minimizers} determines all minimizers of $\Psi_+$ as well. Since complementary channels have the same output spectra on pure inputs, it also determines the pure minimizers of $\Phi_-$ and $\Phi_+$. Their mixed minimizers are not covered here.

When are the minimizers coherent? In Section~\ref{sec:minimizers} we show that the coherent vectors of $V_{\lp}$ that lie in $\mathcal S$ are exactly the $G_j$-orbit of $v_{\lp}$, and deduce the following.

\begin{corollary}[Coherent minimizers]\label{cor:coherent}
\begin{enumerate}[label=\emph{(\alph*)},leftmargin=*]
\item Coherent states are the only minimizers exactly when $\mathcal S$ is one-dimensional, that is, when $(\lp_{j+1},\dots,\lp_d)$ is constant. This is the case, for example, when the box is removed from the last nonempty row of $\lp$.
\item Every pure minimizer is coherent exactly when $(\lp_{j+1},\dots,\lp_d)$ is constant, or of the form $(c+1,c,\dots,c)$ or $(c,\dots,c,c-1)$.
\end{enumerate}
\end{corollary}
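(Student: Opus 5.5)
By Theorem~\ref{thm:minimizers}, the minimizers of $\Psi_-$ are exactly the states $\pi_{\lp}(U)\sigma\pi_{\lp}(U)^\dagger$ with $\sigma$ supported on $\mathcal S$. Coherence is invariant under this conjugation, so both parts reduce to statements about states on $\mathcal S$. Throughout I use the fact, announced above and proved in Section~\ref{sec:minimizers}, that the coherent vectors of $V_{\lp}$ lying in $\mathcal S$ are exactly the $G_j$-orbit of $v_{\lp}$. Put $n:=d-j$ and $\mu:=(\lp_{j+1},\dots,\lp_d)$, so that $\mathcal S$ is the irreducible $U(n)$-module with highest weight $\mu$.

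For (a), suppose first that $\dim\mathcal S=1$. Then $\mathcal S=\C v_{\lp}$, so every minimizer is a conjugate of $\rho_0$, hence coherent. Conversely, if $\dim\mathcal S\geq2$, the normalized projection onto $\mathcal S$ is a minimizer of rank at least two, so it is not pure and therefore not coherent. An irreducible representation of $U(n)$ is one-dimensional exactly when it is a power of $\det$, which happens exactly when $\mu$ is constant. If the box lies in the last nonempty row, then $\mu=0$, which is constant.

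For (b), a pure minimizer is a conjugate of a unit vector $w\in\mathcal S$, and it is coherent iff $w$ is, that is, iff $w\in G_j v_{\lp}$ up to a phase. So every pure minimizer is coherent exactly when $U(n)$ acts transitively on the projective space of its irreducible module $\mathcal S$.

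Sufficiency is direct. If $\mu$ is constant, $\mathcal S$ is one-dimensional. If $\mu=(c+1,c,\dots,c)$ or $(c,\dots,c,c-1)$, then $\mathcal S\cong\C^n\otimes\det^c$ or $\overline{\C^n}\otimes\det^c$, and $U(n)$ is transitive on the unit sphere of either.

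For necessity, assume the action on projective space is transitive, so every vector of $\mathcal S$ is coherent; in particular every weight vector is.
\begin{enumerate}[label=(\roman*),leftmargin=*]
\item A coherent weight vector is fixed, projectively, by the maximal torus, so it is a torus-fixed point of the flag variety $U(n)v_\mu$. These points are the images of the highest weight line under the Weyl group $S_n$. Hence every weight of $\mathcal S$ is a permutation of $\mu$.
\item If $\mu_a-\mu_b\geq2$ for some $a<b$, then $\mu-e_a+e_b$ is a weight of $\mathcal S$: it lies in the convex hull of $S_n\mu$ and differs from $\mu$ by a root. But it is not a permutation of $\mu$, which contradicts (i). So $\mu-(c,\dots,c)$ is a $0/1$ vector, and $\mathcal S\cong\Lambda^k\C^n\otimes\det^c$ for some $0\leq k\leq n$.
\item For $2\leq k\leq n-2$, transitivity fails, because $e_1\wedge e_2\wedge e_5\wedge\cdots+e_3\wedge e_4\wedge e_5\wedge\cdots$ is not decomposable, whereas all coherent vectors of $\Lambda^k\C^n$ are decomposable.
\end{enumerate}
The remaining values $k\in\{0,n\}$ and $k\in\{1,n-1\}$ give exactly the three shapes in the statement.

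The main obstacle I expect is step (i): identifying coherent weight vectors with Weyl images of the highest weight line. I would prove it via the Bruhat decomposition, or more concretely for $U(n)$ by observing that a torus-fixed point $gv_\mu$ corresponds to $g$ normalizing the torus modulo the stabilizer.
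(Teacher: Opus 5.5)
Your argument is correct. Part (a), and the reduction of (b) to transitivity of $G_j\cong U(n)$ on the projective space of $\mathcal S$, are the same as in the paper. The routes differ in the necessity half of (b).

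The paper settles it in one sentence. It cites the classification of compact groups acting transitively on spheres (Montgomery--Samelson, Borel) and argues that the image of $SU(n)$ must be $SU(D)$ in its defining representation. You replace the classification by weight theory:
\begin{enumerate}[label=(\arabic*),leftmargin=*]
\item coherent weight vectors have weights in $S_n\mu$;
\item the weight $\mu-e_a+e_b$, visible directly from $\|E_{ba}v_\mu\|^2=\mu_a-\mu_b>0$, then forces $\mu_1-\mu_n\leq1$, so $\mathcal S\cong\Lambda^k\C^n\otimes\det^c$;
\item decomposability excludes $2\leq k\leq n-2$.
\end{enumerate}
This is longer but elementary and self-contained, and it exhibits explicit non-coherent pure minimizers. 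It also avoids matching quotients of $SU(n)$, and their representations, against the list. For instance, $SU(4)/\{\pm1\}\cong SO(6)$ is transitive on a sphere, just not on the unit sphere of $\Lambda^2\C^4$.

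Your step (i) needs no Bruhat decomposition. For $\phi=\pi(g)v_\mu$, the matrix $\big(\ip{\phi}{E_{ab}\phi}\big)_{a,b}$ is the transpose of $g\operatorname{diag}(\mu)g^{-1}$, so its spectrum is $\mu$. For a weight vector the same matrix is the diagonal matrix of its weight, so a coherent weight vector has a weight in $S_n\mu$.

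One correction to your bookkeeping concerns the fact you quote, that the coherent vectors of $V_{\lp}$ lying in $\mathcal S$ form the $G_j$-orbit of $v_{\lp}$. It is not proved separately elsewhere in Section~\ref{sec:minimizers}. It is the first step of the paper's proof of this very corollary. There, for $\phi\in\mathcal S$ the above matrix is $\operatorname{diag}(\lp_1,\dots,\lp_j)\oplus M'$, and coherence is characterized by maximal Hilbert--Schmidt norm.

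So you should either supply this step or note that you do not need it:
\begin{enumerate}[label=(\roman*),leftmargin=*]
\item Part (a) and sufficiency in (b) use only the trivial inclusion $G_jv_{\lp}\subseteq U(d)v_{\lp}$.
\item For necessity, apply the spectral condition above in $V_{\lp}$ to the block matrix. It forces $M'$ to have spectrum $\mu$ for every $\phi\in\mathcal S$ that is coherent in $V_{\lp}$.
\item This gives your step (i) directly. It also rules out your vector in step (iii), whose $M'$ has the eigenvalue $c+\tfrac12$.
\end{enumerate}
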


Two examples. For $\lp=(2,2,1,0)$ and the box $(2,2)$ we have $\mathcal S\cong\C^2$, so the minimizers are the mixtures of coherent states in a conjugate of $\mathcal S$: all pure minimizers are coherent, but there are also mixed ones. In general, even the pure minimizers need not be coherent. Take $\lp=(3,2,0)$, $d=3$ and the box $(1,3)$. Here $\mathcal S$ is spanned by $v_{\lp}$, $E_{32}v_{\lp}$ and $E_{32}^2v_{\lp}$, and it is a copy of $\operatorname{Sym}^2\C^2$, whose coherent vectors are the squares. The unit vector $\phi=\frac1{\sqrt2}\big(v_{\lp}+\frac12E_{32}^2v_{\lp}\big)$ corresponds to $\frac1{\sqrt2}(e_2^2+e_3^2)$, which is not a square. So $\phi$ is not coherent, yet its output is the same as that of the coherent state, which is a pure state here since $j=1$.

\subsection{Reduced density matrices of \texorpdfstring{$N$}{N} particles}\label{sec:rdm}

Let $\nu$ be a Young diagram with $N$ boxes and at most $d$ rows, so that $V_\nu\otimes S_\nu\subset\otimes^N\Cd$, and write $\gamma:=\Tr_{1,\dots,N-1}\varrho$ for the reduced density matrix of the last particle of a state $\varrho$. Restricted to $S_{N-1}$, which permutes the first $N-1$ particles, $S_\nu$ splits into one summand for each removable box, as one sees by comparing Schur--Weyl duality for $N-1$ and for $N$ particles through Pieri's rule,
\begin{equation}\label{eq:branchSN}
S_\nu=\bigoplus_{(j,m)\ \mathrm{removable}}S_\nu^{(j,m)},\qquad S_\nu^{(j,m)}\cong S_{\nu-e_j}.
\end{equation}
The states on $V_\nu\otimes S_\nu^{(j,m)}$ are the states of symmetry $\nu$ whose first $N-1$ particles have symmetry $\nu-e_j$; in Young's orthogonal basis of $S_\nu$, the summand $S_\nu^{(j,m)}$ is spanned by the standard tableaux with the entry $N$ in the box $(j,m)$. For them, tracing out the first $N-1$ particles is the channel $\Psi_-$ for $\lp=\nu$ and the box $(j,m)$, applied to the $V_\nu$-marginal (Lemma~\ref{lem:nestedspace}; in terms of tableaux, this is~\cite[Cor.~7]{Reuvers19}).

\begin{theorem}[Given symmetry of the remaining particles]\label{thm:nested}
Let $(j,m)$ be a removable box of $\nu$, and let $\varrho$ be a density matrix on $V_\nu\otimes S_\nu^{(j,m)}$. Then $\spec\gamma\mj p^{(j,m)}$, that is,
\[
s_k(\gamma)\ \leq\ \prod_{i=1}^{j-k}\Big(1-\frac{1}{h_{(i,m)}}\Big)\qquad\text{for all }k,
\]
with the hook lengths of $\nu$. All these bounds are attained by $\varrho=|v_\nu\rangle\langle v_\nu|\otimes\sigma$, for any density matrix $\sigma$ on $S_\nu^{(j,m)}$, and every density matrix $\gamma$ with $\spec\gamma\mj p^{(j,m)}$ occurs.
\end{theorem}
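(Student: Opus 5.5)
We reduce Theorem~\ref{thm:nested} to Theorem~\ref{thm:main} through the identification stated just before it: on $V_\nu\otimes S_\nu^{(j,m)}$, the map $\varrho\mapsto\gamma$ equals $\Psi_-$ for $\lp=\nu$ and the box $(j,m)$, applied to the $V_\nu$-marginal $\Tr_{S_\nu}\varrho$ (Lemma~\ref{lem:nestedspace}). We would first prove this identification, if it is not taken as given. By Schur--Weyl duality for $N-1$ particles, $\otimes^{N-1}\Cd\supset V_\lambda\otimes S_\lambda$ with $\lambda=\nu-e_j$. The space $(V_\lambda\otimes S_\lambda)\otimes\Cd$ contains $V_\nu$ exactly once for each copy of $S_\lambda$, and this copy is $V_\nu\otimes S_\nu^{(j,m)}$ via \eqref{eq:branchSN}. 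Concretely, there is a unitary $V_\nu\otimes S_\nu^{(j,m)}\cong (T V_\nu)\otimes S_\lambda\subset V_\lambda\otimes\Cd\otimes S_\lambda$, with $T$ as in \eqref{eq:T}. The intertwiner is unique up to phase because the multiplicity is one, and the $S_{N-1}$-action is carried by $S_\lambda$. Tracing out the first $N-1$ particles then amounts to tracing out $V_\lambda\otimes S_\lambda$. For $\varrho=\sum$ on $V_\nu\otimes S_\lambda$, this gives $\Tr_{V_\lambda}T(\Tr_{S_\lambda}\varrho)T^\dagger=\Psi_-(\Tr_{S}\varrho)$.

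With this in hand, the inequalities follow directly from Theorem~\ref{thm:main}. We have $\spec\gamma=\spec\Psi_-(\rho)\mj\Psi_-(\rho_0)$, whose spectrum is $p^{(j,m)}$, with partial sums \eqref{eq:hooks}. For attainment, take $\varrho=|v_\nu\rangle\langle v_\nu|\otimes\sigma$. Its $V_\nu$-marginal is $\rho_0$, so $\gamma=\Psi_-(\rho_0)$ and all bounds are equalities.

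It remains to show that every $\gamma$ with $\spec\gamma\mj p^{(j,m)}$ occurs. By the Schur--Horn theorem, or Birkhoff's theorem on doubly stochastic matrices, $\spec\gamma$ is a convex combination $\sum_\pi t_\pi\,\pi(p^{(j,m)})$ of permutations of $p^{(j,m)}$. We realize each permuted diagonal matrix by rotating the coherent state. Let $U_\pi\in U(d)$ be the permutation matrix sending $e_a$ to $e_{\pi(a)}$. By covariance, the state $\pi_\nu(U_\pi)\rho_0\pi_\nu(U_\pi)^\dagger\otimes\sigma$ gives $\gamma=U_\pi\Psi_-(\rho_0)U_\pi^\dagger$. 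This is diagonal, since $\Psi_-(\rho_0)$ is diagonal by Theorem~\ref{thm:main}, and its diagonal is the permuted vector. Mixing these states with weights $t_\pi$ gives, by linearity, the diagonal matrix $\operatorname{diag}(\spec\gamma)$. A final conjugation by a unitary $W$ diagonalizing $\gamma$, implemented as $\pi_\nu(W)\otimes\one$ on the state, produces $\gamma$ itself.

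The only step with real content beyond Theorem~\ref{thm:main} is the identification of the partial trace with $\Psi_-$. In particular, we must check that the embedding of $V_\nu\otimes S_\nu^{(j,m)}$ factors as $T\otimes\one_{S_\lambda}$, with no extra mixing between $S_\lambda$ and $V_\lambda$. This follows from multiplicity-freeness of Pieri's rule together with the fact that the commutant of $U(d)\times S_{N-1}$ acts by scalars on each isotypic piece.
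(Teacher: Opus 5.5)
Your proposal is correct and follows the paper's own route. You identify $\varrho\mapsto\gamma$ with $\Psi_-$ applied to the $V_\nu$-marginal, which is Lemma~\ref{lem:nestedspace}, proved there by the same comparison of $U(d)\times S_{N-1}$ isotypic components. You then use Theorem~\ref{thm:main} for the bounds and attainment, and Rado's theorem with covariance and convexity of the set of outputs for the converse. The only imprecision is citing Schur--Horn for writing $\spec\gamma$ as a convex combination of permutations of $p^{(j,m)}$; the result needed is Hardy--Littlewood--P\'olya combined with Birkhoff (Rado), which you also name.
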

The states that attain all bounds at once are those whose $V_\nu$-marginal is one of the states of Theorem~\ref{thm:minimizers}. For fermions, $\nu=(1^N)$ has the single removable box $(N,1)$, with $S_\nu^{(N,1)}=S_\nu$, and $h_{(i,1)}=N-i+1$. The product telescopes,
\begin{equation}\label{eq:pauli}
s_k(\gamma)\ \leq\ \prod_{i=1}^{N-k}\frac{N-i}{N-i+1}\ =\ \frac{k}{N},
\end{equation}
and Theorem~\ref{thm:nested} becomes the Pauli principle $\gamma\leq\one/N$, together with Coleman's theorem~\cite{Coleman63} that every such $\gamma$ occurs for mixed states. For fermions the bounds for $k\geq2$ follow from the one for $k=1$. In general they do not: for $\nu=(5,5,5,4)$ and the box $(4,4)$, the bound on $s_1$ is $\tfrac25$, but the bound on $s_2$ is $\tfrac35$, not $\tfrac45$.

The condition on the first $N-1$ particles matters. For a product state $v\otimes s$ whose vector $s\in S_\nu$ has components in several summands of \eqref{eq:branchSN}, $\gamma$ is the output of a mixture $\sum_bw_b\Psi^{(b)}_-$ of the channels for different boxes, and such a mixture need not have an optimal state; Remark~\ref{rem:six} gives an example for $\nu=(3,2,1)$ and $d=3$. What survives for a general state on $V_\nu\otimes S_\nu$ is a decomposition: the cross terms between different summands vanish, so that $\gamma$ is a convex combination of reduced density matrices of states of the kind in Theorem~\ref{thm:nested} (Section~\ref{sec:corproofs}). This gives the following.

\begin{corollary}[States of a given permutation symmetry]\label{thm:rdm}
For every density matrix $\varrho$ on $V_\nu\otimes S_\nu$ and every $k$,
\[
s_k(\gamma)\ \leq\ \max_{(j,m)\ \mathrm{removable}}\ \prod_{i=1}^{j-k}\Big(1-\frac{1}{h_{(i,m)}}\Big).
\]
For each $k$ the bound is attained by the states of Theorem~\ref{thm:nested} for a box that maximizes the right-hand side. Different $k$ may require different boxes.
\end{corollary}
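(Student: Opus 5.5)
The plan is to reduce Corollary~\ref{thm:rdm} to Theorem~\ref{thm:nested} through the decomposition of $\gamma$ into pieces, one for each removable box, and then use that $s_k$ is convex.

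First, write $Q_{(j,m)}$ for the orthogonal projection of $S_\nu$ onto $S_\nu^{(j,m)}$ in \eqref{eq:branchSN}, and put $\varrho_{(j,m)}:=(\one\otimes Q_{(j,m)})\varrho(\one\otimes Q_{(j,m)})$ and $w_{(j,m)}:=\Tr\varrho_{(j,m)}$. These weights are nonnegative and sum to $1$.

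The key step is to show that the cross terms vanish: $\Tr_{1,\dots,N-1}\big[(\one\otimes Q_b)\varrho(\one\otimes Q_{b'})\big]=0$ for $b\neq b'$. To see this, note that the partial trace over the first $N-1$ particles commutes with conjugation by permutations of $S_{N-1}$ acting on those particles. Inside $\otimes^N\Cd$, the projections $\one\otimes Q_b$ are central idempotents for $S_{N-1}$ restricted to $V_\nu\otimes S_\nu$. The summands $V_\nu\otimes S_\nu^{(b)}$ carry $S_{N-1}$-representations of pairwise inequivalent types $\nu-e_j$. Consequently, for any operator $X$ on the last factor,
\[
\Tr\big[(\one\otimes Q_b)\varrho(\one\otimes Q_{b'})(\one^{\otimes(N-1)}\otimes X)\big]=0 .
\]
This holds because $\one^{\otimes(N-1)}\otimes X$ commutes with $S_{N-1}$, so it maps isotypic components into themselves, and then $Q_{b'}(\cdots)Q_b$ vanishes. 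I expect the main care to go here: one must check that the $S_\nu$-side projection $Q_b$ agrees with the $S_{N-1}$-isotypic projection on $\otimes^N\Cd$ restricted to $V_\nu\otimes S_\nu$. This follows from Schur--Weyl duality combined with Pieri's rule, as in Lemma~\ref{lem:nestedspace}.

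Given this, we get
\[
\gamma=\sum_b w_b\gamma_b,
\]
where $\gamma_b$ is the reduced density matrix of the normalized state $\varrho_b/w_b$ on $V_\nu\otimes S_\nu^{(b)}$ (terms with $w_b=0$ are dropped). The function $s_k$ is convex, being a maximum of $\Tr(P\,\cdot)$ over rank-$k$ projections (Ky Fan). Hence $s_k(\gamma)\leq\sum_b w_b\,s_k(\gamma_b)$. By Theorem~\ref{thm:nested}, each $s_k(\gamma_b)$ is at most the hook product for box $b$, and the convex combination is at most the maximum of these products.

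For attainment, fix $k$ and pick a box $b$ that maximizes the product. Theorem~\ref{thm:nested} then gives the state $|v_\nu\rangle\langle v_\nu|\otimes\sigma$ with $\sigma$ on $S_\nu^{(b)}$, which attains $s_k(\gamma)=s_k(p^{(b)})$, equal to the maximum. The final sentence, that different $k$ may require different boxes, is illustrated by $\nu=(5,5,5,4)$. With the hook values recorded in Remark~\ref{rem:occur} and after Theorem~\ref{thm:nested}, the box $(4,4)$ wins for $k=1$ with $\tfrac25$ against $\tfrac13$. For $k=2$, the box $(3,5)$ gives $1-\tfrac14=\tfrac34$ and the box $(4,4)$ gives $\tfrac35$, so $(3,5)$ wins.
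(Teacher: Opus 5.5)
Your proposal is correct and is essentially the paper's proof: the projections $R_{(j,m)}\otimes\one$ onto the $S_{N-1}$-isotypic components kill the cross terms under $\Tr_{1,\dots,N-1}$. Hence $\gamma$ is a convex combination of reduced density matrices covered by Theorem~\ref{thm:nested}, and convexity of $s_k$ together with the states of that theorem gives both the bound and its attainment.

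There is one arithmetic slip in your example. For $\nu=(5,5,5,4)$ the box $(1,5)$ has hook length $3$, not $4$, so for $k=2$ the box $(3,5)$ gives $\tfrac23$ rather than $\tfrac34$. This still beats $\tfrac35$, so your conclusion that the maximizing box changes with $k$ is unaffected.
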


For $k=1$ this is \eqref{eq:hook1}. As different $k$ may require different boxes, these bounds do not describe the set of reduced density matrices that occur. That set is the convex hull of the sets of Theorem~\ref{thm:nested}.

\begin{corollary}[Reachable reduced density matrices]\label{thm:polytope}
As $\varrho$ varies over the density matrices on $V_\nu\otimes S_\nu$, the $\gamma$ that occur are exactly those with
\begin{equation}\label{eq:polytope}
\spec\gamma\ \mj \sum_{(j,m)\ \mathrm{removable}}c_{(j,m)}\,p^{(j,m)}
\end{equation}
for some probability vector $c$. Their eigenvalue vectors, listed in any order, form the convex hull of all permutations of the vectors $p^{(j,m)}$; in particular, the vertices of this polytope are among these permutations.
\end{corollary}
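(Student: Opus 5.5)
The plan is to prove the two inclusions separately: first that every reachable $\gamma$ satisfies \eqref{eq:polytope}, then that every $\gamma$ satisfying \eqref{eq:polytope} is reachable. The convex-geometric description comes last.

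\emph{Every reachable $\gamma$ satisfies \eqref{eq:polytope}.} Take a density matrix $\varrho$ on $V_\nu\otimes S_\nu$ and use the decomposition announced before Corollary~\ref{thm:rdm}. Write $Q_{(j,m)}$ for the projection onto $V_\nu\otimes S_\nu^{(j,m)}$. The cross terms $\Tr_{1,\dots,N-1}(Q_{(j,m)}\varrho Q_{(j',m')})$ vanish for different boxes: the reduced map is $S_{N-1}$-invariant on the $S_\nu$ side, and the summands in \eqref{eq:branchSN} are inequivalent $S_{N-1}$-modules, so Schur's lemma kills the off-diagonal blocks after the partial trace. Hence
\[
\gamma=\sum_{(j,m)}c_{(j,m)}\gamma_{(j,m)},\qquad c_{(j,m)}=\Tr(Q_{(j,m)}\varrho),
\]
where $\gamma_{(j,m)}$ is the reduced density matrix of the normalized block $Q_{(j,m)}\varrho Q_{(j,m)}/c_{(j,m)}$. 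Theorem~\ref{thm:nested} gives $\spec\gamma_{(j,m)}\mj p^{(j,m)}$. Ky Fan gives subadditivity of $s_k$ for Hermitian matrices, so
\[
s_k(\gamma)\le\sum c_{(j,m)}s_k(\gamma_{(j,m)})\le\sum c_{(j,m)}s_k(p^{(j,m)})=s_k\Big(\sum c_{(j,m)}p^{(j,m)}\Big).
\]
The last equality holds because every $p^{(j,m)}$ is sorted decreasingly, so $s_k$ is additive on them. The traces agree, so \eqref{eq:polytope} follows.

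\emph{Every $\gamma$ satisfying \eqref{eq:polytope} is reachable.} Suppose $\spec\gamma\mj q:=\sum c\,p^{(j,m)}$. By Rado/Birkhoff, $\spec\gamma$ is a convex combination of permutations of $q$, so by Schur--Horn $\gamma$ is a convex combination of unitary conjugates $U\operatorname{diag}(q)U^\dagger$. Each $\operatorname{diag}(q)$ is reached by the mixture $\sum c_{(j,m)}\,|v_\nu\rangle\langle v_\nu|\otimes\sigma_{(j,m)}$ with $\sigma_{(j,m)}$ on $S_\nu^{(j,m)}$. By Theorem~\ref{thm:nested} and the block decomposition, this gives $\sum c\,\Psi^{(j,m)}_-(\rho_0)$. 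This matrix is diagonal, with the entries \eqref{eq:entries} placed at $e_1,\dots,e_j$, which is a permutation of $p^{(j,m)}$ in the basis order. That is harmless, but it means the sum is not literally $\operatorname{diag}(q)$.

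This is the main subtlety. Instead of trying to align the peaks, I will use that the reachable set is convex (the state space is convex and $\varrho\mapsto\gamma$ is linear) and unitarily invariant (apply $U^{\otimes N}$). Theorem~\ref{thm:nested} then already says that every $\gamma'$ with $\spec\gamma'\mj p^{(j,m)}$ is reachable, in particular every $\operatorname{diag}(\pi p^{(j,m)})$. Taking a common permutation $\pi$ gives $\operatorname{diag}(\pi q)$, and convexity plus unitary invariance give every $\gamma$ with $\spec\gamma\mj q$.

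\emph{The polytope.} The set of $x$ with $x\mj\sum c\,p^{(j,m)}$ for some $c$ is the union over $c$ of the permutohedra $\operatorname{conv}(S_d\cdot q_c)$. This set is contained in $\operatorname{conv}\{\pi p^{(j,m)}\}$, since each $\pi q_c=\sum c\,\pi p^{(j,m)}$. Conversely, the set of eigenvalue vectors (in any order) is convex, because the set of reachable $\gamma$ is convex and unitarily invariant, so its diagonals in a fixed basis form a convex set equal to the eigenvalue set by Schur--Horn. It also contains every $\pi p^{(j,m)}$, so it contains their convex hull. The vertices of a convex hull of finitely many points are among those points.
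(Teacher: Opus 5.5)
Your proof is correct. The forward direction is the paper's. The paper kills the cross terms by cyclicity of the partial trace over the first $N-1$ factors, using $R_{(j',m')}R_{(j,m)}=0$, rather than by your Schur-lemma argument on inequivalent $S_{N-1}$-isotypic components, but both work.

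The converse and the polytope statement take a different route.
\begin{itemize}
\item \textbf{The paper's route.} It proves the Minkowski identity $\mathcal P\big(\sum_ic_ip^{(i)}\big)=\sum_ic_i\mathcal P(p^{(i)})$ for permutohedra by comparing support functions via the rearrangement inequality. This splits $\spec\gamma$ as $\sum_ic_iq^{(i)}$ with $q^{(i)}\mj p^{(i)}$. Taking the union over $c$ then yields the convex hull of all permutations in the same step.
\item \textbf{Your route.} You use that the set of reachable $\gamma$ is convex and $U(d)$-invariant. You take one Birkhoff decomposition $\spec\gamma=\sum_\pi t_\pi\,\pi q$ and apply it with the same weights to every $p^{(i)}$, after putting them in a common order. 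That common order is exactly what repairs the misaligned peaks of $\sum c\,\Psi^{(j,m)}_-(\rho_0)$ that you noticed. For the polytope, convexity of the set of eigenvalue vectors comes from linearity of the diagonal map together with Schur's inequality.
\end{itemize}
Your version avoids support functions and in effect proves the nontrivial inclusion $\mathcal P(\sum_ic_ip^{(i)})\subseteq\sum_ic_i\mathcal P(p^{(i)})$ directly. The paper's version buys a clean geometric identity that settles the converse and the polytope at once.

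There are three small wording issues, none of which affects the argument.
\begin{itemize}
\item The sentence claiming that $\sum c\,|v_\nu\rangle\langle v_\nu|\otimes\sigma_{(j,m)}$ reaches $\operatorname{diag}(q)$ is false as written. You correct it yourself: the coherent vector must be rotated separately for each box.
\item Writing $\gamma$ as a convex combination of unitary conjugates of $\operatorname{diag}(q)$ needs only a diagonalization of $\gamma$, not Schur--Horn.
\item In the polytope paragraph, the inclusion of the diagonals into the eigenvalue set needs more than Schur's inequality. It also uses the closure of the reachable set under $\spec\gamma'\mj\spec\gamma$, which you established in your second part.
\end{itemize}
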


The bounds of Corollary~\ref{thm:rdm} can cut out a strictly larger region. For $\nu=(5,5,5,4)$, the two removable boxes give $p^{(3,5)}=(\tfrac13,\tfrac13,\tfrac13,0)$ and $p^{(4,4)}=(\tfrac25,\tfrac15,\tfrac15,\tfrac15)$, so the bounds read $s_1\leq\tfrac25$, $s_2\leq\tfrac23$, $s_3\leq1$. The vector $(\tfrac25,\tfrac4{15},\tfrac4{15},\tfrac1{15})$ satisfies all three, but it is the spectrum of no $\gamma$: equality in $s_1$ forces $c$ onto the second box, where $s_2=\tfrac35$.

Pure states suffice as soon as there is room. For a pure state on $V_\nu\otimes S_\nu^{(j,m)}$, the $V_\nu$-marginal can be any density matrix of rank at most $f^{\nu-e_j}=\dim S_{\nu-e_j}$, the number of standard tableaux of shape $\nu-e_j$. By Carath\'eodory's theorem, applied to the permutohedron of $p^{(j,m)}$, which has dimension at most $d-1$, every density matrix $A$ with $\spec A\mj p^{(j,m)}$ is a mixture of at most $d$ unitary conjugates of $\Psi_-(\rho_0)$, and hence the output of a mixture of at most $d$ coherent states. So pure states reach every $\gamma$ of Theorem~\ref{thm:nested} if $f^{\nu-e_j}\geq d$, and, by the decomposition above, every $\gamma$ of Corollary~\ref{thm:polytope} if this holds for every removable box. For fermions $f^{\nu-e_N}=1$, and the $\gamma$ of pure states satisfy the further constraints of~\cite{AK08} (see Section~\ref{sec:open}).

\begin{remark}[Parastatistics, and the symmetrized density matrix]\label{rem:para}
States of permutation symmetry $\nu$ are also the sectors of Green's parastatistics~\cite{Green53,MG64}; para-Fermi statistics of order $p$ corresponds to diagrams with at most $p$ columns. For identical paraparticles, however, only permutation-invariant observables are physical, and the relevant one-particle density matrix is not $\gamma$ but the average $\gamma_{\mathrm{sym}}$ of the reduced density matrices of the $N$ particles. Its theory is simpler. Since $\Tr(X\,N\gamma_{\mathrm{sym}})=\Tr\big(\varrho\,(\pi_\nu(X)\otimes\one)\big)$ for $X\in\gl_d$, it depends only on the $V_\nu$-reduced density matrix of the state, and the moment-polytope approach of~\cite{AK08} applies to it. For mixed states, the spectra of $N\gamma_{\mathrm{sym}}$ are exactly the vectors majorized by $\nu$: the diagonal of $N\gamma_{\mathrm{sym}}$ in any basis is a convex combination of weights of $V_\nu$, and mixtures of coherent states reach every such vector. In particular, no orbital holds more than $\nu_1$ particles, which for at most $p$ columns is the exclusion principle usually stated for parafermions. Theorem~\ref{thm:nested} and Corollaries~\ref{thm:rdm} and~\ref{thm:polytope} instead concern a single, distinguished particle, as for $N$ distinguishable systems whose joint state has symmetry $\nu$. For fermions all particles have the same reduced density matrix, the two pictures agree, and both reduce to the Pauli principle and Coleman's theorem.
\end{remark}

\begin{remark}[Relation to~\cite{RGKZ26}]\label{rem:grinko}
By Lemma~\ref{lem:nestedspace} and the proof of Corollary~\ref{thm:rdm}, the operator $\one^{\otimes(N-1)}\otimes|e_1\rangle\langle e_1|$ compressed to $V_\nu\otimes S_\nu$, which appears in the introduction, is unitarily equivalent to the direct sum over the removable boxes $(j,m)$ of the operators $T_1^\dagger T_1$ for $\lp=\nu$ and that box, each tensored with the identity on $S_{\nu-e_j}$. For a given box, the largest eigenvalue of $T_a^\dagger T_a$ is the product in \eqref{eq:hook1}, so \eqref{eq:hook1} is the largest eigenvalue of the whole operator. The operator $T_a^\dagger T_a$ commutes with the subgroup fixing $e_a$, a copy of $U(d-1)$, under which $V_{\lp}$ is multiplicity free, and~\cite{RGKZ26} compute its whole spectrum (see their Theorem~1 and Eq.~(14)), from the classical formula for squares of reduced Wigner coefficients~\cite{VilenkinKlimyk}. For $k\geq2$ the relevant operator $\sum_{a\leq k}T_a^\dagger T_a$ is only invariant under the subgroup preserving $\operatorname{span}\{e_1,\dots,e_k\}$, a copy of $U(k)\times U(d-k)$, under which $V_{\lp}$ has multiplicities, so this approach does not extend.
\end{remark}

\subsection{Open problems}\label{sec:open}

\begin{enumerate}[leftmargin=*]
\item \emph{Mixed minimizers.} Theorem~\ref{thm:minimizers} finds all minimizers of $\Psi_\pm$, but only the pure minimizers of $\Phi_\pm$, because complementary channels share their output spectra only on pure inputs. What are the mixed minimizers of $\Phi_\pm$?

\item \emph{Pure states.} If $f^{\nu-e_j}\geq d$ for every removable box, the pure states in $V_\nu\otimes S_\nu$ reach the whole polytope of Corollary~\ref{thm:polytope}, and for $\nu=(1^N)$ they satisfy the further constraints of~\cite{AK08}. For which $\nu$ in between do pure states reach the whole polytope, and what are the constraints when they do not?

\item \emph{Other groups.} Let $G = \mathrm{Sp}(2r)$ or $G = \mathrm{SO}(n)$, with $W$ the defining representation, and let $V_\mu \subseteq V_\nu \otimes W$ with $\mu = \nu + e_i$ (for $\mathrm{SO}(2r)$ with $\nu_r \geq 0$). The two channels out of $V_\mu$ are restrictions of the corresponding $U(d)$ channels, so Theorem~\ref{thm:main} holds for them, with the spectra~\eqref{eq:hooks}. For the two channels out of $V_\nu$, coherent states maximize the largest output eigenvalue, by the argument of Remark~\ref{rem:rectangle} and because $T_i v_\mu$ is a multiple of $v_\nu$ (Lemma~\ref{lem:hw}). We conjecture that for $G = \mathrm{Sp}(2r)$ their outputs also majorize all other outputs; numerically we found no violation in the cases tested, up to $\mu_1 = 12$ for $\mathrm{Sp}(4)$, $\mu_1 = 6$ for $\mathrm{Sp}(6)$ and $\mu_1 = 3$ for $\mathrm{Sp}(8)$. For $\mathrm{SO}(n)$ numerical examples show that this can fail, for instance for $\mathrm{SO}(5)$ with $\nu = (7,1)$, $\mu = (7,2)$, and the same holds for $\mathrm{Spin}(n)$, for instance for $\mathrm{Spin}(6)$ with $\nu = (\tfrac52, \tfrac32, \tfrac12)$, $\mu = (\tfrac52, \tfrac52, \tfrac12)$.

\item \emph{Other tensor factors.} If $\Cd$ is replaced by $W=\wedge^h\Cd$ or $W=\operatorname{Sym}^h\Cd$, then $V_\lambda\otimes W$ is still multiplicity free, so the four channels are still defined for each of its summands. But the coherent-state output no longer always majorizes all others. For $d=5$, $W=\wedge^2\C^5$, $\lambda=(1,1,1,0,0)$ and the summand $\lp=(2,1,1,1,0)$, it fails for the two channels out of $V_{\lp}$, already for the largest output eigenvalue: the coherent state gives $\tfrac13$, while another state reaches $\tfrac25$. For which summands does the coherent-state output majorize all others?
\end{enumerate}

\section{Proofs}\label{sec:proof}

\subsection{Strategy}\label{sec:strategy}

Everything reduces to one family of operators on $V_\lambda$, built from the projection $P=TT^\dagger$ of $V_\lambda\otimes\Cd$ onto its summand $V_{\lp}$ (Section~\ref{sec:channels}):
\begin{equation}\label{eq:M}
M_{ab}:=T_aT_b^\dagger=(\one\otimes\langle e_a|)P(\one\otimes|e_b\rangle),\qquad M_I:=\sum_{i\in I}M_{ii}\quad(I\subseteq\{1,\dots,d\}).
\end{equation}
The diagonal terms $M_{ii}$ are positive, so $M_I$ is positive as well and $\|M_I\|$ is its largest eigenvalue, and $\sum_aM_{aa}=\kappa_d\one$ by \eqref{eq:schur}. The proof of Theorems~\ref{thm:main} and~\ref{cor:adding} has the following steps.
\begin{enumerate}[leftmargin=*]
\item \emph{From removing to adding.} By duality, the channels $\Phi_-$ and $\Psi_-$ have the same output spectra as the channels $\Phi_+'$ and $\Psi_+'$ of the dual pair, with coherent states corresponding to coherent states (Lemma~\ref{lem:complement}). So Theorem~\ref{thm:main} is Theorem~\ref{cor:adding} for the dual pair, once the dimension ratios are rewritten in terms of hook lengths (Lemma~\ref{lem:translate}). 
\item \emph{From majorization to an operator norm.} By covariance, the largest value of $s_k(\Psi_+(\sigma))$ is $\kappa_d^{-1}\|M_I\|$ for any $I$ with $|I|=k$, and the coherent-state output is diagonal (Lemma~\ref{lem:reduction}). We therefore need an upper bound on $\|M_I\|$, and the diagonal entries of the coherent-state output, which show that the bound is attained.
\item \emph{Two reductions to a special shape.} The columns to the right of the box are spectators: they can be factored out of $M_{ab}$ exactly (Proposition~\ref{prop:short}). The columns of height at least $j+k$ reach below the rows $j,\dots,j+k-1$ that carry the $k$ largest eigenvalues of the coherent-state output, and deleting them can only increase $\|M_I\|$ (Proposition~\ref{prop:tall}). What remains has the shape \eqref{eq:shape}: its rows down to the box all have the same length, and it has at most $j+k-1$ rows. See Figure~\ref{fig:reductions}.
\item \emph{The upper bound for the special shape.} The operator $\kappa_d^{-1}M_I$ has the same nonzero eigenvalues as $(\one\otimes\Pi_I)Q(\one\otimes\Pi_I)$, with $\Pi_I$ the projection onto $\operatorname{span}\{e_i:i\in I\}$. This is the compression of the projection $Q=SS^\dagger$, for the isometry $S$ of \eqref{eq:S}, onto an isotypic component of $V_{\lp}\otimes\Cdbar$. When the rows down to the box all have length $m$, this projection is dominated by a Casimir-type operator (Lemma~\ref{lem:major}), and the compressions of that operator are controlled by branching to $U(k)$ (Lemma~\ref{lem:compress}). This gives Proposition~\ref{prop:casimir}; the bound is sharp when, in addition, the diagram has at most $j+k-1$ rows. Combined with the reductions of step 3, this bounds $\|M_I\|$ for every diagram (Section~\ref{sec:assembly}).
\item \emph{The coherent state.} Let $n=j+k-1$. Restricted to the $U(n)$-submodule of $V_\lambda$ generated by $v_\lambda$, the projection $P$ becomes the projection onto a single irreducible representation of $U(n)$, and Schur's lemma turns $\sum_{a\leq n}M_{aa}$ into the ratio of dimensions $\kappa_n$ (Proposition~\ref{prop:coherent}). Weyl's dimension formula shows that these ratios match the bound of step 4 (Lemma~\ref{lem:weyl}).
\end{enumerate}
Section~\ref{sec:corproofs} then deduces Corollary~\ref{cor:entropy}, Theorem~\ref{thm:nested} and Corollaries~\ref{thm:rdm} and~\ref{thm:polytope}, and Section~\ref{sec:minimizers} determines the minimizers: optimality forces one eigenvector equation for each of the rows $1,\dots,j$, and these equations remove the rows one at a time until only $\mathcal S$ is left.

\begin{figure}[t]
\centering
\begin{tikzpicture}[scale=0.62]
\fill[red!15] (0,0) rectangle (1,-6);
\fill[blue!12] (5,0) rectangle (6,-2);
\foreach \r/\len in {1/6,2/6,3/5,4/5,5/2,6/1}{
  \foreach \c in {1,...,\len}{ \draw (\c-1,-\r+1) rectangle (\c,-\r); }
}
\node at (4.5,-3.5) {$\bullet$};
\draw[decorate,decoration={brace,amplitude=4pt}] (6.5,-3) -- (6.5,-5) node[midway,right=6pt] {\small rows $j,\dots,j+k-1$};
\end{tikzpicture}
\caption{The two reductions of step 3, for $\lp=(6,6,5,5,2,1)$, the added box $(4,5)$ (marked $\bullet$) and $k=2$. The column to the right of the box (blue) can be deleted exactly (Proposition~\ref{prop:short}). The first column (red) has height $6\geq j+k$, so it reaches below the rows $j,\dots,j+k-1$; deleting it can only increase $\|M_I\|$ (Proposition~\ref{prop:tall}). What remains, $(4,4,4,4,1)$, has the shape \eqref{eq:shape} for which the Casimir bound is sharp.}
\label{fig:reductions}
\end{figure}

\subsection{Reduction to an operator norm}\label{sec:setup}

From \eqref{eq:S}, for a unit vector $\phi\in V_\lambda$,
\begin{equation}\label{eq:gram}
\kappa_d\,\langle e_a|\Psi_+(|\phi\rangle\langle\phi|)|e_b\rangle=\ip{T_b^\dagger\phi}{T_a^\dagger\phi}=\ip{\phi}{M_{ba}\phi}=\ip{P(\phi\otimes e_b)}{P(\phi\otimes e_a)} .
\end{equation}

\begin{lemma}\label{lem:reduction}
For every $I$ with $|I|=k$,
\begin{equation}\label{eq:maxsk}
\max_\sigma\,s_k\big(\Psi_+(\sigma)\big)=\kappa_d^{-1}\|M_I\|,
\end{equation}
where the maximum runs over all density matrices on $V_\lambda$. The coherent-state output $\Psi_+(\sigma_0)$ is diagonal in the basis $(e_a)$, with entries $g_a:=\kappa_d^{-1}\ip{v_\lambda}{M_{aa}v_\lambda}$.
\end{lemma}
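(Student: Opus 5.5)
The plan is to express $s_k$ of the output through Ky Fan's maximum principle and then use covariance to fix the $k$-dimensional subspace. By \eqref{eq:gram}, for any density matrix $\sigma$ on $V_\lambda$ and any orthogonal projection $\Pi$ on $\Cd$ of rank $k$, linearity in $\sigma$ gives $\kappa_d\Tr\big(\overline{\Pi}\,\Psi_+(\sigma)\big)=\Tr\big(\sigma\,\kappa_d\Psi_+^*(\overline\Pi)\big)$. Here we work in the basis $(e_a)$ of $\Cdbar$, so that for $\Pi=\Pi_I$ the operator $\kappa_d\Psi_+^*(\Pi_I)$ is $\sum_{i\in I}M_{ii}=M_I$. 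Ky Fan's principle says $s_k(A)=\max_{\operatorname{rank}\Pi=k}\Tr(\Pi A)$. Therefore
\[
\max_\sigma s_k(\Psi_+(\sigma))=\kappa_d^{-1}\max_{\Pi}\max_\sigma\Tr\big(\sigma\,\kappa_d\Psi_+^*(\Pi)\big)=\kappa_d^{-1}\max_\Pi\big\|\kappa_d\Psi_+^*(\Pi)\big\|,
\]
because the maximum over density matrices of $\Tr(\sigma X)$ is the largest eigenvalue of $X\geq0$.

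Next comes covariance. Every rank-$k$ projection on $\Cdbar$ is $\overline U\,\Pi_I\,\overline U^\dagger$ for some $U\in U(d)$ (conjugation by $\overline U$ reaches all of them as $U$ varies). Covariance of $\Psi_+$ gives $\Psi_+^*(\overline U\Pi_I\overline U^\dagger)=\pi_\lambda(U)^\dagger\Psi_+^*(\Pi_I)\pi_\lambda(U)$, which is unitarily equivalent to $\Psi_+^*(\Pi_I)$ and has the same norm. This also shows that the norm does not depend on which $I$ of size $k$ is chosen. Hence the maximum equals $\kappa_d^{-1}\|M_I\|$, which is \eqref{eq:maxsk}. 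The one point to check carefully is the exact conjugation convention (whether $U$ or $\overline U$ acts, and whether one uses $M_{ab}$ or $M_{ba}$), but since everything is unitarily equivalent this cannot affect the norm.

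For diagonality, \eqref{eq:gram} gives the off-diagonal entries $\kappa_d^{-1}\ip{v_\lambda}{M_{ba}v_\lambda}$. Let $t=\operatorname{diag}(t_1,\dots,t_d)$ be in the maximal torus. Intertwining gives $T\pi_{\lp}(t)=(\pi_\lambda(t)\otimes t)T$, so $T_a\pi_{\lp}(t)=t_a\pi_\lambda(t)T_a$, and taking adjoints, $T_b^\dagger\pi_\lambda(t)=t_b\,\pi_{\lp}(t)T_b^\dagger$. Thus $T_b^\dagger$ maps weight $\theta$ to weight $\theta+e_b$. So $T_a^\dagger v_\lambda$ and $T_b^\dagger v_\lambda$ lie in different weight spaces of $V_{\lp}$ when $a\neq b$, and weight spaces are orthogonal. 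Hence $\ip{T_b^\dagger v_\lambda}{T_a^\dagger v_\lambda}=0$, the output is diagonal, and its diagonal entries are $g_a=\kappa_d^{-1}\ip{v_\lambda}{M_{aa}v_\lambda}$.

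None of these steps is a serious obstacle. The only subtlety is the bookkeeping of the dual action on $\Cdbar$ in the covariance step.
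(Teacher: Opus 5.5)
Your proof is correct and is essentially the paper's argument: Ky Fan's principle, covariance under $\overline U$ to show that $\|M_\Pi\|$ depends only on the rank of $\Pi$, and the distinct weights $\lambda+e_a$ of the vectors $T_a^\dagger v_\lambda$ for diagonality. Working with $\Psi_+^*$ and swapping the two maxima also lets you skip the paper's reduction to pure states. One small slip: taking adjoints of $T_a\pi_{\lp}(t)=t_a\pi_\lambda(t)T_a$ gives $T_b^\dagger\pi_\lambda(t)=\overline{t_b}\,\pi_{\lp}(t)T_b^\dagger$ (with $\overline{t_b}=t_b^{-1}$), not $t_b$, and it is this corrected relation that gives your correct conclusion that $T_b^\dagger$ raises weights by $e_b$.
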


\begin{proof}
It suffices to consider pure states, because $s_k$ is convex. For a Hermitian $A$ on $\Cd$, $s_k(A)=\max_\Pi\Tr(\Pi A)$, the maximum running over the rank-$k$ orthogonal projections $\Pi$, by Ky Fan's maximum principle~\cite[Ch.~20]{MOA}. By \eqref{eq:gram},
\[
\Tr\big(\Pi\,\Psi_+(|\phi\rangle\langle\phi|)\big)=\kappa_d^{-1}\ip{\phi}{M_\Pi\phi},\qquad M_\Pi:=\sum_{a,b}\Pi_{ab}M_{ab},\quad \Pi_{ab}:=\langle e_a|\Pi|e_b\rangle .
\]
Covariance of $T$ reads $T_a\pi_{\lp}(U)=\pi_\lambda(U)\sum_bU_{ab}T_b$, and this gives $\pi_\lambda(U)^\dagger M_\Pi\,\pi_\lambda(U)=M_{\Pi'}$ with $\Pi'=\overline U^\dagger\Pi\,\overline U$, the conjugate action inherited from $\Cdbar$. As $\overline U$ runs over $U(d)$, $\Pi'$ runs over all projections of the same rank, so $\|M_\Pi\|$ depends only on the rank of $\Pi$ and equals $\|M_I\|$. Finally, the coherent-state output is diagonal because the vectors $T_a^\dagger v_\lambda$ have the distinct weights $\lambda+e_a$.
\end{proof}

So Theorem~\ref{cor:adding} follows from two statements: an upper bound on $\|M_I\|$,
\begin{equation}\label{eq:target}
\|M_I\|\leq\kappa_{j+k-1}\qquad\text{for }|I|=k\leq d-j+1,
\end{equation}
and the values of the $g_a$,
\begin{equation}\label{eq:target2}
g_a=0\ \text{ for }a<j,\qquad g_a>0\ \text{ for }a\geq j,\qquad g_j+\dots+g_{j+k-1}=\kappa_{j+k-1}/\kappa_d .
\end{equation}
The first is proved in Sections~\ref{sec:reduction} and~\ref{sec:casimir} and assembled in Section~\ref{sec:assembly}, the second in Section~\ref{sec:coherent}.

\subsection{Highest weight vectors and dimension ratios}\label{sec:facts}

When $T$, $P$, $M_{ab}$ and $\kappa_n$ are formed for a diagram $\beta$ other than $\lambda$, always with the box added in the same row $j$ (so that $\beta+e_j$ must be a Young diagram), we write $\beta$ as a superscript or argument: $T^\beta$, $P^\beta$, $M^\beta_{ab}$, $\kappa_n(\beta)$. We also recall that $V_{\alpha+\beta}$ occurs exactly once in $V_\alpha\otimes V_\beta$ and is generated by $v_\alpha\otimes v_\beta$, since $\alpha+\beta$ is the highest weight of the tensor product. We call it the \emph{Cartan component} and write $V_{\alpha+\beta}\subseteq V_\alpha\otimes V_\beta$.

Both \eqref{eq:target} and \eqref{eq:target2} use the subgroup $U(n)\subseteq U(d)$, through the $\gl_n$-submodule $L_n(\mu)\subseteq V_\mu$ generated by $v_\mu$. This submodule is irreducible, with highest weight $(\mu_1,\dots,\mu_n)$ and dimension $D_n(\mu)$. The following facts will be used throughout.

\begin{lemma}\label{lem:index}
Every weight vector of $V_\mu$ whose weight agrees with $\mu$ in the coordinates $n+1,\dots,d$ lies in $L_n(\mu)$.
\end{lemma}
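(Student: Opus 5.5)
The plan is to use the spanning set recalled in Section~\ref{sec:notation}: $V_\mu$ is spanned by the vectors $E_{a_1b_1}\cdots E_{a_rb_r}v_\mu$ with $a_s>b_s$. Each lowering operator $E_{ab}$ with $a>b$ changes the weight by $e_a-e_b$. Such a monomial therefore has weight
\[
\mu+\sum_{s}(e_{a_s}-e_{b_s}),
\]
and $V_\mu$ is the direct sum of its weight spaces. It follows that the weight space of a given weight $\theta$ is spanned by those monomials whose total weight equals $\theta$. It therefore suffices to show that every monomial of weight $\theta$, where $\theta_c=\mu_c$ for all $c>n$, involves only indices $a_s,b_s\leq n$. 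Such a monomial is $\pi_\mu(X)v_\mu$ for some $X$ in the universal enveloping algebra of $\gl_n$, and hence lies in $L_n(\mu)$.

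The key step is a counting argument on the tail coordinates. Consider the linear functional
\[
\ell(\theta):=\sum_{c>n}\theta_c .
\]
A lowering step $e_a-e_b$ with $a>b$ changes $\ell$ as follows:
\begin{itemize}
\item by $+1$ if $a>n\geq b$;
\item by $0$ if both $a,b>n$ or both $a,b\leq n$.
\end{itemize}
It never decreases $\ell$, since $b<a$ rules out the case $b>n\geq a$. Now suppose $\theta$ agrees with $\mu$ beyond $n$. Then $\ell(\theta)=\ell(\mu)$, so no factor of the monomial has $a_s>n\geq b_s$. The remaining possibility is a factor with both $a_s,b_s>n$. This does not change $\ell$, so the functional alone does not exclude it.

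To exclude this case, I will use a second functional that strictly increases under such steps. The natural choice is a dominance-type partial sum, weighted by position in the tail:
\[
\ell'(\theta):=\sum_{c>n}(c-n)\,\theta_c .
\]
A step $e_a-e_b$ with $a>b>n$ increases $\ell'$ by $a-b>0$. A step with $a>n\geq b$ increases it by $a-n>0$. A step inside $\{1,\dots,n\}$ leaves it unchanged. So every factor that touches an index beyond $n$ strictly increases $\ell'$. Since $\theta$ agrees with $\mu$ in those coordinates, $\ell'(\theta)=\ell'(\mu)$, and therefore no such factor occurs. This single functional in fact handles both cases at once.

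I expect no real obstacle. The only point needing care is the justification that the weight space is spanned by the monomials of that exact weight, and this is just the projection of the spanning set onto the weight-space decomposition.
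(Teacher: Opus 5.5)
Your proposal is correct and follows essentially the same route as the paper: you expand in lowering monomials applied to $v_\mu$ and use weight functionals on the tail coordinates that never decrease along a monomial. Your single functional $\ell'(\theta)=\sum_{c>n}(c-n)\theta_c$ is exactly the sum over $t>n$ of the tail sums $\sum_{c\geq t}\theta_c$. The paper instead keeps these tail sums separate and applies the one with $t=a_s$ to each factor; this is only a difference of packaging.
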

\begin{proof}
Such a vector is a linear combination of monomials $E_{a_1b_1}\cdots E_{a_rb_r}v_\mu$ with $a_s>b_s$, each of the given weight. Each factor $E_{ab}$ moves one unit of weight from the coordinate $b$ to the later coordinate $a$. So for every $t$, the sum of the coordinates $t,\dots,d$ of the weight can only grow along the monomial, and it grows at each factor with $b_s<t\leq a_s$. For $t>n$ these sums end where they started, since the weight agrees with $\mu$ beyond the $n$-th coordinate. Taking $t=a_s$ shows that no factor has $a_s>n$. So all indices are at most $n$, and the monomial lies in $L_n(\mu)$.
\end{proof}

\begin{lemma}\label{lem:branch}
Every irreducible constituent of $V_\mu|_{U(n)}$ has a highest weight $\nu$ with $\nu_a\leq\mu_a$ for $a\leq n$.
\end{lemma}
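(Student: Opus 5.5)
The plan is to derive the lemma from the classical branching rule for $U(d)\supset U(d-1)$ and then iterate down the chain $U(d)\supset U(d-1)\supset\dots\supset U(n)$. I will not try to compute weights directly: an arbitrary weight of $V_\mu$ need not satisfy $\theta_a\leq\mu_a$ coordinatewise, because weights are permuted by $S_d$. The highest weight condition for the subgroup is essential, and the branching rule encodes exactly that.

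First I recall the branching rule (Weyl, Gelfand--Tsetlin; see~\cite[Ch.~8]{GoodmanWallach}). The restriction of $V_\mu$ to $U(d-1)$, embedded as in Section~\ref{sec:notation}, is multiplicity free. Its constituents are the irreducible representations with highest weights $\nu=(\nu_1,\dots,\nu_{d-1})$ that interlace $\mu$:
\[
\mu_1\geq\nu_1\geq\mu_2\geq\nu_2\geq\dots\geq\nu_{d-1}\geq\mu_d .
\]
In particular, every constituent satisfies $\nu_a\leq\mu_a$ for all $a\leq d-1$. This is the case $n=d-1$ of the lemma. The same rule is what makes the Gelfand--Tsetlin basis already invoked in the paper well defined, so it may be taken as known.

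Next I argue by downward induction on $n$. Write $V_\mu|_{U(n+1)}=\bigoplus_\beta W_\beta$, a sum of irreducibles whose highest weights satisfy $\beta_a\leq\mu_a$ for $a\leq n+1$ by the induction hypothesis. Restricting further to $U(n)\subset U(n+1)$, every irreducible constituent of $V_\mu|_{U(n)}$ is a constituent of some $W_\beta|_{U(n)}$. The one-step rule, applied to $U(n+1)\supset U(n)$, gives its highest weight $\nu$ with $\nu_a\leq\beta_a\leq\mu_a$ for $a\leq n$. The only thing to check is that the embeddings are compatible: $U(n)\subset U(n+1)\subset U(d)$, each acting on the first coordinates. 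This holds by the conventions of Section~\ref{sec:notation}.

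I do not expect a real obstacle. The only point needing care is the appeal to the interlacing rule itself. If a self-contained argument were preferred, I would instead take a $\gl_n$-highest weight vector $w$ and decompose it into full $\gl_d$-weight vectors. This is possible because the $E_{aa}$ with $a>n$ commute with $\gl_n$, so the $\gl_n$-highest weight space is stable under them. One would then bound $\nu_1+\dots+\nu_t$ by $\mu_1+\dots+\mu_t$ for each $t$ using the spanning monomials, as in the proof of Lemma~\ref{lem:index}. However, that argument only yields a dominance-type bound, and upgrading it to the coordinatewise inequality $\nu_a\leq\mu_a$ is exactly what interlacing provides. So I would cite the branching rule.
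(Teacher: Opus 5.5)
Your proposal is correct and matches the paper's proof: both cite the Gelfand--Tsetlin interlacing rule for $U(d)\supset U(d-1)$, which gives $\nu_a\leq\mu_a$ at each step, and then iterate down to $U(n)$. Your downward induction, with the chain $\nu_a\leq\beta_a\leq\mu_a$ and the compatible embeddings, is just a written-out version of the paper's ``Now iterate.''
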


\begin{proof}
By the Gelfand--Tsetlin branching rule, the constituents of $V_\mu|_{U(d-1)}$ are the $\nu$ that interlace $\mu$ \cite[\S8.1]{GoodmanWallach}. Now iterate.
\end{proof}

\begin{lemma}\label{lem:hw}
Let $x_a:=T_av_{\lp}$, so that $Tv_{\lp}=\sum_ax_a\otimes e_a$. Then $x_a$ has weight $\lp-e_a$, it vanishes for $a>j$, and it lies in $L_j(\lambda)$ for $a\leq j$.
\end{lemma}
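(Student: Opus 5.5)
We have to prove Lemma~\ref{lem:hw}: the vectors $x_a=T_av_{\lp}$ have weight $\lp-e_a$, they vanish for $a>j$, and they lie in $L_j(\lambda)$ for $a\le j$.

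\emph{Weights.} We use the covariance of $T$ restricted to the diagonal torus. For $t=\operatorname{diag}(t_1,\dots,t_d)$ the intertwining relation reads
\[
(\pi_\lambda(t)\otimes t)\,Tv_{\lp}=t^{\lp}\,Tv_{\lp}.
\]
Since $t\,e_a=t_ae_a$, comparing the components along $e_a$ gives $\pi_\lambda(t)x_a=t^{\lp-e_a}x_a$. So $x_a$ is either zero or a weight vector of $V_\lambda$ of weight $\lp-e_a$.

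\emph{Vanishing for $a>j$.} Suppose $a>j$ and $x_a\neq0$. Then $\lp-e_a$ is a weight of $V_\lambda=V_{\lp-e_j}$, so it lies in the convex hull of the $S_d$-orbit of $\lambda$. Every such weight is dominated by $\lambda$: its partial sums over the first $t$ coordinates are at most those of $\lambda$. Compare the two weights at $t=j$:
\[
\sum_{i\le j}(\lp-e_a)_i=\sum_{i\le j}\lp_i>\sum_{i\le j}\lp_i-1=\sum_{i\le j}\lambda_i .
\]
This contradicts dominance, so $x_a=0$.

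As an alternative, we can argue with raising operators. Because $T$ intertwines $\gl_d$, we have
\[
(E_{bc}\otimes\one+\one\otimes E_{bc})\,Tv_{\lp}=0\qquad(b<c).
\]
Taking the $e_b$-component gives $E_{bc}x_b+x_c=0$. The dominance argument is cleaner, so we use that one.

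\emph{Membership in $L_j(\lambda)$ for $a\le j$.} The weight $\lp-e_a$ agrees with $\lambda$ in the coordinates $j+1,\dots,d$. Indeed, for $i>j$ we have $a\neq i$, so $(\lp-e_a)_i=\lp_i=\lambda_i$. Lemma~\ref{lem:index} with $n=j$ then places $x_a$ in $L_j(\lambda)$.

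The only step needing care is the vanishing for $a>j$, which rests on the dominance description of the weights. That description follows from the standard fact quoted in the notation section: weights lie in the convex hull of the Weyl orbit, and every point of that hull is dominated by $\lambda$. No real obstacle is expected.
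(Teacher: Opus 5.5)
Your proposal is correct and follows the paper's proof: you read off the weight $\lp-e_a$ from torus covariance, rule out $a>j$ because $\lp-e_a$ strictly exceeds $\lambda$ in the dominance order (your partial-sum check at $t=j$ spells out what the paper states in one line), and place $x_a$ in $L_j(\lambda)$ for $a\le j$ via Lemma~\ref{lem:index} with $n=j$. The raising-operator aside is correct but unused, so you can drop it.
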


\begin{proof}
$Tv_{\lp}$ has weight $\lp$, so $x_a$ has weight $\lp-e_a=\lambda+e_j-e_a$. For $a>j$ this weight exceeds $\lambda$ in the dominance order, so $x_a=0$. For $a\leq j$ it agrees with $\lambda$ beyond the $j$-th coordinate, and Lemma~\ref{lem:index} applies.
\end{proof}

\begin{lemma}[Weyl ratios]\label{lem:weyl}
With $\omega_a:=\lambda_a-a$ and $j\leq n\leq d$,
\begin{equation}\label{eq:weyl}
\kappa_n=\prod_{a<j}\frac{\omega_a-\omega_j-1}{\omega_a-\omega_j}\prod_{j<b\leq n}\frac{\omega_j-\omega_b+1}{\omega_j-\omega_b},\qquad\text{so that}\qquad
\frac{\kappa_n}{\kappa_d}=\prod_{n<b\leq d}\frac{\omega_j-\omega_b}{\omega_j-\omega_b+1}.
\end{equation}
In particular: \emph{(a)} $\kappa_n/\kappa_d$ depends only on $\lambda_j,\dots,\lambda_d$; \emph{(b)} $\kappa_n$ does not change if a column of height $\geq n$ is deleted from $\lambda$; \emph{(c)} if $\lambda_b=0$ for $b>n$, then $\kappa_n/\kappa_d=\frac{m+n-j}{m+d-j}$, and otherwise $\kappa_n/\kappa_d$ is strictly smaller.
\end{lemma}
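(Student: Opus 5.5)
The plan is a direct computation from Weyl's dimension formula for $U(n)$, followed by reading off (a)--(c). For a highest weight $\mu$ of $U(n)$, put $\ell_a:=\mu_a-a$. Weyl's formula then reads
\[
D_n(\mu)=\prod_{1\leq a<b\leq n}\frac{\ell_a-\ell_b}{b-a}.
\]
The denominators are the same for $\lp$ and $\lambda$, so they cancel in $\kappa_n=D_n(\lp)/D_n(\lambda)$. Passing from $\lambda$ to $\lp$ changes only $\omega_j$, which becomes $\omega_j+1$. Hence only the factors involving the index $j$ survive in the ratio. For $a<j$ the factor is $(\omega_a-\omega_j-1)/(\omega_a-\omega_j)$, and for $j<b\leq n$ it is $(\omega_j-\omega_b+1)/(\omega_j-\omega_b)$. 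This is the first formula in \eqref{eq:weyl}. We need $j\leq n$ so that the index $j$ occurs at all. All denominators are nonzero because the $\omega_a$ are strictly decreasing. Dividing the formula for $n$ by the one for $d$, the factors with $a<j$ cancel, and so do those with $j<b\leq n$. What remains are the reciprocals of the factors with $n<b\leq d$, which is the second formula.

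Statement (a) is now immediate from the second formula, since $\omega_j-\omega_b=\lambda_j-\lambda_b+b-j$ involves only $\lambda_j,\dots,\lambda_d$.

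For (b), deleting a column of height $h\geq n$ lowers $\lambda_1,\dots,\lambda_h$ by one. So all $\omega_a$ with $a\leq n$ shift by the same amount. Every factor in the first formula involves only differences of $\omega$'s with indices $\leq n$, so $\kappa_n$ is unchanged. One should check that $\lambda+e_j$ is still a Young diagram after the deletion; this holds because a whole column is removed uniformly from rows $1,\dots,h\geq j$.

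For (c), first assume $\lambda_b=0$ for $b>n$. Since $m=\lp_j$, we have $\lambda_j=m-1$, so $\omega_j-\omega_b=m-1+b-j$. The product over $n<b\leq d$ of $(m-1+b-j)/(m+b-j)$ telescopes to $(m+n-j)/(m+d-j)$. In general, each factor has the form $x/(x+1)$ with $x=\omega_j-\omega_b>0$, and this expression is increasing in $x$. Moreover $x=\lambda_j-\lambda_b+b-j\leq m-1+b-j$, with equality if and only if $\lambda_b=0$. So if some $\lambda_b>0$ with $b>n$, that factor is strictly smaller while all other factors are at most their values in the extremal case. Since all factors are positive, the product is strictly smaller.

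There is no real obstacle here. The only care needed is the bookkeeping of which factors survive in the ratios and the positivity and monotonicity argument in (c).
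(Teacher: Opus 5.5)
Your proof is correct and follows essentially the paper's argument. You use Weyl's dimension formula, in which only the factors involving $j$ survive in the ratio; a uniform shift of $\omega_1,\dots,\omega_n$ for (b); and telescoping plus monotonicity of $x/(x+1)$ for (c), where your remark that every other factor is also bounded by its extremal value makes the strict inequality slightly more explicit than the paper does.
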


\begin{proof}
Weyl's dimension formula $D_n(\mu)=\prod_{a<b\leq n}\frac{(\mu_a-a)-(\mu_b-b)}{b-a}$~\cite[\S7.1.2]{GoodmanWallach} gives \eqref{eq:weyl}, because adding the box raises $\omega_j$ by one and changes only the factors that involve $j$. Then (a) is immediate, and (b) holds because deleting such a column lowers $\omega_1,\dots,\omega_n$ by the same amount. For (c), $\omega_j-\omega_b=m-1-j+b$, and the product telescopes. If $\lambda_b>0$ for some $b>n$, the corresponding factor $x/(x+1)$, with $x=\omega_j-\omega_b$, is smaller, since $x$ is.
\end{proof}

\subsection{Deleting columns}\label{sec:reduction}

The bound \eqref{eq:target} will be proved in Section~\ref{sec:casimir} under two hypotheses on the shape of the diagram,
\begin{equation}\label{eq:shape}
\lp_1=\dots=\lp_j\ (=m),\qquad \lp_{j+k}=0 .
\end{equation}
That is, all rows down to the added box have length $m$, and the diagram has at most $j+k-1$ rows. In this section we remove the columns that violate these hypotheses: those of height $<j$, which lie to the right of the box, and those of height $\geq j+k$, which lie to its left (Figure~\ref{fig:reductions}).

\begin{proposition}[Columns shorter than the box]\label{prop:short}
Let $h<j$, let $\hat\lambda$ be a Young diagram such that $\hat\lambda+e_j$ is also a Young diagram, and let $\lambda=\hat\lambda+(1^h)$. Let $\iota:V_\lambda\to V_{\hat\lambda}\otimes\wedge^h\Cd$ be the Cartan embedding. Then
\begin{equation}\label{eq:short}
\frac{M^\lambda_{ab}}{\kappa_d(\lambda)}=\iota^\dagger\Big(\frac{M^{\hat\lambda}_{ab}}{\kappa_d(\hat\lambda)}\otimes\one\Big)\iota\qquad\text{for all }a,b .
\end{equation}
In particular, $\|M^\lambda_I\|/\kappa_d(\lambda)\leq\|M^{\hat\lambda}_I\|/\kappa_d(\hat\lambda)$ for every $I$.
\end{proposition}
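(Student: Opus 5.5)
Since $h<j$, the column $(1^h)$ sits to the right of the box, and $\lp=\lambda+e_j=(\hat\lambda+e_j)+(1^h)$. So there are Cartan embeddings $\iota:V_\lambda\to V_{\hat\lambda}\otimes\wedge^h\Cd$ and $\iota^+:V_{\lp}\to V_{\hat\lambda+e_j}\otimes\wedge^h\Cd$. The plan is to show that the intertwiner for $\lambda$ is obtained from the one for $\hat\lambda$ by tensoring with the identity on $\wedge^h\Cd$ and compressing. Concretely, consider
\[
\tilde T:=(\iota^\dagger\otimes\one_{\Cd})\,(\one_{\wedge^h}\text{-swapped }T^{\hat\lambda}\otimes\one_{\wedge^h})\,\iota^+ : V_{\lp}\to V_\lambda\otimes\Cd .
\]
Here $T^{\hat\lambda}\otimes\one$ maps $V_{\hat\lambda+e_j}\otimes\wedge^h\Cd$ to $V_{\hat\lambda}\otimes\Cd\otimes\wedge^h\Cd$, the factors are reordered so that $\wedge^h$ sits next to $V_{\hat\lambda}$, and then $\iota^\dagger$ projects onto $V_\lambda$. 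This map is an intertwiner. By uniqueness (Pieri), $\tilde T=c\,T^\lambda$ for some scalar $c$. In components this says
\[
c\,T^\lambda_a=\iota^\dagger(T^{\hat\lambda}_a\otimes\one)\iota^+ .
\]

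The key step, and the expected obstacle, is to remove $\iota^+$ and $\iota^\dagger$ from the middle when forming $M^\lambda_{ab}=T^\lambda_aT^{\lambda\dagger}_b$. That is, I need
\[
(T^{\hat\lambda}_b\otimes\one)^\dagger\,\iota\ \text{ lies in the range of }\iota^+ .
\]
Equivalently, the image of $V_\lambda$ under the adjoint map $S$-type operator $\sum_b(T^{\hat\lambda\dagger}_b\otimes\one)\iota(\cdot)\otimes e_b$ lands in $V_{\lp}\otimes\Cdbar$ inside $V_{\hat\lambda+e_j}\otimes\wedge^h\otimes\Cdbar$. I would argue this with highest weights. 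The operator $\phi\mapsto\sum_b (T^{\hat\lambda\dagger}_b\otimes\one)\iota\phi\otimes e_b$ is an intertwiner from $V_\lambda$ into $(V_{\hat\lambda+e_j}\otimes\wedge^h\Cd)\otimes\Cdbar$. Its $V_\lambda$-isotypic part in $(V_{\hat\lambda+e_j}\otimes\wedge^h\Cd)\otimes\Cdbar$ must come from constituents $V_\mu$ of $V_{\hat\lambda+e_j}\otimes\wedge^h\Cd$ with $\lambda=\mu-e_r$. Among these $\mu$, only $\mu=\lp$ satisfies $\mu\le\hat\lambda+e_j+(1^h)$ componentwise in the relevant sense.

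To check this, compute the image of $v_\lambda=v_{\hat\lambda}\otimes e_1\wedge\dots\wedge e_h$. By Lemma~\ref{lem:hw} applied to the dual picture (the $T^\dagger_b v_{\hat\lambda}$ have weight $\hat\lambda+e_b$ and, for $b=j$, equal a multiple of $v_{\hat\lambda+e_j}$), the vector has components of weight $\lp$ in the $e_j$ slot. Rather than chase all constituents, I would use that the image of $v_\lambda$ is a highest-weight-type vector built from $v_{\hat\lambda+e_j}\otimes e_1\wedge\dots\wedge e_h$ and lowering operators in rows $\ge j>h$, which act trivially on $e_1\wedge\dots\wedge e_h$. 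If this argument is too loose, the fallback is Lemma~\ref{lem:index}-style weight bookkeeping showing the generated module is the Cartan one.

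Granting this, I get
\[
|c|^2 M^\lambda_{ab}=\iota^\dagger(T^{\hat\lambda}_a\iota^+\iota^{+\dagger}T^{\hat\lambda\dagger}_b\otimes\cdots)\iota=\iota^\dagger(M^{\hat\lambda}_{ab}\otimes\one)\iota .
\]
Taking the trace over $a=b$ and using \eqref{eq:schur} for both diagrams together with $\iota^\dagger\iota=\one$ gives $|c|^2\kappa_d(\lambda)=\kappa_d(\hat\lambda)$, which is \eqref{eq:short}. The norm inequality then follows because compression by the isometry $\iota$ does not increase the norm, and $\|M^{\hat\lambda}_I\otimes\one\|=\|M^{\hat\lambda}_I\|$.
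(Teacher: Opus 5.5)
Your proposal is correct and follows the paper's proof: the same intertwiner $\iota^\dagger(T^{\hat\lambda}\otimes\one)\iota^+$, uniqueness by Pieri's rule, removal of $\iota^+\iota^{+\dagger}$ by showing that $(T^{\hat\lambda\dagger}_b\otimes\one)\iota$ lands in the Cartan component, and normalization by summing over $a=b$ (your isotypic argument in $V_{\hat\lambda+e_j}\otimes\wedge^h\Cd\otimes\Cdbar$ is the paper's, with $\Cd$ moved to the other side). The one step to fix is the selection of $\mu$: the relevant condition is $\mu\geq\lambda$ componentwise, not $\mu\leq\hat\lambda+e_j+(1^h)$; with $\mu=\hat\lambda+e_j+e_J$, $|J|=h$, from Pieri's rule for $\wedge^h\Cd$, it reads $e_j+e_J\geq(1^h)$, which forces $J=\{1,\dots,h\}$ and $\mu=\lp$ precisely because $j>h$ --- this is where the hypothesis $h<j$ enters, and it makes your highest-weight fallback unnecessary.
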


In words: the column of height $h$ is carried along as a spectator, and the operators $M_{ab}$ do not notice it.

\begin{proof}
Let $\iota^+:V_{\lp}\to V_{\hat\lambda+e_j}\otimes\wedge^h\Cd$ be the Cartan embedding, and regard $T^{\hat\lambda}\otimes\one$ as a map into $V_{\hat\lambda}\otimes\wedge^h\Cd\otimes\Cd$ by moving the last two factors past each other. Then
\[
\iota^\dagger\,(T^{\hat\lambda}\otimes\one)\,\iota^+:V_{\lp}\to V_\lambda\otimes\Cd
\]
is an intertwiner, hence equal to $\theta T$ for some scalar $\theta$, and its $a$-th component is $\iota^\dagger(T^{\hat\lambda}_a\otimes\one)\iota^+$.

Multiplying two such components gives $|\theta|^2M^\lambda_{ab}$, and we claim that the factor $\iota^+\iota^{+\dagger}$ in the middle may be dropped. To see this, consider the adjoint construction $\phi\mapsto(T^{\hat\lambda\dagger}_b\otimes\one)\iota\phi$. It is the $b$-th component of an intertwiner $V_\lambda\otimes\Cd\to V_{\hat\lambda+e_j}\otimes\wedge^h\Cd$. The source has the constituents $V_{\lambda+e_l}$, and by Pieri's rule for $\wedge^h$ the target has the constituents $V_{\hat\lambda+e_j+e_J}$ with $e_J:=\sum_{i\in J}e_i$ and $|J|=h$. A common constituent requires $(1^h)+e_l=e_j+e_J$, and since $j>h$ this forces $l=j$ and $J=\{1,\dots,h\}$. So the intertwiner kills $V_{\lambda+e_l}$ for $l\neq j$ and maps $V_{\lp}$ into the Cartan component, which is the range of $\iota^+$. Hence $|\theta|^2M^\lambda_{ab}=\iota^\dagger(M^{\hat\lambda}_{ab}\otimes\one)\iota$. Summing over $a=b$ gives $|\theta|^2\kappa_d(\lambda)=\kappa_d(\hat\lambda)$ by \eqref{eq:schur}, and dividing the two identities gives \eqref{eq:short}.
\end{proof}

\begin{corollary}\label{cor:cap}
Let $\tilde\lambda_a:=\min(\lambda_a,m)$, the diagram $\lambda$ with the columns to the right of column $m$ deleted. Then $\tilde\lambda$ agrees with $\lambda$ in the rows $a\geq j$, the diagram $\tilde\lambda+e_j$ satisfies the first hypothesis in \eqref{eq:shape}, and
\[
\frac{\|M^\lambda_I\|}{\kappa_d(\lambda)}\leq\frac{\|M^{\tilde\lambda}_I\|}{\kappa_d(\tilde\lambda)}\qquad\text{for every }I.
\]
\end{corollary}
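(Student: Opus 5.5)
The plan is to apply Proposition~\ref{prop:short} once for each column of $\lambda$ to the right of column $m$, deleting these columns one at a time from the right, and to chain the resulting inequalities.

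\emph{Setup.} The columns of $\lambda$ to the right of column $m$ are columns $c=m+1,\dots,\lambda_1$. Column $c$ has height $h_c=\#\{a:\lambda_a\geq c\}$. Since $\lp=\lambda+e_j$ is a Young diagram with $\lp_j=m$, we have $\lambda_{j-1}\geq m$ and $\lambda_j=m-1$. Hence no row $a\geq j$ reaches column $c>m$, so $h_c<j$. Let $\lambda^{(t)}$ be $\lambda$ with its rightmost $t$ columns removed, so that $\lambda^{(0)}=\lambda$ and $\lambda^{(r)}=\tilde\lambda$ with $r=\lambda_1-m$ (if $r\leq0$ there is nothing to prove). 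Then $\lambda^{(t)}=\lambda^{(t+1)}+(1^{h})$, where $h$ is the height of column $\lambda_1-t$, and $h<j$.

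\emph{Checking the hypotheses.} To apply Proposition~\ref{prop:short} with $\hat\lambda=\lambda^{(t+1)}$, we need $\hat\lambda$ and $\hat\lambda+e_j$ to be Young diagrams. The first is clear, since removing a rightmost column leaves a diagram. For the second, removing columns to the right of column $m$ does not touch rows $a\geq j$ and keeps $\hat\lambda_{j-1}\geq m$. Adding a box in row $j$ then gives length $m\leq\hat\lambda_{j-1}$, and row $j+1$ is unchanged with $\lambda_{j+1}\leq m$. So $\hat\lambda+e_j$ is a Young diagram.

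\emph{Chaining.} Proposition~\ref{prop:short} gives $\|M^{\lambda^{(t)}}_I\|/\kappa_d(\lambda^{(t)})\leq\|M^{\lambda^{(t+1)}}_I\|/\kappa_d(\lambda^{(t+1)})$ for each $t$, and chaining these over $t=0,\dots,r-1$ yields the claimed inequality. The remaining claims of the corollary are direct. Because $\lambda_a\leq\lambda_j<m$ for $a\geq j$, we get $\tilde\lambda_a=\lambda_a$ in the rows $a\geq j$. For $a<j$ we have $\lambda_a\geq m$, so $\tilde\lambda_a=m$ and $(\tilde\lambda+e_j)_a=m$ for all $a\leq j$. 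This is the first hypothesis in \eqref{eq:shape}.

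There is no real obstacle. The only point needing care is verifying $h<j$ and that $\hat\lambda+e_j$ stays a Young diagram at every intermediate step, both of which follow from $\lambda_{j-1}\geq m>\lambda_j$.
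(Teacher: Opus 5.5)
Your proposal is correct and follows the paper's proof: you delete the columns to the right of column $m$ one at a time from the right, apply Proposition~\ref{prop:short} at each step, and chain the inequalities. Your checks are the same as the paper's: each deleted column has height $<j$ because $\lambda_j=m-1$, and $\hat\lambda+e_j$ remains a Young diagram at every intermediate step because row $j$ is untouched and $\hat\lambda_{j-1}\geq m$ throughout.
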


\begin{proof}
The deleted columns have height at most $j-1$, since $\lambda_j=m-1$. Delete them one at a time from the right, applying Proposition~\ref{prop:short} at each step. The hypothesis of the proposition is inherited: no deletion touches row $j$, and none brings row $j-1$ below $m$, so $\hat\lambda_{j-1}\geq m=\hat\lambda_j+1$ throughout.
\end{proof}

\begin{proposition}[Columns at least as tall as the box]\label{prop:tall}
Let $\alpha,\beta$ be Young diagrams with $\alpha_1=\dots=\alpha_j$, that is, with every column of $\alpha$ of height at least $j$, and with $\beta+e_j$ a Young diagram. Let $\lambda=\alpha+\beta$. Then $\|M^\lambda_I\|\leq\|M^\beta_I\|$ for every $I$.
\end{proposition}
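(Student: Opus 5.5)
We follow the pattern of Proposition~\ref{prop:short}, with the Cartan embedding $\iota:V_\lambda\to V_\alpha\otimes V_\beta$ in place of the embedding into $V_{\hat\lambda}\otimes\wedge^h\Cd$. The difference is that the $\kappa$'s will not cancel, and we only aim for an inequality between norms. Since $\alpha_1=\dots=\alpha_j$, the diagram $\alpha$ has no removable box in rows $<j$. Concretely, $V_{\alpha+e_l}$ is a constituent of $V_\alpha\otimes\Cd$ only for $l\ge j$, with $l=j$ possible only when $\alpha_j>\alpha_{j+1}$ or $j=1$ and so on. The point we use is that the box can be added to $\beta$ in row $j$ without touching $\alpha$.

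\emph{Step 1: an intertwiner identity.} Let $\iota^+:V_{\lp}\to V_\alpha\otimes V_{\beta+e_j}$ be the Cartan embedding, which exists since $\lp=\alpha+(\beta+e_j)$. Then $\iota^\dagger(\one\otimes T^\beta)\iota^+:V_{\lp}\to V_\lambda\otimes\Cd$ is an intertwiner, so it equals $\theta T^\lambda$. Its components are $\theta T^\lambda_a=\iota^\dagger(\one\otimes T^\beta_a)\iota^+$.

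Next we show that the middle projection $\iota^+\iota^{+\dagger}$ can be dropped, as before. The map $\phi\otimes e_b\mapsto(\one\otimes T^{\beta\dagger}_b)\iota\phi$ is an intertwiner $V_\lambda\otimes\Cd\to V_\alpha\otimes V_{\beta+e_j}$. It sends the highest weight vector of each constituent $V_{\lambda+e_l}$ to a vector of weight $\lambda+e_l$. The target has weights $\le\alpha+\beta+e_j$, and its components of weight $\lambda+e_l$ that are highest weight vectors would force $e_l$ to be a weight sum compatible with $\alpha$ fixed in its first $j$ rows. I would check that only $l=j$ survives, landing in the Cartan component, and otherwise restrict to the weights that matter. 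If dropping the projection fails in general, I would keep it: inserting the projection $\iota^+\iota^{+\dagger}\le\one$ still gives
\[
|\theta|^2M^\lambda_I\le\iota^\dagger(\one\otimes M^\beta_I)\iota,
\]
since $M^\lambda_I=\sum_{i\in I}T_iT_i^\dagger$ and $\iota^\dagger(\one\otimes T_i^\beta)\iota^+\iota^{+\dagger}(\one\otimes T_i^{\beta\dagger})\iota\le\iota^\dagger(\one\otimes T_i^\beta T_i^{\beta\dagger})\iota$. This operator inequality suffices for the norm bound, so either route gives $|\theta|^2\|M^\lambda_I\|\le\|M^\beta_I\|$.

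\emph{Step 2: the constant.} It remains to show $|\theta|^2\ge1$. This is where the inequality goes in the right direction, and it is the main obstacle. I would compute $\theta$ on highest weight vectors. Apply the identity to $v_{\lp}=v_\alpha\otimes v_{\beta+e_j}$ and take the $a=j$ component:
\[
\theta T^\lambda_jv_{\lp}=\iota^\dagger\bigl(v_\alpha\otimes T^\beta_jv_{\beta+e_j}\bigr).
\]
By Lemma~\ref{lem:hw} both $T_jv$ have weight $\lambda$, resp.\ $\beta$, so they are multiples of the highest weight vectors. Also $\iota^\dagger(v_\alpha\otimes v_\beta)=v_\lambda$. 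Hence $|\theta|=\|T^\beta_jv_{\beta+e_j}\|/\|T^\lambda_jv_{\lp}\|$.

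By \eqref{eq:entries}, suitably read for adding, or directly by Proposition~\ref{prop:coherent}-type computations, $\|T_jv_{\lp}\|^2$ is the product $\prod_{i<j}(1-1/h_{(i,m)})$ over the hook lengths above the box. For $\lp$ these hooks are those of $\beta+e_j$ enlarged by the columns of $\alpha$, which only lengthen the rows $i<j$, more precisely the arms at column $m$. Since $\alpha$ has rows $1,\dots,j$ of equal length, the columns of $\alpha$ sit to the left, and column $m$ shifts by $\alpha_1$. The hook lengths of the boxes above the removable box in the shifted column are unchanged for rows with the same $\alpha$-length. The arm lengths depend on $\lp_i-m$, which is invariant under adding $\alpha_1=\alpha_i$ for $i\le j$, and the legs are unchanged. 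So in fact $|\theta|=1$, and the bound follows with constant exactly one. I expect the delicate point to be justifying the norm formula for $T_jv$ in the form needed here. If that is unavailable at this stage, I would instead get $|\theta|\ge1$ from $\|T^\lambda_jv_{\lp}\|\le\|T^\beta_jv_{\beta+e_j}\|$ by a second application of the same Cartan-component argument, with the roles reversed.
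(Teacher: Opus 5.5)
Your proof is correct, but it establishes the key fact by a different route from the paper's. The paper proves the inclusion \eqref{eq:incl} directly. The highest weight vector $u=v_\alpha\otimes T^\beta v_{\beta+e_j}$ lies in $V_\lambda\otimes\Cd$, because its components $x_a$ lie in $L_j(\beta)$ (Lemma~\ref{lem:hw}), and because $\gl_j$ acts on $v_\alpha$ by scalars when $\alpha_1=\dots=\alpha_j$, so that $v_\alpha\otimes L_j(\beta)=L_j(\lambda)$. This gives $P^\lambda\le\one_\alpha\otimes P^\beta$, and the norm bound follows at once.

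You instead follow the pattern of Proposition~\ref{prop:short}. You get $|\theta|^2M^\lambda_I\le\iota^\dagger(\one\otimes M^\beta_I)\iota$ from $\iota^+\iota^{+\dagger}\le\one$, and then compute $|\theta|$ on highest weight vectors. The two arguments prove the same thing. Since $(\one\otimes T^\beta)\iota^+$ is an isometry and $\iota^\dagger\otimes\one$ a coisometry, $|\theta|\le1$ always. Equality holds exactly when the Cartan component of $V_\alpha\otimes V_{\beta+e_j}$ lies in $V_\lambda\otimes\Cd$, and that is \eqref{eq:incl}.

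So the paper gets the inclusion qualitatively, with no dimension formula, while you get it quantitatively by comparing $\|T^\lambda_jv_{\lp}\|$ with $\|T^\beta_jv_{\beta+e_j}\|$. Citing \eqref{eq:entries} for this is not circular: its derivation uses only \eqref{eq:target2}, Lemma~\ref{lem:complement} and Lemma~\ref{lem:translate}, not \eqref{eq:target}. It is, however, a detour through duality and hook lengths. The direct source is $\|T_jv_{\lp}\|^2=\|T_j^\dagger v_\lambda\|^2=\ip{v_\lambda}{M_{jj}v_\lambda}=\kappa_j(\lambda)$, by Proposition~\ref{prop:coherent} with $n=j$, together with $\kappa_j(\lambda)=\kappa_j(\beta)$ by Lemma~\ref{lem:weyl}(b). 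Your hook-length comparison is also fine: the arms $\lp_i-m$ are unchanged, and the legs are $j-i$ in both diagrams.

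\textbf{Step 1.} Your first option, dropping $\iota^+\iota^{+\dagger}$ because ``only $l=j$ survives'', is false here, unlike in Proposition~\ref{prop:short}. Take $d=2$, $j=1$, $\alpha=(1,0)$, $\beta=0$. Then $\phi\otimes e_b\mapsto(\one\otimes T^{\beta\dagger}_b)\iota\phi$ is the identity of $\C^2\otimes\C^2$ and does not kill $V_{(1,1)}$. Correspondingly, $M^\lambda_{aa}=\tfrac12(\one+|e_a\rangle\langle e_a|)\neq\one=\iota^\dagger(\one\otimes M^\beta_{aa})\iota$. This is the content of Remark~\ref{rem:twokinds}: deleting a tall column changes the channel. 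Your fallback of keeping the projection is therefore necessary, not optional; since you supply it, the proof stands. The last-resort suggestion at the end of Step~2 is too vague to count as an argument, but it is not needed.
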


\begin{proof}
Inside $V_\alpha\otimes V_\beta\otimes\Cd$ we have $V_\lambda\otimes\Cd$, through the Cartan embedding, and $V_\alpha\otimes V_{\beta+e_j}$, through $\one_\alpha\otimes T^\beta$. We first show that the range $V_{\lp}$ of $P^\lambda$ lies in the second:
\begin{equation}\label{eq:incl}
V_{\lp}\subseteq V_\alpha\otimes V_{\beta+e_j}.
\end{equation}
The vector $u:=v_\alpha\otimes T^\beta v_{\beta+e_j}$ in $V_\alpha\otimes V_{\beta+e_j}$ is a highest weight vector of weight $\lp$. It therefore suffices to show $u\in V_\lambda\otimes\Cd$, because $V_{\lp}$ occurs only once there, so $u$ generates it. By Lemma~\ref{lem:hw}, $u=\sum_{a\leq j}v_\alpha\otimes x_a\otimes e_a$ with $x_a\in L_j(\beta)$. Since $\alpha_1=\dots=\alpha_j$, the $E_{ab}$ with $a\neq b\leq j$ annihilate $v_\alpha$: for $a<b$ because $v_\alpha$ is a highest weight vector, and for $a>b$ because $\|E_{ab}v_\alpha\|^2=\ip{v_\alpha}{[E_{ba},E_{ab}]v_\alpha}=\alpha_b-\alpha_a=0$. So $\gl_j$ acts on $v_\alpha$ by scalars, and $v_\alpha\otimes L_j(\beta)=L_j(\lambda)\subseteq V_\lambda$.

Extending $P^\lambda$ by zero, \eqref{eq:incl} says $P^\lambda\leq\one_\alpha\otimes P^\beta$. Hence, for a unit vector $\phi\in V_\lambda$ with $\rho_\beta:=\Tr_{V_\alpha}|\phi\rangle\langle\phi|$,
\[
\ip{\phi}{M^\lambda_I\phi}=\sum_{i\in I}\|P^\lambda(\phi\otimes e_i)\|^2\leq\sum_{i\in I}\ip{\phi}{(\one_\alpha\otimes M^\beta_{ii})\phi}=\Tr\big(\rho_\beta M^\beta_I\big)\leq\|M^\beta_I\|.\qedhere
\]
\end{proof}

\begin{remark}[The two reductions]\label{rem:twokinds}
In the Schur--Weyl language of~\cite{Reuvers19}, the inclusion \eqref{eq:incl} says that the range of the projection $P_t$ of Young's orthogonal basis for a column-ordered tableau $t$ lies in the tensor product of the ranges for its left and right parts~\cite[Cor.~8, eq.~(36)]{Reuvers19}, and for $k=1$ the reduction itself is~\cite[eq.~(55)]{Reuvers19}. The two reductions are of different kinds. Proposition~\ref{prop:short} is an identity: a short column is carried along as an inert tensor factor, and this is what keeps the coherent state optimal afterwards. Proposition~\ref{prop:tall} only gives an inequality, and deleting a tall column does change the channel. Neither argument covers the other case: \eqref{eq:incl} needs $\alpha_1=\dots=\alpha_j$ and fails for a short column, while the rigidity in Proposition~\ref{prop:short} comes from $j>h$, which forces a unique solution of $(1^h)+e_l=e_j+e_J$.
\end{remark}

\subsection{The Casimir bound}\label{sec:casimir}

In this section $\lp_1=\dots=\lp_j=m$. Let $Q:=SS^\dagger$ be the projection of $V_{\lp}\otimes\Cdbar$ onto its summand of type $\lambda$, where $S$ is the isometry \eqref{eq:S}. This summand is unique because $V_{\lp}\otimes\Cdbar\cong\bigoplus_rV_{\lp-e_r}$, the sum running over the $r$ for which $\lp-e_r$ is dominant, is multiplicity free. Let $\Pi_I$ be the projection onto $\operatorname{span}\{e_i:i\in I\}$. By \eqref{eq:gram}, $M_I=\kappa_dS^\dagger(\one\otimes\Pi_I)S$, and since $\|X^\dagger X\|=\|XX^\dagger\|$,
\begin{equation}\label{eq:flip}
\|M_I\|=\kappa_d\,\big\|(\one\otimes\Pi_I)\,Q\,(\one\otimes\Pi_I)\big\|.
\end{equation}
This is what the isometry $S$ is for. The operator $M_I$ is not a compression of a projection, but by \eqref{eq:flip} it has the same norm as $(\one\otimes\Pi_I)Q(\one\otimes\Pi_I)$, which is, and projections onto isotypic components are dominated by Casimir operators.

\begin{lemma}\label{lem:casimir}
For a dominant integral weight $\mu$, let $\Theta_\mu:=\sum_{a,b}\pi_\mu(E_{ab})\otimes E_{ab}$ on $V_\mu\otimes\Cdbar$, where $E_{ab}$ in the second factor is the matrix unit. Then $\Theta_\mu$ acts on the summand $V_{\mu-e_r}$ as the scalar $\mu_r+d-r$.
\end{lemma}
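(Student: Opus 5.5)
The plan is to express $\Theta_\mu$ through quadratic Casimir operators, so that Schur's lemma gives its value on each summand. Let $C:=\sum_{a,b}E_{ab}E_{ba}\in U(\gl_d)$ be the quadratic Casimir. It acts on an irreducible representation of highest weight $\nu$ as the scalar
\[
c(\nu)=\sum_a\nu_a(\nu_a+d+1-2a).
\]
This follows by applying $C$ to $v_\nu$: rewrite $E_{ab}E_{ba}$ for $a<b$ as $E_{ba}E_{ab}+E_{aa}-E_{bb}$, note that $E_{ab}v_\nu=0$ for $a<b$, and sum the diagonal contributions. I will take this formula as standard.

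First I fix the action of $\gl_d$ on $\Cdbar$. The group acts by $U\mapsto\overline U$, so $X\in\gl_d$ acts by $\overline X$. Since $\gl_d$ here means the complexified action generated by the $E_{ab}$, the matrix unit $E_{ab}$ acts on $\Cdbar$ as $-E_{ba}$; this is the dual (contragredient) representation with highest weight $(0,\dots,0,-1)$. The generators on the tensor product are therefore $F_{ab}=\pi_\mu(E_{ab})\otimes\one-\one\otimes E_{ba}$. Expanding $\sum_{a,b}F_{ab}F_{ba}$, the two cross terms are $-\sum_{a,b}\pi_\mu(E_{ab})\otimes E_{ab}$ and $-\sum_{a,b}\pi_\mu(E_{ba})\otimes E_{ba}$. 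Each equals $-\Theta_\mu$, so
\[
C_{V_\mu\otimes\Cdbar}=c(\mu)\one+c\big((0,\dots,0,-1)\big)\one-2\Theta_\mu .
\]

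Next I evaluate the scalars. We have $c\big((0,\dots,0,-1)\big)=(-1)(-1+d+1-2d)=d$. Only the $r$-th coordinate changes between $\mu$ and $\mu-e_r$, so
\[
c(\mu)-c(\mu-e_r)=2\mu_r-1+(d+1-2r)=2\mu_r+d-2r.
\]
On the summand $V_{\mu-e_r}$ the left side of the identity above is $c(\mu-e_r)$. Solving for $\Theta_\mu$ gives
\[
\tfrac12\big(2\mu_r+d-2r+d\big)=\mu_r+d-r,
\]
as claimed. The decomposition $V_\mu\otimes\Cdbar\cong\bigoplus_rV_{\mu-e_r}$ is multiplicity free, so each summand is a well-defined isotypic piece, and the Casimir, being central, acts on it by the scalar above.

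The only real obstacle is keeping signs and conventions straight: the identification of the action of $E_{ab}$ on $\Cdbar$ as $-E_{ba}$, which makes the cross terms combine into $-2\Theta_\mu$ rather than $+2\Theta_\mu$, and the $\rho$-shift in $c(\nu)$. As a sanity check I will verify the result directly on the top summand $r=1$. The vector $v_\mu\otimes\bar e_d$ is a highest weight vector of weight $\mu-e_d$, so this check applies to $r=d$ when $\mu-e_d$ is dominant. On $\bar e_d$ only the matrix units $E_{ad}$ act nontrivially, and among these $E_{dd}$ returns $\bar e_d$. Only the diagonal term contributes on the $\bar e_d$ component, giving
\[
\Theta_\mu(v_\mu\otimes\bar e_d)=\mu_d\,v_\mu\otimes\bar e_d+(\text{terms }\pi_\mu(E_{ad})v_\mu\otimes\bar e_a,\ a<d).
\]
The terms with $a<d$ are of raising type, $E_{ad}$ with $a<d$, so $\pi_\mu(E_{ad})v_\mu=0$, and the eigenvalue is $\mu_d=\mu_d+d-d$, consistent with the formula.
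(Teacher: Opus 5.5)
Your proof is correct and is essentially the paper's argument: you use that $E_{ab}$ acts on $\Cdbar$ as $-E_{ba}$ to write $\Theta_\mu=-\tfrac12(C-C_\mu-C_{\bar\square})$, evaluate the Casimir on highest weight vectors, and get $\tfrac12\big[(2\mu_r+d-2r)+d\big]=\mu_r+d-r$. One small slip in your sanity check: $v_\mu\otimes\bar e_d$ spans the top (Cartan) summand $r=d$, not $r=1$, and $\mu-e_d$ is always dominant, so your condition there is unnecessary.
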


\begin{proof}
On $\Cdbar$ the generator $E_{ab}$ is represented by $-E_{ba}$, so $\Theta_\mu=-\tfrac12(C-C_\mu-C_{\bar\square})$, where $C$ is the quadratic Casimir $\sum_{a,b}E_{ab}E_{ba}$ of the tensor product and $C_\mu$, $C_{\bar\square}$ are those of the two factors. The Casimir acts on $V_\nu$ as $\sum_a\nu_a(\nu_a+d+1-2a)$, as one sees on the highest weight vector: for $a<b$, $E_{ab}E_{ba}v_\nu=(\nu_a-\nu_b)v_\nu$ and $E_{ba}E_{ab}v_\nu=0$. Taking $\nu$ to be $\mu-e_r$, $\mu$ and $(0,\dots,0,-1)$ in turn gives $-\tfrac12\big[(-2\mu_r-d+2r)-d\big]=\mu_r+d-r$.
\end{proof}

\begin{lemma}\label{lem:major}
$Q\leq\Theta_{\lp}/(m+d-j)$.
\end{lemma}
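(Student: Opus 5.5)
Both $Q$ and $\Theta_{\lp}$ commute with the action of $U(d)$ on $V_{\lp}\otimes\Cdbar$, and this tensor product is multiplicity free, $V_{\lp}\otimes\Cdbar\cong\bigoplus_rV_{\lp-e_r}$, with $r$ running over the rows for which $\lp-e_r$ is dominant. So both operators are diagonal in this decomposition. By Lemma~\ref{lem:casimir}, $\Theta_{\lp}$ acts on $V_{\lp-e_r}$ as the scalar $c_r:=\lp_r+d-r$. The projection $Q$ acts as $1$ on the summand $r=j$, which is $V_\lambda$, and as $0$ on all the others. The operator inequality $Q\leq\Theta_{\lp}/(m+d-j)$ is therefore equivalent to two scalar inequalities: $c_j/(m+d-j)\geq1$, and $c_r\geq0$ for every $r\neq j$ that occurs.

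The first inequality is an equality, since $c_j=\lp_j+d-j=m+d-j$. Here we use that $(j,m)$ is removable, so $\lp_j=m$.

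For the second inequality we use that $\lp_r\geq0$ for all $r$, because $\lp$ is a Young diagram, and that $r\leq d$. Together these give $c_r=\lp_r+d-r\geq0$. This holds for every $r$, including $r=d$ when $\lp_d=0$, where $\lp-e_d$ has a negative entry but is still dominant. So $\Theta_{\lp}$ is positive semidefinite, and its eigenvalue on $V_\lambda$ is exactly $m+d-j$.

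Combining the two, $\Theta_{\lp}/(m+d-j)-Q$ is a sum of scalar multiples of the isotypic projections with nonnegative coefficients, which proves the lemma. The hypothesis $\lp_1=\dots=\lp_j$ of this section is not needed for this step; it will matter only later, when compressions of $\Theta_{\lp}$ are bounded.

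There is no real obstacle here. The one thing to check carefully is the identification of the summands of $V_{\lp}\otimes\Cdbar$ and of the Casimir eigenvalues, both of which are supplied by Lemma~\ref{lem:casimir} and the multiplicity-freeness noted above.
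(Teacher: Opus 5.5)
Your proof is correct and is essentially the paper's argument: both operators are scalar on the multiplicity-free summands $V_{\lp-e_r}$. By Lemma~\ref{lem:casimir}, $\Theta_{\lp}$ equals $m+d-j$ on the range of $Q$ and $\lp_r+d-r\geq0$ on the other summands. The paper also notes that only $r\geq j$ occur because of the hypothesis $\lp_1=\dots=\lp_j$, but, as you observe, this is not needed here, since $\lp_r+d-r\geq0$ for every $r\leq d$; the hypothesis enters only in Proposition~\ref{prop:casimir}, through $\lp_1=m$ in Lemma~\ref{lem:compress}.
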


\begin{proof}
No row above $j$ is removable from $\lp$, so the summands that occur are the $V_{\lp-e_r}$ with $r\geq j$. By Lemma~\ref{lem:casimir}, $\Theta_{\lp}$ equals $m+d-j$ on $V_{\lp-e_j}$, which is the range of $Q$, and $\lp_r+d-r\geq0$ on the others.
\end{proof}

Note that only the nonnegativity of the other eigenvalues is used: there is no gap and no counting of multiplicities.

\begin{lemma}\label{lem:compress}
For every $\mu$ and every $I$ with $|I|=k$,
$(\one\otimes\Pi_I)\,\Theta_\mu\,(\one\otimes\Pi_I)\leq(\mu_1+k-1)\,(\one\otimes\Pi_I)$.
\end{lemma}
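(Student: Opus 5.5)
The plan is to reduce to the case $I=\{1,\dots,k\}$ by a symmetry, identify the compressed operator with the operator $\Theta$ of Lemma~\ref{lem:casimir} for the smaller group $U(k)$, and then read off its eigenvalues from Lemma~\ref{lem:casimir} together with the branching bound of Lemma~\ref{lem:branch}.

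\emph{Step 1: reduction to $I=\{1,\dots,k\}$.} The operator $\Theta_\mu$ commutes with $\pi_\mu(U)\otimes\overline U$ for every $U\in U(d)$. This holds because $\Theta_\mu$ is built from the Casimir operators in Lemma~\ref{lem:casimir}, or, directly, because $\sum_{a,b}E_{ab}\otimes E_{ab}$ is invariant under $\mathrm{Ad}_U\otimes\mathrm{Ad}_{\overline U}$. Take $U$ to be a permutation matrix sending $\{e_i:i\in I\}$ to $\{e_1,\dots,e_k\}$. Since $U$ is real, $\overline U=U$ and $U\Pi_IU^\dagger=\Pi_{\{1,\dots,k\}}$. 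Conjugating the inequality by the unitary $\pi_\mu(U)\otimes U$ therefore reduces it to $I=\{1,\dots,k\}$.

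\emph{Step 2: identification with $\Theta$ for $U(k)$.} We have
\[
(\one\otimes\Pi_I)\,\Theta_\mu\,(\one\otimes\Pi_I)=\sum_{a,b\leq k}\pi_\mu(E_{ab})\otimes E_{ab},
\]
regarded as an operator on $V_\mu\otimes\overline{\C^k}$ and extended by zero. Decompose $V_\mu|_{U(k)}=\bigoplus_\nu V^{(k)}_\nu$, with multiplicities. On each summand $V^{(k)}_\nu\otimes\overline{\C^k}$ the operator above is exactly the operator $\Theta_\nu$ of Lemma~\ref{lem:casimir}, with $d$ replaced by $k$. Since the decomposition is $\gl_k$-invariant, the operator is block diagonal with respect to it.

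\emph{Step 3: the eigenvalue bound.} By Lemma~\ref{lem:casimir} applied to $\gl_k$, $\Theta_\nu$ acts on $V^{(k)}_{\nu-e_r}\subseteq V^{(k)}_\nu\otimes\overline{\C^k}$ as the scalar $\nu_r+k-r$. Here $r$ ranges over those indices for which $\nu-e_r$ is dominant, and these summands exhaust $V^{(k)}_\nu\otimes\overline{\C^k}$. Each such eigenvalue satisfies
\[
\nu_r+k-r\leq\nu_1+k-1,
\]
because $\nu_r\leq\nu_1$ and $r\geq1$. By Lemma~\ref{lem:branch}, $\nu_1\leq\mu_1$, so every eigenvalue is at most $\mu_1+k-1$. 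This is the bound on the range of $\one\otimes\Pi_I$, and both sides vanish on its complement, which gives the claimed inequality.

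The only delicate point is Step 2. One must check that the $\gl_k$-action on $\overline{\C^k}$ induced from $\Cdbar$ is the dual action, so that Lemma~\ref{lem:casimir} applies verbatim with $d\to k$, and that Lemma~\ref{lem:casimir} holds for arbitrary dominant $\nu$ of $U(k)$. The latter is fine: its proof uses only highest-weight computations, which are valid for any dominant integral weight, including the possibly negative weight $\nu-e_k$.
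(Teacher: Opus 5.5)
Your proposal is correct and follows the paper's proof. You identify the compression with the operator of Lemma~\ref{lem:casimir} for the copy of $U(k)$ acting on the range of $\Pi_I$, bound its eigenvalues $\nu_r+k-r$ by $\nu_1+k-1$, and use Lemma~\ref{lem:branch} to get $\nu_1\leq\mu_1$; your explicit conjugation by a permutation matrix just spells out the paper's ``after permuting the coordinates so that $I=\{1,\dots,k\}$.''
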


\begin{proof}
Let $U(I)\cong U(k)$ be the subgroup acting on the range of $\Pi_I$. The compression is $\sum_{a,b\in I}\pi_\mu(E_{ab})\otimes E_{ab}$ on $V_\mu\otimes\operatorname{ran}\Pi_I$, which is the operator of Lemma~\ref{lem:casimir} for the group $U(I)$. So if $V_\nu$ is a constituent of $V_\mu|_{U(I)}$, that lemma gives the eigenvalues $\nu_s+k-s\leq\nu_1+k-1$ on the corresponding summands, and $\nu_1\leq\mu_1$ by Lemma~\ref{lem:branch}, after permuting the coordinates so that $I=\{1,\dots,k\}$.
\end{proof}

\begin{proposition}\label{prop:casimir}
If $\lp_1=\dots=\lp_j=m$ then, for every $I$ with $|I|=k$,
\[
\|M_I\|\ \leq\ \kappa_d\,\frac{m+k-1}{m+d-j}.
\]
\end{proposition}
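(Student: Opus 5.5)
The bound follows by chaining the three preceding facts through the identity \eqref{eq:flip}. That identity gives $\|M_I\|=\kappa_d\,\|(\one\otimes\Pi_I)Q(\one\otimes\Pi_I)\|$, so it suffices to show
\[
(\one\otimes\Pi_I)\,Q\,(\one\otimes\Pi_I)\ \leq\ \frac{m+k-1}{m+d-j}\,(\one\otimes\Pi_I)
\]
as operators on $V_{\lp}\otimes\Cdbar$. Once this holds, the compression is a positive operator dominated by that multiple of a projection, so its norm is at most $(m+k-1)/(m+d-j)$. Multiplying by $\kappa_d$ then gives the claim.

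The first step is Lemma~\ref{lem:major}, which applies because $\lp_1=\dots=\lp_j=m$. It gives the operator inequality $Q\leq\Theta_{\lp}/(m+d-j)$. Compressing both sides by the projection $\one\otimes\Pi_I$ preserves this operator inequality, since $X\mapsto AXA^\dagger$ is positivity preserving. This yields
\[
(\one\otimes\Pi_I)Q(\one\otimes\Pi_I)\leq\frac{1}{m+d-j}\,(\one\otimes\Pi_I)\Theta_{\lp}(\one\otimes\Pi_I).
\]

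The second step is Lemma~\ref{lem:compress} with $\mu=\lp$. It bounds the compressed Casimir-type operator by $(\lp_1+k-1)(\one\otimes\Pi_I)$, and $\lp_1=m$ by hypothesis. Combining the two steps gives the displayed inequality, and hence the proposition.

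There is no real obstacle here, since the work was done in the two lemmas. The one point to check is that the quantity $m+d-j$ used in Lemma~\ref{lem:major} is positive. This holds because $j\leq d$ and $m\geq1$.
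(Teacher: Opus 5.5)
Your proposal is correct and follows the paper's proof exactly: the paper likewise combines \eqref{eq:flip} with Lemma~\ref{lem:major} and Lemma~\ref{lem:compress} (taking $\mu=\lp$, so that $\mu_1=m$). The details you add are right: compression by $\one\otimes\Pi_I$ preserves the operator inequality, the compressed $Q$ is positive so its norm is bounded by the constant, and $m+d-j>0$.
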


\begin{proof}
Combine \eqref{eq:flip} with Lemmas~\ref{lem:major} and~\ref{lem:compress}, the latter with $\mu=\lp$ and $\mu_1=m$.
\end{proof}

By Lemma~\ref{lem:weyl}(c), the bound equals $\kappa_{j+k-1}$ exactly when $\lp_{j+k}=0$, which is the second hypothesis in \eqref{eq:shape}. Otherwise it is strictly larger, and this is what Proposition~\ref{prop:tall} repairs. Here is an example.

\begin{example}\label{ex:211}
Take $d=3$, $\lambda=(2,1,1)$ and the box $(2,2)$, so that $\lp=(2,2,1)$. Up to a twist by $\det$, this is the passage from $\C^3$ to $\wedge^2\C^3$, that is, two fermions, and the true value is $\|M_{aa}\|=\kappa_2=\tfrac12$, the Pauli principle. Since $\lp$ has three rows, Proposition~\ref{prop:casimir} only gives $\tfrac23$, with $\kappa_3=1$. Deleting the column of height $3$ (Proposition~\ref{prop:tall}) leaves $\beta=(1,0,0)$, for which the bound is sharp.
\end{example}

\begin{remark}[The case $k=1$]\label{rem:rectangle}
For $k=1$ the Casimir estimate can be avoided. Deleting the columns to the left of the box (Proposition~\ref{prop:tall}) moves the box into the first column and leaves $j$ rows, without changing the hook lengths in the column of the box. By Lemma~\ref{lem:complement}, $\Psi_-$ then has the output spectra of the adding channel of the dual pair. At $k=1$, the adding and removing channels of a pair have the same largest output eigenvalue up to the factor $\kappa_d$, since $\|T_a\|=\|T_a^\dagger\|$. In the dual pair the box lies in the last column, of height $d+1-j$, and deleting the columns to its left leaves a single column, where the bound is the Pauli principle. Since $\dim V_{\mu^c}=\dim V_\mu$, this gives $\max_\rho s_1\big(\Psi_-(\rho)\big)\leq\kappa_d/(d+1-j)$ for the diagram left after the first deletion, and this equals $\kappa_j$ by Lemma~\ref{lem:weyl}(c). The first deletion does not change $\kappa_j$ (Lemma~\ref{lem:weyl}(b)), so the bound $\kappa_j$ holds for the original diagram, and Proposition~\ref{prop:coherent} with $n=j$ gives $\ip{v_\lambda}{M_{jj}v_\lambda}=\kappa_j$. The coherent state of $V_{\lp}$ attains it too, since $T_jv_{\lp}$ and $T_j^\dagger v_\lambda$ are multiples of $v_\lambda$ and $v_{\lp}$ of the same modulus. The hook-content formula gives $\kappa_j=\prod_{i<j}(1-1/h_{(i,m)})$. This is a short proof of \eqref{eq:hook1}, which was obtained in~\cite{Reuvers19} by a considerably longer argument.
\end{remark}

\subsection{The coherent state}\label{sec:coherent}

For the lower bound we use the coherent state as a trial state, and compute the diagonal entries $\ip{v_\lambda}{M_{aa}v_\lambda}$ of its output.

\begin{proposition}\label{prop:coherent}
For $j\leq n\leq d$, we have $\ip{v_\lambda}{M_{aa}v_\lambda}=0$ for $a<j$, and
\begin{equation}\label{eq:cohvalue}
\sum_{a=j}^n\ip{v_\lambda}{M_{aa}v_\lambda}=\kappa_n .
\end{equation}
\end{proposition}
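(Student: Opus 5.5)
By \eqref{eq:gram}, $\ip{v_\lambda}{M_{aa}v_\lambda}=\|P(v_\lambda\otimes e_a)\|^2$, and the plan is to compute these numbers by restricting everything to the $\gl_n$-submodule $L_n(\lambda)$.

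\emph{Vanishing for $a<j$.} The vector $v_\lambda\otimes e_a$ is a weight vector of weight $\lambda+e_a$. For $a<j$ it is annihilated by every $E_{bc}$ with $b<c$: on $v_\lambda$ this is the highest weight property, and $E_{bc}e_a\neq0$ only if $c=a$, which would make it a multiple of $e_b$ with $b<a$. Hence $v_\lambda\otimes e_a$ is a highest weight vector of weight $\lambda+e_a$. It lies in the summand $V_{\lambda+e_a}$ if that weight is dominant, and is zero otherwise. Either way it is orthogonal to $V_{\lp}$, since $\lambda+e_a\neq\lp$, so $P(v_\lambda\otimes e_a)=0$.

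\emph{The sum.} Consider $W:=L_n(\lambda)\otimes\C^n\subseteq V_\lambda\otimes\Cd$, a $\gl_n$-module. I want to show that $P$ maps $W$ into itself and that $P|_W$ is the projection onto the $\gl_n$-isotypic component of type $(\lp_1,\dots,\lp_n)$ inside $W$. By Pieri's rule for $\gl_n$, this component is irreducible of dimension $D_n(\lp)$.

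The vectors of $W$ are exactly the vectors of $V_\lambda\otimes\Cd$ whose weight components all agree with $\lambda$ in the coordinates $n+1,\dots,d$. Indeed, a weight vector $\sum_a y_a\otimes e_a$ with such a weight has $y_a$ of weight $\theta-e_a$. For $a>n$, the weight of $y_a$ drops below $\lambda$ in coordinate $a$ while agreeing with $\lambda$ in the other coordinates beyond $n$. Hence the sum of coordinates $t,\dots,d$ of that weight, for $t=a$, is smaller than for $\lambda$, which is impossible, as in the proof of Lemma~\ref{lem:index}. So $y_a=0$ for $a>n$, and for $a\leq n$, Lemma~\ref{lem:index} puts $y_a$ in $L_n(\lambda)$. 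Since $P$ commutes with the torus, it preserves this weight condition, so $P(W)\subseteq W$.

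Next, $P|_W$ commutes with $\gl_n$, so it is a projection onto a $\gl_n$-submodule of $W$. Its range $V_{\lp}\cap W$ is the part of $V_{\lp}$ with weights agreeing with $\lp$ beyond $n$ (recall $j\leq n$). By Lemma~\ref{lem:index} applied to $V_{\lp}$, this is $L_n(\lp)$, which is irreducible with highest weight $(\lp_1,\dots,\lp_n)$. Because $W\cong L_n(\lambda)\otimes\C^n$ is multiplicity free by Pieri's rule, Schur's lemma gives $\sum_{a\leq n}(\one\otimes\langle e_a|)P(\one\otimes|e_a\rangle)=c\,\one_{L_n(\lambda)}$ on $L_n(\lambda)$, and taking traces gives $c=D_n(\lp)/D_n(\lambda)=\kappa_n$. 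Evaluating at $v_\lambda$ gives $\sum_{a\leq n}\ip{v_\lambda}{M_{aa}v_\lambda}=\kappa_n$. Dropping the terms with $a<j$, which vanish by the first step, gives \eqref{eq:cohvalue}.

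The main obstacle is identifying $V_{\lp}\cap W$ with $L_n(\lp)$ and checking that $P$ preserves $W$; both reduce to the weight bookkeeping of Lemma~\ref{lem:index}.
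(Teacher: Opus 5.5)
You have a genuine gap in the vanishing step for $a<j$; the argument for the sum is correct and is essentially the paper's.

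\textbf{The vanishing step.} Your justification is false. The vector $v_\lambda\otimes e_a$ is a highest weight vector only when $a=1$. For $b<a$, the raising operator $E_{ba}$ gives
$E_{ba}(v_\lambda\otimes e_a)=v_\lambda\otimes E_{ba}e_a=v_\lambda\otimes e_b\neq0$.
This is exactly the case $c=a$ that you identified and then treated as if it vanished. So for $1<a<j$ you cannot conclude that $v_\lambda\otimes e_a$ lies in a summand $V_{\lambda+e_a}$. The alternative ``zero otherwise'' is also impossible, since $v_\lambda\otimes e_a\neq0$.

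This is not cosmetic. The vanishing for $a<j$ is part of the statement, and your final step drops those terms on the strength of it.

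The fix is the weight argument the paper uses. We have $\lambda+e_a=\lp+e_a-e_j$ with $a<j$, so its first $a$ coordinates sum to one more than the first $a$ coordinates of $\lp$. Every weight of $V_{\lp}$ lies in the convex hull of the $S_d$-orbit of $\lp$, so no weight of $V_{\lp}$ can have this. Hence $\lambda+e_a$ is not a weight of $V_{\lp}$. Since $P$ preserves weights, $P(v_\lambda\otimes e_a)=0$.

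\textbf{The sum.} Your route matches the paper's: restrict $P$ to $W=L_n(\lambda)\otimes\C^n$, identify $P|_W$ as the projection onto $L_n(\lp)$, and apply Schur's lemma to its partial trace over $\C^n$. There is one difference in detail. You characterize $W$ as the full subspace of $V_\lambda\otimes\Cd$ whose weights agree with $\lambda$ beyond coordinate $n$, which gives both $P(W)\subseteq W$ and $L_n(\lp)\subseteq W$ at once. The paper instead gets $L_n(\lp)\subseteq W$ from Lemma~\ref{lem:hw}, since $Tv_{\lp}=\sum_{a\leq j}x_a\otimes e_a$ with $x_a\in L_j(\lambda)$. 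Both work. Your appeal to multiplicity-freeness of $W$ is unnecessary: Schur's lemma only needs the partial trace to commute with $U(n)$ on the irreducible module $L_n(\lambda)$.
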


\begin{proof}
By \eqref{eq:gram}, $\ip{v_\lambda}{M_{aa}v_\lambda}=\|P(v_\lambda\otimes e_a)\|^2$. For $a<j$, the weight $\lambda+e_a$ exceeds $\lp$ in the dominance order, so it is not a weight of $V_{\lp}$, and $P(v_\lambda\otimes e_a)=0$.

For the sum, identify $V_{\lp}$ with the range of $P$, and let $L:=L_n(\lambda)$ and $L^+:=L_n(\lp)$, the latter generated by $Tv_{\lp}$. The weights of $L\otimes\C^n$ agree with $\lambda$, and hence with $\lp$, beyond the $n$-th coordinate. Since $P$ preserves weights, Lemma~\ref{lem:index} shows that $P$ maps $L\otimes\C^n$ into $L^+$. Conversely, $L^+\subseteq L\otimes\C^n$, because $Tv_{\lp}=\sum_{a\leq j}x_a\otimes e_a$ with $x_a\in L$ by Lemma~\ref{lem:hw}. So $P$ restricts to the orthogonal projection of $L\otimes\C^n$ onto $L^+$, and $\sum_{a\leq n}M_{aa}$ restricts to its partial trace over $\C^n$, an operator on $L$. This operator commutes with $U(n)$ and has trace $D_n(\lp)$, so it equals $\kappa_n\one_L$ by Schur's lemma. Its expectation in $v_\lambda$ gives \eqref{eq:cohvalue}.
\end{proof}

Proposition~\ref{prop:coherent} and Lemma~\ref{lem:weyl} give \eqref{eq:target2}: $\kappa_dg_j=\kappa_j>0$, and $\kappa_dg_a=\kappa_a-\kappa_{a-1}>0$ for $a>j$, because $\kappa_a/\kappa_{a-1}=(\omega_j-\omega_a+1)/(\omega_j-\omega_a)>1$ by \eqref{eq:weyl}.

\subsection{Proof of Theorem~\ref{cor:adding}}\label{sec:assembly}

The two reductions bring the diagram into the shape \eqref{eq:shape}, where Proposition~\ref{prop:casimir} applies and Lemma~\ref{lem:weyl} identifies the bound. The dimension ratios then carry the bound back to the original diagram.

\begin{proof}[Proof of \eqref{eq:target}]
Let $n:=j+k-1\leq d$, let $\tilde\lambda$ be the diagram of Corollary~\ref{cor:cap}, and write $\tilde\lambda=\alpha+\beta$, where $\alpha$ consists of the columns of height $\geq j+k$. Then Proposition~\ref{prop:tall} applies, and $\beta+e_j$ satisfies both hypotheses in \eqref{eq:shape}; let $m_\beta$ be the column of its added box. Hence
\begin{align*}
\frac{\|M^\lambda_I\|}{\kappa_d(\lambda)}
&\ \overset{(1)}{\leq}\ \frac{\|M^{\tilde\lambda}_I\|}{\kappa_d(\tilde\lambda)}
\ \overset{(2)}{\leq}\ \frac{\|M^{\beta}_I\|}{\kappa_d(\tilde\lambda)}
\ \overset{(3)}{\leq}\ \frac{\kappa_d(\beta)}{\kappa_d(\tilde\lambda)}\cdot\frac{m_\beta+k-1}{m_\beta+d-j}\\
&\ \overset{(4)}{=}\ \frac{\kappa_n(\beta)}{\kappa_d(\tilde\lambda)}
\ \overset{(5)}{=}\ \frac{\kappa_n(\tilde\lambda)}{\kappa_d(\tilde\lambda)}
\ \overset{(6)}{=}\ \frac{\kappa_n(\lambda)}{\kappa_d(\lambda)},
\end{align*}
by (1) Corollary~\ref{cor:cap}, (2) Proposition~\ref{prop:tall}, (3) Proposition~\ref{prop:casimir}, and (4)--(6) Lemma~\ref{lem:weyl}(c), (b) and (a): (b) applies because $\beta$ arises from $\tilde\lambda$ by deleting columns of height $\geq j+k>n$, and (a) because $\tilde\lambda$ and $\lambda$ agree in the rows $\geq j$. That is, $\|M^\lambda_I\|\leq\kappa_n(\lambda)$, which is \eqref{eq:target}.
\end{proof}

\begin{proof}[Proof of Theorem~\ref{cor:adding}]
By Lemma~\ref{lem:reduction}, $\Psi_+(\sigma_0)$ is diagonal with entries $g_a$, so by \eqref{eq:target2} it has exactly $d-j+1$ nonzero eigenvalues. By Lemma~\ref{lem:reduction} and \eqref{eq:target}, $s_k(\Psi_+(\sigma))\leq\kappa_{j+k-1}/\kappa_d$ for every density matrix $\sigma$ on $V_\lambda$ and every $k\leq d-j+1$. For $\sigma=\sigma_0$ equality holds, since the entries $g_j,\dots,g_{j+k-1}$ already add up to $\kappa_{j+k-1}/\kappa_d$ by \eqref{eq:target2}. This is \eqref{eq:kappasums}. For $k=d-j+1$ it gives $1$, so $s_k(\Psi_+(\sigma_0))=1$ for all larger $k$ as well, and $\Psi_+(\sigma)\mj\Psi_+(\sigma_0)$. For $\Phi_+$, write $\sigma=\sum_ip_i|\phi_i\rangle\langle\phi_i|$ and use that $\Phi_+$ and $\Psi_+$ have the same nonzero output spectrum on pure states:
\[
s_k(\Phi_+(\sigma))\leq\sum_ip_i\,s_k\big(\Phi_+(|\phi_i\rangle\langle\phi_i|)\big)=\sum_ip_i\,s_k\big(\Psi_+(|\phi_i\rangle\langle\phi_i|)\big)\leq s_k(\Psi_+(\sigma_0))=s_k(\Phi_+(\sigma_0)).\qedhere
\]
\end{proof}

\subsection{Duality, and proof of Theorem~\ref{thm:main}}\label{sec:complement}

Conjugating every space in sight turns the pair $(\lambda,\lp)$ into the dual pair and exchanges removing with adding. This is a direct consequence of $\overline{V_\mu}\cong V_{\mu^c}\otimes{\det}^{-R}$; at the level of Clebsch--Gordan isometries, the corresponding symmetry under passing to dual representations is~\cite[Prop.~18]{MT25}. We include the short proof for completeness. Theorem~\ref{thm:main} is then Theorem~\ref{cor:adding} for the dual pair, once the dimension ratios are rewritten as hook lengths.

\begin{lemma}[Duality]\label{lem:complement}
Let $R\geq\lp_1$, and let $\Phi_+',\Psi_+'$ be the channels \eqref{eq:adding} of the pair $\big((\lp)^c,\lambda^c\big)$, whose diagrams differ by a box in row $j':=d+1-j$. There is an antiunitary $K:V_{\lp}\to V_{(\lp)^c}$ that maps coherent states to coherent states, such that $\Phi_-(\rho)$ and $\Phi_+'(K\rho K^{-1})$ have the same spectrum for every density matrix $\rho$ on $V_{\lp}$, and so do $\Psi_-(\rho)$ and $\Psi_+'(K\rho K^{-1})$.
\end{lemma}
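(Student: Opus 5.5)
The plan is to build $K$ as complex conjugation followed by the identification of dual representations, and then to check that conjugating the isometry $T$ produces, up to a phase, the isometry $S'$ of the dual pair. For each dominant $\mu$ let $J_\mu:V_\mu\to\overline{V_\mu}$ be the identity map of sets, viewed as an antiunitary onto the conjugate space. $U(d)$ acts on $\overline{V_\mu}$ by $\overline{\pi_\mu(U)}$, so that $J_\mu\pi_\mu(U)=\overline{\pi_\mu(U)}J_\mu$. The highest weight of $\overline{V_\mu}$ is $(-\mu_d,\dots,-\mu_1)=\mu^c-(R^d)$. Hence there is a unitary intertwiner $W_\mu:\overline{V_\mu}\to V_{\mu^c}\otimes{\det}^{-R}$~\cite[Lecture~15]{FultonHarris}, and I would put $K:=W_{\lp}J_{\lp}$.

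First I check the combinatorics of the dual pair. From $(\lp)^c_{j'}=R-\lp_j$ and $\lambda^c_{j'}=R-\lambda_j=R-\lp_j+1$ we get $\lambda^c=(\lp)^c+e_{j'}$. So $\lambda^c$ is obtained from $(\lp)^c$ by adding a box in row $j'$. The adding isometry $S'$ of the dual pair maps $V_{(\lp)^c}\to V_{\lambda^c}\otimes\Cdbar$.

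Next I conjugate $T$. The map $\overline T:=(J_\lambda\otimes J_{\Cd})\,T\,J_{\lp}^{-1}$ is an isometry $\overline{V_{\lp}}\to\overline{V_\lambda}\otimes\overline{\Cd}$ that intertwines the conjugate actions. The conjugate of $\Cd$ is exactly $\Cdbar$ with $U\mapsto\overline U$, in the sense used in \eqref{eq:S}. Now set
\[
S'':=(W_\lambda\otimes\one)\,\overline T\,W_{\lp}^{-1}:V_{(\lp)^c}\to V_{\lambda^c}\otimes\Cdbar .
\]
Both sides carry the same twist ${\det}^{-R}$, which cancels, so $S''$ is an isometric $U(d)$-intertwiner. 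By Pieri's rule $V_{(\lp)^c}$ occurs exactly once in $V_{\lambda^c}\otimes\Cdbar$, so by Schur's lemma $S''=e^{i\theta}S'$.

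Then I trace out the factors. We have
\[
S'(K\rho K^{-1})S'^\dagger=\mathcal K\,(T\rho T^\dagger)\,\mathcal K^{-1},\qquad \mathcal K:=W_\lambda J_\lambda\otimes J_{\Cd}.
\]
Since $\mathcal K$ is a tensor product of antiunitaries, partial traces commute with it: $\Tr_{\Cdbar}$ of the left side is $W_\lambda J_\lambda\,\Phi_-(\rho)\,(W_\lambda J_\lambda)^{-1}$, and $\Tr_{V_{\lambda^c}}$ of it is $J_{\Cd}\Psi_-(\rho)J_{\Cd}^{-1}$. Conjugation of a Hermitian operator by an antiunitary preserves its eigenvalues, because it sends eigenvectors to eigenvectors with the same real eigenvalues. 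This gives the two spectral equalities.

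Finally, coherent states. $J_{\lp}$ sends $v_{\lp}$ to a lowest-weight vector of $\overline{V_{\lp}}$: the weights are negated, so the weight $\lp$ becomes the lowest weight $-\lp$. $W_{\lp}$ then sends it to a lowest weight vector of $V_{(\lp)^c}$. A lowest weight vector is $\pi(w)v_{(\lp)^c}$ for a permutation matrix $w$ representing the longest Weyl group element, so it is coherent. Moreover $K\pi_{\lp}(U)=\chi(U)\,\pi_{(\lp)^c}(\overline U)K$ with a scalar $\chi(U)$ coming from the determinant twist. Hence $K$ maps the $U(d)$-orbit of $v_{\lp}$ into the $U(d)$-orbit of a coherent vector, that is, coherent states to coherent states.

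The step needing the most care is the bookkeeping in the conjugation step. The conjugate of $\Cd$ must be precisely the $\Cdbar$ of \eqref{eq:S}, and the determinant twists must cancel on both sides so that $S''$ is a genuine intertwiner of the untwisted representations. Once that is in place, uniqueness via Pieri's rule does the rest.
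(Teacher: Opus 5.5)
Your proof is correct and is essentially the paper's own argument, with more of the bookkeeping written out: $K$ is conjugation followed by the identification $\overline{V_\mu}\cong V_{\mu^c}\otimes{\det}^{-R}$, the conjugated $T$ is identified with $S'$ by multiplicity one, the partial traces of $\mathcal K(T\rho T^\dagger)\mathcal K^{-1}$ give the two spectral equalities, and $Kv_{\lp}$ is a lowest weight vector. One harmless slip: the covariance of $K$ is $K\pi_{\lp}(U)=\det(U)^{-R}\,\pi_{(\lp)^c}(U)\,K$, with $U$ rather than $\overline U$, because the conjugate representation is a representation in the same group element $U$; the orbit argument goes through unchanged.
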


\begin{proof}
For a unitary representation $V$, let $\overline V$ be the same space with the conjugate scalar multiplication, so that the identity map $V\to\overline V$ is antiunitary and equivariant, and the weights of $\overline V$ are those of $V$ negated. Then $\overline{V_\mu}\cong V_{\mu^c}\otimes{\det}^{-R}$, since both have the highest weight $(-\mu_d,\dots,-\mu_1)$~\cite[Lecture~15]{FultonHarris}; for $V=\Cd$, the conjugate $\overline V$ is the representation $\Cdbar$ of Section~\ref{sec:channels}, on which $U$ acts by $\overline U$. Viewed as a map $\overline{V_{\lp}}\to\overline{V_\lambda}\otimes\Cdbar$, the isometry $T$ becomes an isometric intertwiner $V_{(\lp)^c}\to V_{\lambda^c}\otimes\Cdbar$, the two determinant twists cancelling. Since $\lambda^c=(\lp)^c+e_{j'}$, this is the isometry \eqref{eq:S} of the pair $\big((\lp)^c,\lambda^c\big)$, which we call $S'$; it is unique up to a phase by Pieri's rule, so tracing out one factor or the other gives $\Phi_+'$ and $\Psi_+'$. Write $K$, $K'$, $K_0$ for the antiunitary identifications of $V_{\lp}$, $V_\lambda$, $\Cd$ with $V_{(\lp)^c}$, $V_{\lambda^c}$, $\Cdbar$. What we have shown is $S'K=(K'\otimes K_0)T$, and hence $\Phi_+'(K\rho K^{-1})=K'\Phi_-(\rho)K'^{-1}$ and $\Psi_+'(K\rho K^{-1})=K_0\Psi_-(\rho)K_0^{-1}$; conjugation by an antiunitary preserves spectra. Finally, $Kv_{\lp}$ has weight $(R,\dots,R)-\lp$ in $V_{(\lp)^c}$ (its weight in $\overline{V_{\lp}}$ is $-\lp$), which is the lowest weight of $V_{(\lp)^c}$. A permutation matrix maps it to the highest weight vector, and since $K$ is equivariant up to antilinearity, $K$ maps the coherent states of $V_{\lp}$ onto those of $V_{(\lp)^c}$.
\end{proof}

\begin{lemma}\label{lem:translate}
Let $\kappa'$ be \eqref{eq:kappa} for the pair $\big((\lp)^c,\lambda^c\big)$, and let $j'=d+1-j$. Then, for $1\leq k\leq j$,
\[
\frac{\kappa'_{j'+k-1}}{\kappa'_d}=\prod_{i=1}^{j-k}\Big(1-\frac{1}{h_{(i,m)}}\Big),
\]
with hook lengths taken in $\lp$.
\end{lemma}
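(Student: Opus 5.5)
The plan is to compute the ratio directly from the second formula in \eqref{eq:weyl}, applied to the dual pair, and then to recognise each factor as a hook length of $\lp$.

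\emph{The dual pair.} Write $\lambda':=(\lp)^c$. The box is added in row $j'=d+1-j$, and $\lambda'+e_{j'}=\lambda^c$. Complementation sends row $a$ to row $d+1-a$ with $\lambda'_{d+1-a}=R-\lp_a$. Put $\omega'_b:=\lambda'_b-b$.

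\emph{Applying Lemma~\ref{lem:weyl}.} With $n=j'+k-1=d-j+k$, the second formula in \eqref{eq:weyl} gives
\[
\frac{\kappa'_{n}}{\kappa'_d}=\prod_{n<b\leq d}\frac{\omega'_{j'}-\omega'_b}{\omega'_{j'}-\omega'_b+1}.
\]
Substitute $b=d+1-i$. The condition $n<b\leq d$ becomes $1\leq i\leq j-k$, so the range of the product is exactly the one in the statement.

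\emph{Computing each difference.} Since $\lp_j=m$,
\[
\omega'_{j'}=R-m-(d+1-j),\qquad \omega'_b=R-\lp_i-(d+1-i),
\]
and therefore
\[
\omega'_{j'}-\omega'_b=\lp_i-m+j-i.
\]
It remains to identify this with $h_{(i,m)}-1$ in $\lp$. That hook consists of the arm, with $\lp_i-m$ boxes, the leg, with $\lp'_m-i$ boxes, and the box itself. Because $(j,m)$ is removable, column $m$ of $\lp$ has height exactly $j$, so $\lp'_m=j$. Hence
\[
h_{(i,m)}=\lp_i-m+j-i+1,
\]
and each factor equals $(h_{(i,m)}-1)/h_{(i,m)}=1-1/h_{(i,m)}$.

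The only delicate step is the index bookkeeping under complementation: matching the range $n<b\leq d$ with $i\leq j-k$, and using removability to get the leg length. The case $k=j$, where the product is empty and equals $1$, is consistent with $n=d$.
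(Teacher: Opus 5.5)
Your proof is correct, but it takes a different route from the paper's. You apply the second formula of Lemma~\ref{lem:weyl} directly to the dual pair, with $(\lp)^c$ as the smaller diagram and the box in row $j'$. After the substitution $b=d+1-i$, every factor has $\omega'_{j'}-\omega'_b=\lp_i-m+j-i$. This is $h_{(i,m)}-1$ by the arm--leg formula, since removability of $(j,m)$ fixes the height of column $m$ at $j$.

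The paper never introduces $\omega'$. It uses the $U(n)$-duality $D_n(\lambda^c)=D_n(\lambda_\downarrow)$, where $\lambda_\downarrow$ consists of the last $n$ rows. This rewrites $\kappa'_n/\kappa'_d$ as a quotient of two dimension ratios for the original pair, $\dim V_{\lp}/\dim V_\lambda$ and $D_n(\lp_\downarrow)/D_n(\lambda_\downarrow)$. The paper evaluates both with the hook-content formula, tracking how adding the box changes the hook lengths in its row and column. In the quotient, the content factor and the row factors cancel, and the column hooks remain.

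Your version is shorter and needs only a lemma the paper has already proved and the standard expression for a hook length. The paper's version stays with the original diagrams, shows why only the hooks in column $m$ survive, and gives the hook formula for $\kappa_d$ along the way. Your computation also shows that $\omega'_{j'}-\omega'_{d+1-i}=\omega_i-\omega_j-1$ for $i<j$. So the first formula of Lemma~\ref{lem:weyl} gives $\kappa_j=\prod_{i<j}(1-1/h_{(i,m)})$, the identity quoted in Remark~\ref{rem:rectangle}, in the same way.
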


\begin{proof}
Put $n:=j'+k-1=d-j+k$, and let $\lambda_\downarrow$ and $\lp_\downarrow$ be the diagrams formed by the last $n$ rows of $\lambda$ and of $\lp$. The first $n$ rows of $\lambda^c$ are $(R-\lambda_d,\dots,R-\lambda_{d+1-n})$, so the $U(n)$-representation with that highest weight is, up to a determinant twist, dual to the one with highest weight $\lambda_\downarrow$. Hence $D_n(\lambda^c)=D_n(\lambda_\downarrow)$, and likewise for $\lp$; for $n=d$ these are $\dim V_\lambda$ and $\dim V_{\lp}$. Since $\kappa'_n=D_n(\lambda^c)/D_n((\lp)^c)$, this gives
\[
\frac{\kappa'_n}{\kappa'_d}=\frac{\dim V_{\lp}}{\dim V_\lambda}\Big/\frac{D_n(\lp_\downarrow)}{D_n(\lambda_\downarrow)},
\]
and it remains to compute the two ratios. Both follow from the hook-content formula $D_n(\mu)=\prod_{(a,b)\in\mu}(n+b-a)/h_{(a,b)}$, in which $b-a$ is the content of the box $(a,b)$~\cite[Cor.~7.21.4]{Stanley}. Adding the box $(j,m)$ to $\lambda$ contributes its content $m-j$ and hook length $1$, raises by one the hook lengths in its row and in its column, and changes nothing else. In $\lp_\downarrow$ the box sits in row $k$; its row survives entire, and of its column only the rows $j-k+1,\dots,j-1$ survive. Therefore
\[
\frac{D_n(\lp_\downarrow)}{D_n(\lambda_\downarrow)}=(d+m-j)\prod_{c<m}\Big(1-\frac{1}{h_{(j,c)}}\Big)\prod_{i=j-k+1}^{j-1}\Big(1-\frac{1}{h_{(i,m)}}\Big),
\]
where we used $n+(m-k)=d+m-j$. Taking $n=d$ and $k=j$ gives the same expression for $\dim V_{\lp}/\dim V_\lambda$, with the last product running over all $i<j$. In the quotient, the prefactor and the row product cancel, and $\prod_{i=1}^{j-k}(1-1/h_{(i,m)})$ remains.
\end{proof}

\begin{proof}[Proof of Theorem~\ref{thm:main}]
By Lemma~\ref{lem:complement}, the channels $\Phi_-,\Psi_-$ have the same output spectra as the channels $\Phi_+',\Psi_+'$ of the dual pair $\big((\lp)^c,\lambda^c\big)$, with coherent states corresponding to coherent states, and the box is added there in row $j'=d+1-j$. Theorem~\ref{cor:adding} gives the majorization. The coherent-state output of $\Psi_+'$ is diagonal with $d-j'+1=j$ nonzero eigenvalues and $s_k=\kappa'_{j'+k-1}/\kappa'_d$, which is \eqref{eq:hooks} by Lemma~\ref{lem:translate}. Finally, $\Psi_-(\rho_0)$ is diagonal in the basis $(e_a)$ because the vectors $T_av_{\lp}$ have distinct weights, and $\Phi_-(\rho_0)$ has the same nonzero eigenvalues because $\Phi_-$ and $\Psi_-$ are complementary. It remains to identify the individual entries \eqref{eq:entries}. By the proof of Lemma~\ref{lem:complement}, $\Psi_+'(K\rho_0K^{-1})=K_0\Psi_-(\rho_0)K_0^{-1}$, and $K_0$ preserves diagonal entries. Moreover, $Kv_{\lp}$ is a lowest weight vector of $V_{(\lp)^c}$, and the permutation matrix $w$ with $we_a=e_{d+1-a}$ maps it to a highest weight vector, up to a phase. By covariance, the entry of $\Psi_-(\rho_0)$ at $e_a$ is therefore the entry $g'_{d+1-a}$ of the coherent-state output of $\Psi_+'$, given by \eqref{eq:target2} for the dual pair. This vanishes for $a>j$, since then $d+1-a<j'$. For $a\leq j$, \eqref{eq:target2} gives $\kappa'_dg'_{j'}=\kappa'_{j'}$ and $\kappa'_dg'_{b}=\kappa'_{b}-\kappa'_{b-1}$ for $b>j'$, and Lemma~\ref{lem:translate}, applied with $k=j-a+1$ and, if $a<j$, with $k=j-a$, turns this into
\[
g'_{d+1-a}=\prod_{i<a}\Big(1-\frac{1}{h_{(i,m)}}\Big)-\prod_{i\leq a}\Big(1-\frac{1}{h_{(i,m)}}\Big)=\frac{1}{h_{(a,m)}}\prod_{i<a}\Big(1-\frac{1}{h_{(i,m)}}\Big),
\]
where for $a=j$ the second product is $0$, because $h_{(j,m)}=1$. This is \eqref{eq:entries}.
\end{proof}

\subsection{Entropy and reduced density matrices}\label{sec:corproofs}

\begin{proof}[Proof of Corollary~\ref{cor:entropy}]
$A\mj B$ implies $\Tr f(A)\geq\Tr f(B)$ for all concave $f$ \cite[Ch.~3]{MOA}.
\end{proof}

For the $N$-particle results, fix a removable box $(j,m)$ of $\nu$ and let $R_{(j,m)}$ be the projection of $\otimes^{N-1}\Cd$ onto its isotypic component $V_{\nu-e_j}\otimes S_{\nu-e_j}$. Then $R_{(j,m)}\otimes\one$ commutes with $U(d)$ and with $S_{N-1}$. It therefore preserves $V_\nu\otimes S_\nu$, and there it is the projection $P_{(j,m)}$ onto the $S_{N-1}$-isotypic component $V_\nu\otimes S_\nu^{(j,m)}$.

\begin{lemma}\label{lem:nestedspace}
Inside $\big(V_{\nu-e_j}\otimes S_{\nu-e_j}\big)\otimes\Cd=\big(V_{\nu-e_j}\otimes\Cd\big)\otimes S_{\nu-e_j}$, we have
\[
V_\nu\otimes S_\nu^{(j,m)}=T(V_\nu)\otimes S_{\nu-e_j},
\]
where $T$ is the isometry \eqref{eq:T} for $\lp=\nu$ and the box $(j,m)$. Hence $\gamma=\Psi_-(\rho)$ for every density matrix $\varrho$ on $V_\nu\otimes S_\nu^{(j,m)}$, where $\rho$ is its $V_\nu$-marginal.
\end{lemma}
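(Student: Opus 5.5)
The plan is to show that $T(V_\nu)\otimes S_{\nu-e_j}$ lies inside $V_\nu\otimes S_\nu^{(j,m)}$, and then to conclude equality by comparing dimensions. Both spaces sit inside $\otimes^N\Cd$.

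\emph{Setup.} Write $W:=V_{\nu-e_j}\otimes S_{\nu-e_j}\subseteq\otimes^{N-1}\Cd$ for the isotypic component. Then $W\otimes\Cd$ is the range of $R_{(j,m)}\otimes\one$, and it carries commuting actions of $U(d)$ (diagonal) and of $S_{N-1}$ (acting on the first $N-1$ factors). Rearranging the factors, $W\otimes\Cd=(V_{\nu-e_j}\otimes\Cd)\otimes S_{\nu-e_j}$. By Pieri's rule, the $U(d)$-isotypic component of type $\nu$ in this space is exactly $T(V_\nu)\otimes S_{\nu-e_j}$, because $V_\nu$ occurs once in $V_{\nu-e_j}\otimes\Cd$, with image $T(V_\nu)$.

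\emph{Inclusion.} The space $V_\nu\otimes S_\nu\subseteq\otimes^N\Cd$ is the full $U(d)$-isotypic component of type $\nu$. Hence the $U(d)$-type-$\nu$ part of $W\otimes\Cd$ equals $(W\otimes\Cd)\cap(V_\nu\otimes S_\nu)$. The reason is that $W\otimes\Cd$ is $U(d)$-invariant, so its $\nu$-isotypic part is the image under the isotypic projection of $\otimes^N\Cd$, and that projection commutes with $R_{(j,m)}\otimes\one$. The intersection is the range of the product of the two commuting projections. On $V_\nu\otimes S_\nu$ this product is $P_{(j,m)}$, whose range is $V_\nu\otimes S_\nu^{(j,m)}$. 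So in fact the two spaces are equal directly, without a separate dimension count. As a consistency check, both have dimension $\dim V_\nu\cdot\dim S_{\nu-e_j}$, using \eqref{eq:branchSN}.

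\emph{Reduced density matrix.} Let $\varrho$ be a density matrix on $T(V_\nu)\otimes S_{\nu-e_j}$. Writing it as $(T\otimes\one)\tilde\varrho(T\otimes\one)^\dagger$ with $\tilde\varrho$ on $V_\nu\otimes S_{\nu-e_j}$, the partial trace over the first $N-1$ particles is the trace over $V_{\nu-e_j}\otimes S_{\nu-e_j}$. Tracing out $S_{\nu-e_j}$ first turns $\tilde\varrho$ into its $V_\nu$-marginal $\rho$. Then tracing out $V_{\nu-e_j}$ from $T\rho T^\dagger$ gives $\Psi_-(\rho)$ by \eqref{eq:removing}.

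\emph{Main obstacle.} The delicate point is the identification of the marginal. The $V_\nu$-marginal of $\varrho$ must be defined through the isometry $T\otimes\one$ from $V_\nu\otimes S_{\nu-e_j}\cong V_\nu\otimes S_\nu^{(j,m)}$. This requires that the isomorphism $S_\nu^{(j,m)}\cong S_{\nu-e_j}$ be compatible with the tensor decomposition $V_\nu\otimes S_\nu$, i.e.\ that the identification respects the tensor factorization. I would argue this by noting that $U(d)$ acts on the first factor only in both pictures. Intertwiners of irreducible $U(d)$-isotypic components therefore have the form $\one_{V_\nu}\otimes A$ with $A$ unitary on the multiplicity space, so the $V_\nu$-marginal is unaffected.
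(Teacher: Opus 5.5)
Your proof is correct and follows essentially the same route as the paper. Both arguments identify each side as the intersection of the range of $R_{(j,m)}\otimes\one$ with the $\nu$-isotypic component, use Pieri's rule for the right-hand side, and compute $\gamma$ by writing $\varrho=(T\otimes\one)\varrho'(T\otimes\one)^\dagger$. Your Schur-lemma remark adds something the paper leaves implicit: the identification $V_\nu\otimes S_\nu^{(j,m)}\cong V_\nu\otimes S_{\nu-e_j}$ has the form $\one_{V_\nu}\otimes A$, so the $V_\nu$-marginal is unambiguous.
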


\begin{proof}
Both sides are the intersection of the range of $R_{(j,m)}\otimes\one$ with the $\nu$-isotypic component $V_\nu\otimes S_\nu$ of $\otimes^N\Cd$. For the left-hand side this is the definition of $P_{(j,m)}$. For the right-hand side, $U(d)$ acts on $(V_{\nu-e_j}\otimes\Cd)\otimes S_{\nu-e_j}$ through the first factor only, whose $\nu$-isotypic component is $T(V_\nu)$ by Pieri's rule. Now write $\varrho=(T\otimes\one)\varrho'(T\otimes\one)^\dagger$ with $\varrho'$ on $V_\nu\otimes S_{\nu-e_j}$, and trace out $V_{\nu-e_j}\otimes S_{\nu-e_j}$: this gives $\gamma=\Tr_{V_{\nu-e_j}}(T\rho T^\dagger)=\Psi_-(\rho)$.
\end{proof}

\begin{proof}[Proof of Theorem~\ref{thm:nested}]
By Lemma~\ref{lem:nestedspace}, the $\gamma$ that occur are exactly the outputs of $\Psi_-$, since every density matrix on $V_\nu$ is the marginal of some $\varrho$. Theorem~\ref{thm:main} gives $\spec\gamma\mj p^{(j,m)}$, with equality for $\rho=\rho_0$, which is the marginal of $|v_\nu\rangle\langle v_\nu|\otimes\sigma$. Conversely, the set of outputs is convex, and it is stable under conjugation by $U(d)$, since $\Psi_-(\pi_{\lp}(U)\rho\,\pi_{\lp}(U)^\dagger)=U\Psi_-(\rho)U^\dagger$. If $\spec A\mj p^{(j,m)}$, then $\spec A$ is a convex combination of permutations of $p^{(j,m)}$, by Rado's theorem~\cite[Ch.~4]{MOA} (the Hardy--Littlewood--P\'olya theorem combined with Birkhoff's theorem). So $A$ is a convex combination of conjugates of $\Psi_-(\rho_0)$, and it is an output.
\end{proof}

\begin{proof}[Proof of Corollary~\ref{thm:rdm}]
Since the $P_{(j,m)}$ add up to the identity on $V_\nu\otimes S_\nu$, we have $\varrho=\sum_{(j,m),(j',m')}P_{(j,m)}\varrho P_{(j',m')}$. The cross terms do not contribute to $\gamma$: the partial trace over the first $N-1$ factors is cyclic in these factors, so for two different boxes
\[
\Tr_{1,\dots,N-1}\big((R_{(j,m)}\otimes\one)\,\varrho\,(R_{(j',m')}\otimes\one)\big)=\Tr_{1,\dots,N-1}\big((R_{(j',m')}R_{(j,m)}\otimes\one)\,\varrho\big)=0 .
\]
Hence
\begin{equation}\label{eq:gammadecomp}
\gamma=\sum_{(j,m)}c_{(j,m)}\gamma_{(j,m)},\qquad c_{(j,m)}:=\Tr P_{(j,m)}\varrho,
\end{equation}
where $\gamma_{(j,m)}$ is the reduced density matrix of the state $P_{(j,m)}\varrho P_{(j,m)}/c_{(j,m)}$ on $V_\nu\otimes S_\nu^{(j,m)}$ (terms with $c_{(j,m)}=0$ are omitted). The bound follows from Theorem~\ref{thm:nested} and the convexity of $s_k$, and the states of Theorem~\ref{thm:nested}, which lie in $V_\nu\otimes S_\nu$, attain it.
\end{proof}

\begin{proof}[Proof of Corollary~\ref{thm:polytope}]
By \eqref{eq:gammadecomp} and Theorem~\ref{thm:nested}, the $\gamma$ that occur are the convex combinations $\sum_ic_iA_i$ with $\spec A_i\mj p^{(i)}$, where $i$ runs over the removable boxes; every such combination occurs, for a mixture of states on the individual summands. For these, $\spec\gamma\mj\sum_ic_i\spec A_i\mj\sum_ic_ip^{(i)}$, with spectra in decreasing order. For the converse, the permutohedron $\mathcal P(x)=\{y:y\mj x\}$ satisfies $\mathcal P(\sum_ic_ip^{(i)})=\sum_ic_i\mathcal P(p^{(i)})$, because both sides are convex with the same support function: for $x$ in decreasing order, $\mathcal P(x)$ is the convex hull of the permutations of $x$ by Rado's theorem~\cite[Ch.~4]{MOA}, so its support function is $y\mapsto\sum_sx_sy^\downarrow_s$ by the rearrangement inequality, and this is additive in $x$. So if \eqref{eq:polytope} holds, then $\spec\gamma=\sum_ic_iq^{(i)}$ with $q^{(i)}\mj p^{(i)}$, and $A_i:=W\operatorname{diag}(q^{(i)})W^\dagger$, with $W$ a unitary diagonalizing $\gamma$, does the job. The same identity shows that $\bigcup_c\mathcal P\big(\sum_ic_ip^{(i)}\big)=\operatorname{conv}\bigcup_i\mathcal P(p^{(i)})$, the convex hull of all permutations of the $p^{(i)}$; the spectra are its intersection with the cone $x_1\geq\dots\geq x_d$.
\end{proof}

\subsection{The minimizers}\label{sec:minimizers}

States supported on $\mathcal S$ are optimal by a direct computation (Lemma~\ref{lem:Sout}). Conversely, if a unit vector $\phi$ has $\Psi_-(|\phi\rangle\langle\phi|)=\Psi_-(\rho_0)$, then $T_a^{\dagger}T_a\phi=\xi_a\phi$ for every $a\leq j$, and these conditions confine $\phi$ to $\mathcal S$ one row at a time (Lemma~\ref{lem:levels}). In the tableau picture of Section~\ref{sec:consequences}, $\phi$ thus ends up in the span of the tableaux whose row $a$ is filled with the entry $a$ for every $a\leq j$. Here $\xi_a=\|T_av_{\lp}\|^2$ are the entries \eqref{eq:entries}, so that $\Psi_-(\rho_0)=\operatorname{diag}(\xi_1,\dots,\xi_d)$.

\begin{lemma}\label{lem:Sout}
Every density matrix $\sigma$ supported on $\mathcal S$ satisfies $\Psi_-(\sigma)=\Psi_-(\rho_0)$.
\end{lemma}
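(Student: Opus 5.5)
The plan is to compute the matrix elements $\langle e_a|\Psi_-(\sigma)|e_b\rangle=\Tr(\sigma\,T_b^\dagger T_a)$ and show that the operators $T_b^\dagger T_a$, compressed to $\mathcal S$, equal $\delta_{ab}\xi_a\one_{\mathcal S}$. By linearity this reduces to pure states, or rather to matrix elements $\ip{\psi}{T_b^\dagger T_a\phi}=\ip{T_b\psi}{T_a\phi}$ for $\phi,\psi\in\mathcal S$. The tool is covariance under $G_j\cong U(d-j)$, the subgroup fixing $e_1,\dots,e_j$.

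\emph{Step 1 (vanishing for $a>j$).} A vector of $\mathcal S$ is a sum of weight vectors whose weights agree with $\lp$ in the first $j$ coordinates. Then $T_a\phi$ has weight $\theta-e_a$. For $a>j$ this weight agrees with $\lambda+e_j$ in the first $j$ coordinates, so the sum of its first $j$ coordinates is $|\lambda_{\leq j}|+1$. Every weight of $V_\lambda$ has first-$j$ partial sum at most $\lambda_1+\dots+\lambda_j$, since weights are dominated by $\lambda$. Hence $T_a\phi=0$ for $a>j$, and all entries with $a>j$ or $b>j$ vanish.

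\emph{Step 2 (off-diagonal entries with $a,b\leq j$).} For $a\neq b$, both $\leq j$, the vectors $T_a\phi$ and $T_b\psi$ have weights differing in coordinates $a,b\leq j$. Since weight spaces are orthogonal, the inner product vanishes. So the compression of $\Psi_-(\sigma)$ is diagonal, with $\Psi_-(\sigma)_{aa}=\Tr(\sigma\,T_a^\dagger T_a)$.

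\emph{Step 3 (diagonal entries via Schur).} For $a\leq j$, the operator $T_a^\dagger T_a$ commutes with $\pi_{\lp}(G_j)$. This follows from the covariance $T_a\pi_{\lp}(U)=\pi_\lambda(U)\sum_bU_{ab}T_b$ together with $U_{ab}=\delta_{ab}$ for $U\in G_j$ and $a\leq j$. Let $R$ be the orthogonal projection onto $\mathcal S$. It commutes with $G_j$, being a sum of weight spaces selected by first-$j$ coordinates, and $G_j$ commutes with the torus of $U(j)$. Hence $R\,T_a^\dagger T_a R$ is a $G_j$-intertwiner on $\mathcal S$. Because $\mathcal S$ is $G_j$-irreducible (stated in the paper), Schur's lemma gives $R\,T_a^\dagger T_aR=c_a R$. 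Evaluating at $v_{\lp}\in\mathcal S$ gives $c_a=\|T_av_{\lp}\|^2=\xi_a$. Thus $\Tr(\sigma T_a^\dagger T_a)=\xi_a$ for every density matrix $\sigma$ supported on $\mathcal S$, and $\Psi_-(\sigma)=\operatorname{diag}(\xi_a)=\Psi_-(\rho_0)$.

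The main point needing care is Step 3: checking that $R$ truly commutes with $G_j$ and that $\mathcal S$ is irreducible. I would take irreducibility from the description of $\mathcal S$ as the $G_j$-module generated by a highest weight vector, which is cyclic and therefore irreducible. Its equality with the sum of weight spaces agreeing with $\lp$ in the first $j$ coordinates is an adaptation of Lemma~\ref{lem:index}, with indices now required to be $>j$.
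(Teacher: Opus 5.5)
Your proof is correct and is essentially the paper's: both rest on Schur's lemma for the $G_j$-irreducible module $\mathcal S$, applied to $T_a^\dagger T_a$, which commutes with $\pi_{\lp}(G_j)$ by covariance because $U_{ab}=\delta_{ab}$ for $a\leq j$, and both fix the scalar by evaluating at $v_{\lp}$. The differences are minor. The paper also uses Schur for the off-diagonal $T_b^\dagger T_a$, and it gets $T_a\phi=0$ for $a>j$ from $\sum_{a\leq j}\xi_a=1=\sum_a\|T_a\phi\|^2$ rather than from dominance, so it never needs the weight-space description of $\mathcal S$. Your version uses only the easy half of that description (vectors of $\mathcal S$ have weights agreeing with $\lp$ in the first $j$ coordinates, since $\mathcal S$ is spanned by products of $E_{ab}$ with $a,b>j$ applied to $v_{\lp}$), because $R$ commutes with $G_j$ simply since $\mathcal S$ is $G_j$-invariant and $\pi_{\lp}$ is unitary.
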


\begin{proof}
Covariance says that $T_a\pi_{\lp}(U)=\pi_\lambda(U)\sum_bU_{ab}T_b$ for all $U\in U(d)$. For $U$ fixing $e_1,\dots,e_j$ this gives $T_a\pi_{\lp}(U)=\pi_\lambda(U)T_a$ for $a\leq j$, so $T_b^{\dagger}T_a$ commutes with $\pi_{\lp}(G_j)$ whenever $a,b\leq j$. As $\mathcal S$ is irreducible, $T_b^{\dagger}T_a$ compressed to $\mathcal S$ is a scalar, and evaluating at $v_{\lp}$, for which $T_b^{\dagger}T_av_{\lp}$ has weight $\lp+e_b-e_a$, shows that it is $\delta_{ab}\xi_a$. Since $\sum_{a\leq j}\xi_a=1=\sum_a\|T_a\phi\|^2$, every unit vector $\phi\in\mathcal S$ has $T_a\phi=0$ for $a>j$, and hence $\ip{T_b\phi}{T_a\phi}=\delta_{ab}\xi_a$ for all $a,b$. That is, $\Psi_-(|\phi\rangle\langle\phi|)=\Psi_-(\rho_0)$ for every unit vector $\phi\in\mathcal S$, and the lemma follows by linearity.
\end{proof}

For $0\leq r\leq j$, let $G_r\subseteq U(d)$ be the subgroup fixing $e_{j-r+1},\dots,e_j$, and let $\mathcal S_r\subseteq V_{\lp}$ be the $G_r$-submodule generated by $v_{\lp}$, with projection $P_{\mathcal S_r}$; for $r=j$ this is the group $G_j$ of Section~\ref{sec:consequences}. It is irreducible, with highest weight $\lp$ with the rows $j-r+1,\dots,j$ deleted; $\mathcal S_0=V_{\lp}$ and $\mathcal S_j=\mathcal S$.

\begin{lemma}\label{lem:levels}
For $1\leq r\leq j$ and $a=j-r+1$, the largest eigenvalue of $P_{\mathcal S_{r-1}}T_a^{\dagger}T_a$ on $\mathcal S_{r-1}$ is $\xi_a$, and the corresponding eigenspace is $\mathcal S_r$.
\end{lemma}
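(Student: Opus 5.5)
The plan is to diagonalize the compressed operator by symmetry, and then compare its eigenvalues on the different symmetry components. Write $a=j-r+1$, and let $H_a\subseteq G_{r-1}$ be the subgroup that fixes $e_{a+1},\dots,e_j$ and preserves the line $\C e_a$. Then $H_a\cong U(1)\times G_r$.

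\emph{Step 1: symmetry.} Covariance reads $T_a\pi_{\lp}(U)=\pi_\lambda(U)\sum_bU_{ab}T_b$. For $U\in H_a$ we have $U_{ab}=\delta_{ab}e^{i\theta}$, so $T_a^\dagger T_a$ commutes with $\pi_{\lp}(H_a)$. The projection $P_{\mathcal S_{r-1}}$ commutes with $G_{r-1}\supseteq H_a$. Hence $P_{\mathcal S_{r-1}}T_a^\dagger T_a$, restricted to $\mathcal S_{r-1}$, commutes with $H_a$.

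\emph{Step 2: multiplicity-freeness.} $\mathcal S_{r-1}$ is an irreducible $G_{r-1}\cong U(d-r+1)$-module. Restricting to $U(1)\times U(d-r)$, where the $U(1)$ acts on the coordinate $e_a$, is multiplicity free by the Gelfand--Tsetlin rule: each constituent of the restriction to $U(d-r)$ occurs once, with a fixed $e_a$-weight. So the compression is a scalar $c_\eta$ on each constituent $W_\eta$. We can read off $c_\eta$ as $\|T_aw_\eta\|^2$, where $w_\eta$ is the $G_r$-highest weight vector of $W_\eta$; indeed $T_aw_\eta$ is the weight-preserving image, and $P_{\mathcal S_{r-1}}$ does not change the expectation. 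The constituent containing $v_{\lp}$ is exactly $\mathcal S_r$, because $G_r$ generates it from $v_{\lp}$. Its scalar is $\|T_av_{\lp}\|^2=\xi_a$ by \eqref{eq:entries}.

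\emph{Step 3: every other constituent has a strictly smaller scalar.} This is the main obstacle. It must give a strict inequality $c_\eta<\xi_a$ for every $\eta$ other than the one of $\mathcal S_r$, not just an upper bound.

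My first attempt would be to compute $c_\eta$ explicitly. Via the dual pair (Lemma~\ref{lem:complement}), $\|T_aw_\eta\|^2$ is a squared reduced Wigner coefficient for $U(d-r+1)\supset U(d-r)$. The classical product formula (the one used in~\cite{RGKZ26}) expresses it as a product of factors of the form $(\ell_i-\ell'+\ldots)/(\ell_i-\ell_{i'}+\ldots)$ in Gelfand--Tsetlin labels. I would then show it is strictly decreasing as the interlacing pattern $\eta$ moves away from the top pattern, which is $\lp$ with row $a$ removed.

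As a fallback that avoids the formula, I would use the $k=1$ bound. Because the vectors of $\mathcal S_{r-1}$ have weights equal to $\lp$ in the coordinates $a+1,\dots,j$, the rows $a+1,\dots,j$ act as spectators. Inside $G_{r-1}$, the operator $T_a^\dagger T_a$ on $\mathcal S_{r-1}$ is then proportional to $T'^\dagger_aT'_a$ for the pair obtained by removing the box in row $a$ of the reduced diagram. Remark~\ref{rem:rectangle} gives $\|T'^\dagger_aT'_a\|$ equal to the value at the highest weight vector. Strictness would then come from the equality case in Lemma~\ref{lem:major}: equality forces the vector $Q$-invariant component to see only $\Theta$-eigenvalue $m+d-j$. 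On any $W_\eta$ with $\eta$ not the top pattern, the branching eigenvalue $\nu_s+k-s$ in Lemma~\ref{lem:compress} is strictly below $\mu_1+k-1$.

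Finally, I would combine Steps 2 and 3: the largest eigenvalue of the compression is $\xi_a$, and its eigenspace is exactly $W_{\eta}$ for the top pattern, that is $\mathcal S_r$.
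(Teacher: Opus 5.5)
\textbf{Where the proposal breaks.} Your Steps 1 and 2 are fine and agree with the paper. The compression commutes with $G_r$, the branching of $\mathcal S_{r-1}$ to $G_r$ is multiplicity free, and the component $\mathcal S_r$ gets the scalar $\xi_a$. The gap is in Step 3, and both of your routes fail there for the same reason. Each assumes that the rows $a+1,\dots,j$ are spectators. That is, you assume that on $\mathcal S_{r-1}$ the operator $T_a$ is a multiple of the intertwiner of a \emph{single} pair for $G_{r-1}$, so that $c_\eta$ is one squared reduced Wigner coefficient. For $r\geq2$ this is false.

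Let $\mu$ be the highest weight of $\mathcal S_{r-1}$. For $\phi\in\mathcal S_{r-1}$, the vector $T_a\phi$ has $j$-th weight coordinate $\lp_j=\lambda_j+1$. It therefore lies in a $G_{r-1}$-submodule of $V_\lambda$ that is reducible, with constituents of types $\mu-e_t$ for several removable rows $t$ of $\mu$. Accordingly, $c_\eta=\sum_t w_t\chi_t(\eta)$ is a nonnegative combination of the scalars $\chi_t$ of several single pairs $(\mu-e_t,\mu)$, and these peak on different components.

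A concrete check: take $\lp=(3,2,1)$, $d=3$, $j=3$, $r=2$, $a=2$.
\begin{itemize}
\item $\mathcal S_1$ is the $U(2)$-module of highest weight $(3,2)$, spanned by $v_{\lp}$ and $\pi_{\lp}(\tau)v_{\lp}$, where $\tau$ swaps $e_1$ and $e_2$.
\item Covariance and \eqref{eq:entries} give the eigenvalues $\xi_2=\tfrac4{15}$ and $\xi_1=\tfrac15$, in that order.
\item The single pair $(3,1)\subset(3,2)$ gives $\tfrac23,\tfrac13$, and the pair $(2,2)\subset(3,2)$ gives $0,1$.
\end{itemize}
The compression is proportional to neither. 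It equals $\tfrac1{15}\chi_1+\tfrac25\chi_2$.

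\textbf{Why the fallback does not help.} Your strictness argument via Lemma~\ref{lem:major} does not repair this. The Casimir bound applies only to the special shape reached through Proposition~\ref{prop:tall}. That step is an inequality that changes the channel, so no eigenspace information survives the reduction. Without the spectator reduction, Remark~\ref{rem:rectangle} only bounds $\|T_a\phi\|^2$ by its global maximum $\xi_j$ over $V_{\lp}$. That is not the smaller value $\xi_a$ you need on $\mathcal S_{r-1}$.

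\textbf{What is missing.} You need a way to control a combination of channels that peak in different places. The paper does this in three steps.
\begin{itemize}
\item It shows that the weights $w_t$ do not depend on $\eta$. They come from decomposing $\phi\mapsto\sum_b T_b\phi\otimes e_b$, with $b$ running over the coordinates moved by $G_{r-1}$, into $G_{r-1}$-isotypic components.
\item It pins down the weights through their values on the row deletions $\eta^{(s)}$ of $\mu$. Covariance gives these values as $\xi_s$ or $0$.
\item It shows that every maximizer lies on the face $\eta_b=\mu_{b+1}$ ($b\geq a$). There, Lagrange interpolation collapses the combination to a single product, which is strictly maximal at $\eta^{(a)}$, the highest weight of $\mathcal S_r$.
\end{itemize}
Only for $r=1$, where $T_j^\dagger T_j$ is itself a single-pair operator, is your first route valid as stated; that is the paper's base case.
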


\begin{proof}
The operator commutes with $G_r$, so it acts by a scalar on each $G_r$-component of $\mathcal S_{r-1}$. We compute these scalars and show that the largest is $\xi_a$, attained only on $\mathcal S_r$. Each scalar is a combination of known ones for single pairs of diagrams, with nonnegative weights that do not depend on the component (Step~1). The weights are fixed by the scalars on a few special components (Step~2). On a face of the set of components, which contains every maximizer, the combination is a single product (Step~3), and this product is maximal exactly at $\mathcal S_r$ (Step~4).

Let $\mu$ be the highest weight of $\mathcal S_{r-1}$, with its rows numbered consecutively, so that its rows $1,\dots,a$ are those of $\lp$. Put $\ell_c=\mu_c-c$, and let $\ell_\ast=m-j$ be the content of the removed box, so that $h_{(c,m)}=\ell_c-\ell_\ast+1$ for $c\leq a$. The indices $s,t$ run over the removable rows of $\mu$; row $a$ is one of them, since $\mu_a\geq m>\lp_{j+1}$.

For $r=1$ we have $P_{\mathcal S_0}=\one$, only $t=j$ occurs in Step~1, with $w_j=1$, and $\ell_j=\ell_\ast$; so Step~2 is empty, Step~3 reduces to the cancellation on the face, and the lemma reduces to maximizing the formula of~\cite{RGKZ26} (Step~4); the maximum $\xi_j$ is the product in \eqref{eq:hook1} for the box $(j,m)$, found in~\cite{Reuvers19}. The work is in $r\geq2$, where $T_a^{\dagger}T_a$ does not preserve $\mathcal S_{r-1}$.

\emph{Step 1: the scalars as a fixed combination.} Under $G_r$, the module $\mathcal S_{r-1}$ branches without multiplicity into components whose highest weights $\eta$ interlace $\mu$~\cite[\S8.1]{GoodmanWallach}. Let $\chi(\eta)$ be the scalar by which our operator acts on the component $\eta$, and put $\zeta_b=\eta_b-b$. For a single pair $(\mu-e_t,\mu)$, with isometric intertwiner $T^{(t)}$, the corresponding scalar of $T^{(t)\dagger}_aT^{(t)}_a$ is known:
\[
\chi_t(\eta)=\prod_b(\zeta_b-\ell_t)\Big/\prod_{c\neq t}(\ell_c-\ell_t+1)\ \geq0,
\]
by the classical formula for squares of reduced Wigner coefficients~\cite{VilenkinKlimyk}, in the form of~\cite[Thm.~1, Eq.~(14)]{RGKZ26}, with $\mu-e_t$, $\eta$, $t$ in place of their $\mu$, $\nu$, $k$. The formula is stated there for $T^{(t)}_aT^{(t)\dagger}_a$, which has the same nonzero scalars because $T^{(t)}_a$ is $G_r$-equivariant; its numerator vanishes exactly when $\eta$ does not interlace $\mu-e_t$. To relate our operator to these, let $I$ be the set of coordinates moved by $G_{r-1}$. The map $\phi\mapsto\sum_{b\in I}T_b\phi\otimes e_b$ is a $G_{r-1}$-intertwiner from $\mathcal S_{r-1}$ into $V_\lambda\otimes\operatorname{span}\{e_b:b\in I\}$. Decompose $V_\lambda$ under $G_{r-1}$: by Pieri's rule, the map lands in the isotypic components of type $\mu-e_t$, and there it is $T^{(t)}$ tensored with a vector $u_t$ of the multiplicity space. These components are orthogonal, so
\[
\chi(\eta)=\sum_tw_t\chi_t(\eta),\qquad w_t=\|u_t\|^2\geq0,
\]
with weights that do not depend on $\eta$.

\emph{Step 2: the weights are fixed by the deletions.} Let $\eta^{(s)}$ be $\mu$ with row $s$ deleted, let $s'$ be the index in $\{1,\dots,d\}$ of that row, and let $\tau\in G_{r-1}$ exchange $e_a$ and $e_{s'}$. Conjugation by $\tau$ turns $G_r$ into the subgroup fixing $e_{s'}$ instead of $e_a$, under which $v_{\lp}$ generates a module of highest weight $\eta^{(s)}$. So $\pi_{\lp}(\tau)v_{\lp}$ is a unit vector in the component $\eta^{(s)}$, and by covariance and Lemma~\ref{lem:hw}
\begin{equation}\label{eq:deletions}
\chi(\eta^{(s)})=\|T_{s'}v_{\lp}\|^2=\begin{cases}\xi_s,& s\leq a,\\ 0,& s>a.\end{cases}
\end{equation}
Since $\chi_t(\eta^{(s)})$ vanishes for $s>t$ but not for $s=t$, the system \eqref{eq:deletions} is triangular in the weights: any combination $\sum_t\alpha_t\chi_t$ with the values \eqref{eq:deletions} has $\alpha_t=w_t$.

\emph{Step 3: a single product on a face.} Call the set of $\eta$ with $\eta_b=\mu_{b+1}$ for all $b\geq a$ the face. We show that $w_t=0$ for $t>a$, and that on the face
\[
\chi(\eta)=F(\eta):=\Delta(\ell_\ast)\Big/\prod_{c\leq a}h_{(c,m)},\qquad \Delta(y):=\prod_{b<a}(\zeta_b-y).
\]
On the face, $\zeta_b-\ell_t=\ell_{b+1}-\ell_t+1$ for $b\geq a$, so these factors of $\chi_t$ cancel the factors of its denominator with $c>a$, and for $t\leq a$
\[
\chi_t(\eta)=\frac{\Delta(\ell_t)}{\prod_{c\leq a,\,c\neq t}(\ell_c-\ell_t+1)}.
\]
As $\Delta$ has degree $a-1$, Lagrange interpolation at the nodes $\ell_1,\dots,\ell_a$ gives $F=\sum_{t\leq a}\alpha_t\chi_t$ on the face, with coefficients $\alpha_t$ that depend only on the $\ell_c$ and $\ell_\ast$, not on $\eta$. Only removable rows $t$ occur: if $\mu_t=\mu_{t+1}$, interlacing forces $\eta_t=\mu_t$, so $\Delta(\ell_t)=0$. The combination $\sum_{t\leq a}\alpha_t\chi_t$ has the values \eqref{eq:deletions}. Indeed, the deletions $\eta^{(s)}$ with $s\leq a$ lie on the face, where it equals $F(\eta^{(s)})=\xi_s$ by \eqref{eq:entries}; and for $s>a$ it vanishes, since $\chi_t(\eta^{(s)})=0$ for $t\leq a<s$. By Step~2, $\alpha_t=w_t$. So $w_t=0$ for $t>a$, and $\chi=F$ on the face.

\emph{Step 4: the maximum.} By Step~3, $\chi=\sum_{t\leq a}w_t\chi_t$. For $b\geq a\geq t$ the factor $\zeta_b-\ell_t$ of $\chi_t$ is nonpositive and largest in modulus at $\eta_b=\mu_{b+1}$. So moving $\eta_b$ to $\mu_{b+1}$ for all $b\geq a$, which keeps $\eta$ interlacing $\mu$ and leaves the other factors unchanged, does not decrease any $\chi_t(\eta)$, $t\leq a$, and increases $\chi(\eta)$ strictly if $\chi(\eta)>0$ and $\eta$ is not already on the face. Every maximizer of $\chi$ therefore lies on the face. There $\chi=F$ is a product of positive factors, increasing in each $\zeta_b$ with $b<a$, and so largest only at $\eta_b=\mu_b$ $(b<a)$. That is $\eta^{(a)}$, the highest weight of $\mathcal S_r$, and $F(\eta^{(a)})=\xi_a$.
\end{proof}

\begin{proof}[Proof of Theorem~\ref{thm:minimizers}]
By Lemma~\ref{lem:Sout} and covariance, the states in the theorem have output spectrum $p^{(j,m)}$. Conversely, let $\rho=\sum_iq_i|\phi_i\rangle\langle\phi_i|$ with $q_i>0$ have this output spectrum, and take $f$ strictly concave. As $A\mapsto\Tr f(A)$ is strictly concave~\cite{Carlen10}, equality in
\[
\Tr f\big(\Psi_-(\rho)\big)\ \geq\ \sum_iq_i\,\Tr f\big(\Psi_-(|\phi_i\rangle\langle\phi_i|)\big)\ \geq\ \Tr f\big(\Psi_-(\rho_0)\big),
\]
where the second inequality is Corollary~\ref{cor:entropy}, forces all $\Psi_-(|\phi_i\rangle\langle\phi_i|)$ to equal $\Psi_-(\rho)$. Their common value has spectrum $p^{(j,m)}$, so after replacing $\rho$ by $\pi_{\lp}(U)^\dagger\rho\,\pi_{\lp}(U)$ for a suitable $U$ we may assume, by covariance, that it is $\Psi_-(\rho_0)$. It remains to show that each $\phi=\phi_i$ lies in $\mathcal S$. For $N_k=\sum_{a=j-k+1}^{j}T_a^{\dagger}T_a$ and a unit vector $\psi$, the number $\ip{\psi}{N_k\psi}$ is a sum of $k$ diagonal entries of $\Psi_-(|\psi\rangle\langle\psi|)$, hence at most the $k$-th partial sum of $p^{(j,m)}$ by Theorem~\ref{thm:main}, and $\phi$ attains this by \eqref{eq:entries}. So $\phi$ is an eigenvector of each $N_k$ for its largest eigenvalue, and subtracting the eigenvalue equations for $N_k$ and $N_{k-1}$ gives $T_a^{\dagger}T_a\phi=\xi_a\phi$ for $a\leq j$. If $\phi\in\mathcal S_{r-1}$, then $P_{\mathcal S_{r-1}}T_a^{\dagger}T_a\phi=\xi_a\phi$ with $a=j-r+1$, so $\phi\in\mathcal S_r$ by Lemma~\ref{lem:levels}. Starting from $\mathcal S_0=V_{\lp}$, this gives $\phi\in\mathcal S$.
\end{proof}

\begin{proof}[Proof of Corollary~\ref{cor:coherent}]
We first show that a unit vector $\phi\in\mathcal S$ is coherent in $V_{\lp}$ if and only if it is coherent in the $G_j$-module $\mathcal S$, that is, lies in the $G_j$-orbit of $v_{\lp}$ up to a phase. The weights of $\mathcal S$ agree with $\lp$ in their first $j$ entries, so for $\phi\in\mathcal S$ the matrix $\big(\ip{\phi}{E_{ab}\phi}\big)_{a,b}$ is $\operatorname{diag}(\lp_1,\dots,\lp_j)\oplus M'$, where $M'$ is the same matrix for $\phi$ regarded as a vector in $\mathcal S$. A unit vector is coherent exactly when this matrix has maximal Hilbert--Schmidt norm, namely the norm of the highest weight~\cite{DF77,Perelomov}, and the block structure shows that this happens in $V_{\lp}$ if and only if it happens in $\mathcal S$.

Now (a) follows from Theorem~\ref{thm:minimizers}: if $\dim\mathcal S=1$, every state supported on $\mathcal S$ is $\rho_0$, and otherwise $\mathcal S$ carries non-coherent states, for instance the normalized identity. For (b), every pure minimizer is coherent exactly when $G_j$ acts transitively on the unit sphere of $\mathcal S$ up to phase. By the classification of compact groups transitive on spheres~\cite{MS43,Borel49}, this happens exactly when $\mathcal S$ is one-dimensional or, up to a determinant twist, $\C^{d-j}$ or its dual: the image of the subgroup $SU(d-j)\subseteq G_j$ is a quotient of $SU(d-j)$, and the only such group in the classification is $SU(D)$ in its defining representation, $D=\dim\mathcal S$. In terms of the diagram, this means that $(\lp_{j+1},\dots,\lp_d)$ is constant, $(c+1,c,\dots,c)$ or $(c,\dots,c,c-1)$.
\end{proof}

\section*{Acknowledgements}

I thank Tommaso Aschieri, Dmitry Grinko, B\l{}a\.zej Ruba, Jan Philip Solovej and Elias Theil for interesting discussions. This paper was written while I was visiting the Isaac Newton Institute for Mathematical Sciences, Cambridge, during the programme \emph{Mathematics of Many-Body Entanglement}, and I thank the Institute for its support and hospitality. This work was partially supported by a grant from the Simons Foundation, through a Simons Foundation Fellowship of the Institute, and by EPSRC grant EP/Z000580/1. I thank Nilanjana Datta for hosting me during a previous research visit to Cambridge, and the Royal Society for supporting that visit (AL\textbackslash 24100022). I also thank QMATH at the University of Copenhagen for hospitality during a visit. I am a member of the Gruppo Nazionale per la Fisica Matematica (GNFM) of the Istituto Nazionale di Alta Matematica Francesco Severi (INdAM).

\section*{Competing interests}

The author has no competing interests to declare.

\section*{Data availability}

No data were generated or analysed in this study. 

\section*{Declaration on the use of artificial intelligence}
The main result of this paper and the diagram reduction strategy are due to me; I first conjectured the result on the basis of numerical computations with the algorithm of~\cite{Reuvers17}. Claude provided the estimate in Section~\ref{sec:casimir}, obtained the results of Section~\ref{sec:minimizers} and drafted the paper, which I then edited and checked. I take full responsibility for the content.

\end{document}